\theoremstyle{plain}
\newtheorem{theorem}{Theorem}[section]
\newtheorem{proposition}[theorem]{Proposition}
\newtheorem{lemma}[theorem]{Lemma}
\newtheorem{corollary}[theorem]{Corollary}
\theoremstyle{definition}
\newtheorem{definition}[theorem]{Definition}
\theoremstyle{remark}
\title{\Large Robustifying Learning-Augmented Caching Efficiently\\without Compromising $1$-Consistency}
\author{
Peng Chen$^{1}$ \quad Hailiang Zhao$^{1}$\thanks{Corresponding authors: \texttt{hliangzhao@zju.edu.cn}, \texttt{dengsg@zju.edu.cn}} \quad Jiaji Zhang$^{1}$ \quad Xueyan Tang$^{2}$ \\
\textbf{Yixuan Wang}$^3$ \quad \textbf{Shuiguang Deng}$^{1\hspace{0.08em}*}$ \\[0.5em] 
$^1$Zhejiang University \quad $^2$Nanyang Technological University \\
$^3$Nanjing University of Aeronautics and Astronautics \\[0.5em]
\texttt{\{pgchen,hliangzhao,zhangjiaji,dengsg\}@zju.edu.cn} \\ \texttt{asxytang@ntu.edu.sg} \quad \texttt{wangyixuan@nuaa.edu.cn}  
}
\begin{document}

\maketitle

\begin{abstract}
  The online caching problem aims to minimize cache misses when serving a sequence of requests under a limited cache size. While naive learning-augmented caching algorithms achieve ideal $1$-consistency, they lack robustness guarantees. Existing robustification methods either sacrifice $1$-consistency or introduce excessive computational overhead. In this paper, we introduce \textsc{Guard}, a lightweight robustification framework that enhances the robustness of any $1$-consistent learning-augmented caching algorithm to $2H_{k-1} + 2$, while preserving their $1$-consistency. \textsc{Guard} achieves the current best-known trade-off between consistency and robustness, with only $\mathcal{O}(1)$ additional per-request overhead, thereby maintaining the original time complexity of the base algorithm. Extensive experiments across multiple real-world datasets and prediction models validate the effectiveness of \textsc{Guard} in practice.
\end{abstract}

\section{Introduction}\label{s1}
The classical \textit{caching} (or \textit{paging}) problem is a fundamental online optimization problem with widespread applications in operating systems, web caching, and database management. It involves serving a sequence of $n$ page requests 
% called $\sigma$,  %%% Intro 建议只引入必要的 notation
using a cache of limited size $k$ (1 $\leq$ $k \ll \infty$). A request incurs zero cost if the corresponding page is already in the cache (\textit{cache hit}); otherwise, a \textit{cache miss} occurs, requiring the page to be loaded into the cache, possibly evicting another page to make space. The objective is to minimize the total number of cache misses over the entire request sequence.

Caching has been extensively studied in both deterministic and randomized settings. In the offline setting where future requests are known, \citet{belady1966study} proposed the first optimal strategy, \textit{Belady’s rule}, which evicts the page whose next request lies furthest in the future. In the online setting, where future requests are unknown, strong theoretical lower bounds exist. Specifically, no deterministic algorithm can achieve a competitive ratio better than $k$~\cite{sleator1985amortized}, and no randomized algorithm can do better than $H_k$~\cite{fiat1991competitive}, where $H_k = \sum_{i=1}^k \tfrac{1}{i}$ is the $k$-th harmonic number satisfying $\ln(k+1) \leq H_k \leq \ln(k)+1$. Several algorithms have been developed to approach these bounds: \textsc{Marker}~\cite{fiat1991competitive} achieves a competitive ratio of $2H_k - 1$, while \textsc{Equitable} \cite{achlioptas2000competitive} matches the optimal $H_k$, but at the cost of significantly higher time complexity $\mathcal{O}(k^2)$ per request, limiting its practicality.

Recent advances in machine learning have inspired a new paradigm: \textit{learning-augmented algorithms}, which leverage predictive models to guide decision-making in online problems such as caching. These algorithms aim to improve performance when predictions are accurate while remaining robust in the face of prediction errors. Their performance is typically evaluated using three key metrics: \textit{consistency}, \textit{robustness}, and \textit{smoothness}. Consistency measures the competitive ratio under perfect predictions; Robustness bounds the algorithm performance under arbitrary prediction errors, and smoothness captures how the competitive ratio degrades as prediction errors increase.

This paper focuses on learning-augmented algorithm design for the classical caching problem, which lies at the intersection of machine learning and theoretical computer science. While the caching problem itself is fundamental and pervasive in computer systems, our work demonstrates the potential for safely integrating machine learning into cache systems.

\subsection{Related Work}
The first learning-augmented caching algorithm is \textsc{BlindOracle} \citep{pmlr-v80-lykouris18a}, which blindly follows predictions by evicting the page predicted to be requested furthest in the future. While this yields ideal $1$-consistency (a competitive ratio of 1 under perfect predictions), the algorithm lacks robustness: it can perform arbitrarily worse when predictions are inaccurate. This limitation has motivated a line of work on robustification methods, which can be broadly categorized into two families.

\textbf{Embedding-based methods} integrate prediction-driven logic directly into classical caching algorithms like \textsc{Marker} \cite{fiat1991competitive}. For instance, \textsc{PredictiveMarker} \citep{pmlr-v80-lykouris18a} modifies the original eviction policy of \textsc{Marker} based on predictions and achieves $4H_k$-robustness but not $1$-consistency. Subsequently, \textsc{LMarker} \citep{rohatgi2020near} improves smoothness and offers a $(2H_k + 4)$-robustness bound with $4$-consistency. These algorithms derive their robustness from the marking mechanism inherent in \textsc{Marker}, but this mechanism also restricts their ability to attain $1$-consistency.

\textbf{Switching-based methods} switch between learning-augmented and classical algorithms based on past performance comparisons or the detection of prediction errors. For example, \textsc{BlindOracle\&LRU} \citep{wei2020better} switches between \textsc{BlindOracle} and \textsc{LRU}, following the currently better-performing one, and achieves $2k$-robustness and $2$-consistency. More recently, \textsc{F\&R} and \textsc{F\&R (FitF)} \citep{sadek2024algorithms} switch from a non-robust learning-augmented algorithm to a robust one upon detecting prediction errors, achieving $\mathcal{O} (\log k)$-robustness and $1$-consistency. However, both \textsc{F\&R} and \textsc{F\&R (FitF)} have a total computational overhead of $\mathcal{O}(n^2 \log k)$, as they require recomputing the optimal solution over the entire observed request sequence upon each cache miss, with each recomputation costing $\mathcal{O}(n \log k)$. 

We provide further related work in Appendix \ref{appendix_related_work}, along with a comprehensive comparison of existing algorithms in Appendix \ref{appendix_existing_algorithms}. Overall, while prior methods have made progress in improving robustness, they often do so at the cost of compromising $1$-consistency or significantly increasing time complexity.

This leads to the central question of our work: \textit{Can we enhance the robustness of learning-augmented caching algorithms in a time-efficient manner, without compromising $1$-consistency?}

\subsection{Our Contributions}\label{s11}
\textbf{A New Framework.} We answer the above question affirmatively by introducing \textsc{Guard}, a robustification framework applicable to any $1$-consistent algorithm, which we formally define in Sec. \ref{s32}, including \textsc{BlindOracle}. \textit{Given a $1$-consistent algorithm \textsc{A}, \textsc{Guard\&A} maintains the original asymptotic time complexity \footnote{In this paper, we focus on the algorithm’s time complexity, excluding the cost of predictor calls.} of \textsc{A} while achieving the trade-off between consistency and robustness: $(1, 2 H_{k-1} + 2)$, which is state-of-the-art.}

\textsc{Guard} introduces a novel, lightweight phase-based mechanism for detecting prediction errors. In contrast to switching-based methods, it avoids recomputing costly optimal solutions online and monitoring the performance of alternative algorithms.

\textbf{New Algorithms and Empirical Evaluation.} We apply \textsc{Guard} to three representative $1$-consistent learning-augmented algorithms, i.e., \textsc{BlindOracle} \cite{pmlr-v80-lykouris18a}, \textsc{LRB} \cite{song2020learning}, and \textsc{Parrot} \cite{liu2020imitation}, each using a different type of prediction. All resulting variants achieve $1$-consistency and $(2H_{k-1}+2)$-robustness, while maintaining their original asymptotic time complexities. Extensive experiments show that \textsc{Guard}-based algorithms outperform existing methods and closely approach the state-of-the-art across a broad range of benchmarks with both synthetic and real-world predictors. The combination of strong theoretical guarantees and low runtime overhead underscores their value in practical applications.

%-------------------------------------------------------------------------------
\section{Preliminaries}\label{s2}
The caching problem involves a universe of pages $\mathcal{P}$ and a cache of size $k$. A sequence of page requests $\sigma := r_1, r_2, ..., r_n$ must be processed online. Each request $r_i$ is a pair $(t_i, p_i)$, where $t_i$ denotes the time at which $p_i \in \mathcal{P}$ is requested. Upon receiving a request $r_i$, the algorithm must determine whether the corresponding page $p_i$ resides in the cache. If it does, the request results in a \textit{cache hit}; otherwise, a \textit{cache miss} occurs, and the page must be loaded into the cache. If the cache is full, an eviction decision must be made to accommodate the new page. The objective is to minimize the total number of cache misses over the entire sequence $\sigma$.

An \textit{online algorithm} makes eviction decisions without knowledge of future requests, whereas an \textit{offline algorithm} has complete knowledge of the entire request sequence in advance. For any algorithm \textsc{A}, let $\textsc{A}(\sigma)$ denote its cost when serving the sequence $\sigma$, defined as the number of cache misses incurred. In the case of randomized algorithms, we consider the expected cost. The performance of online algorithms is typically evaluated using the \textit{competitive ratio}. An algorithm $\textsc{A}$ has competitive ratio $\alpha$ if for every request sequence $\sigma$, the following holds:
\begin{equation}
    \textsc{A}(\sigma) \leq \alpha \cdot \textsc{OPT}(\sigma) + c
\end{equation}

where $\textsc{OPT}(\sigma)$ denotes the cost of an optimal offline algorithm, and $c$ is a constant independent of the input. In our algorithms, the additive constant $c$ will be zero. We also say that $\textsc{A}$ is $\alpha$-competitive in this case. For brevity, we omit $\sigma$ when it is clear from the context.

For learning-augmented algorithms, the competitive ratio often takes the form  $\min\{\gamma + f(\eta), \delta\}$, which encapsulates key performance metrics such as consistency, robustness, and smoothness. The parameter $\gamma$ reflects the performance when predictions are perfect (i.e., $\eta = 0$), making the algorithm \textit{$\gamma$-consistent}. The bound $\delta$ ensures that the performance never degrades beyond this factor, no matter how inaccurate the predictions are, which defines \textit{$\delta$-robustness}. The function $f(\eta)$ characterizes how performance degrades with increasing total prediction error $\eta$, reflecting the notion of \textit{$\mathcal{O}(f(\eta))$-smoothness} when $\gamma$ is constant.

\section{Algorithm Optimality and $1$-Consistency}\label{s3}

We begin by analyzing the behavioral patterns of a class of optimal algorithms, termed \textit{RB-compliant}. We then identify a class of learning-augmented algorithms, termed RB-following, which encompasses nearly all existing $1$-consistent algorithms and shares a common mechanism for achieving $1$-consistency.

Let \textsc{OPT} denote the optimal offline algorithm that follows Belady’s rule: \textit{evicting the page whose next request occurs furthest in the future.} For each request $r_i = (t_i, p_i)$, let $T_i$ denote the time of the next request for page $p_i$ after $t_i$. Following \citet{antoniadis2023paging} and \citet{song2020learning}, we define the \textit{Belady's binary label} $y_i$ as follows:
\begin{equation}
    y_i = \left\{ 
        \begin{array}{ll}
            1 & \text{if \textsc{OPT} evicts } p_i \text{ before } T_i, \\
            0 & \text{otherwise.}
        \end{array}
    \right.
\end{equation}
We refer to the page $p_i$ as a \textit{1-page} if $y_i = 1$, and a \textit{0-page} if $y_i = 0$. Note that a page may be a 1-page at some times and a 0-page at others if it is loaded into the cache multiple times, depending on \textsc{OPT}’s eviction decisions. By default, we assume that \textsc{OPT} serves the entire request sequence $\sigma$.

\subsection{RB-Compliant Algorithms}\label{s31}
Belady’s binary labeling naturally gives rise to a class of optimal algorithms that, although not necessarily replicating the exact eviction decisions of \textsc{OPT}, still achieve optimal performance.

This is achieved through adherence to a principle we refer to as the relaxed Belady’s rule: \textit{prioritizing the eviction of 1-pages over 0-pages whenever a cache replacement is necessary.}

\begin{definition}
    \label{defi_rb}
    \textit{An algorithm is said to be \textit{RB-compliant} if it adheres to the relaxed Belady’s rule.}
\end{definition}

To analyze the behavior of such algorithms, we define the following notation. Let the sets $\mathbf{1}_i^\textsc{A}$ and $\mathbf{0}_i^\textsc{A}$ denote the sets of cached 1-pages and 0-pages, respectively, after algorithm \textsc{A} processes request $r_i$. Similarly, let $\mathbf{1}_i^*$ and $\mathbf{0}_i^*$ denote the corresponding sets for \textsc{OPT}. The following lemma provides a key structural property of RB-compliant algorithms: while their eviction decisions may differ from those of \textsc{OPT}, they maintain the same set of 0-pages in the cache at all times.

\begin{lemma} \label{lemma_rbc_0_1_sets}
    For an RB-compliant algorithm \textsc{A}, we have $| \mathbf{1}_i^\textsc{A} | = | \mathbf{1}_i^* |$ and $\mathbf{0}_i^\textsc{A} = \mathbf{0}_i^*$ after serving each request $r_i$.
\end{lemma}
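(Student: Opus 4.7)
The plan is to proceed by induction on the request index $i$, with the base case $i=0$ trivial since both caches are initially empty. For the inductive step, I would first establish two structural facts that follow directly from the definition of the Belady labels: (i) if page $p_i$ was in $\textsc{OPT}$'s cache immediately before $r_i$, then $p_i \in \mathbf{0}_{i-1}^*$ — letting $r_j$ be the previous request to $p_i$, the fact that $\textsc{OPT}$ retained $p_i$ from time $t_j$ up to $T_j = t_i$ forces $y_j = 0$ by definition; and (ii) any page evicted by $\textsc{OPT}$ at this step is necessarily a 1-page, i.e., lies in $\mathbf{1}_{i-1}^*$. Together with the inductive hypothesis these also give $|\mathbf{0}_{i-1}^\textsc{A}| + |\mathbf{1}_{i-1}^\textsc{A}| = |\mathbf{0}_{i-1}^*| + |\mathbf{1}_{i-1}^*|$, so the two caches hold the same number of pages at all times.

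I would then split the inductive step into cases according to whether $p_i$ lies in $\textsc{OPT}$'s cache just before $r_i$. If yes, then by (i) $p_i \in \mathbf{0}_{i-1}^* = \mathbf{0}_{i-1}^\textsc{A}$, so both algorithms hit, nothing is evicted, and the label of $p_i$ is updated from $0$ to $y_i$ symmetrically on both sides. Otherwise $p_i \notin \mathbf{0}_{i-1}^\textsc{A}$ as well, so $\textsc{OPT}$ misses and, once the cache is full, evicts some $q \in \mathbf{1}_{i-1}^*$ by (ii) (the warming-up phase before the cache fills is trivial, since neither algorithm evicts). For $\textsc{A}$, two sub-cases arise: if $p_i \in \mathbf{1}_{i-1}^\textsc{A}$, then $\textsc{A}$ hits and simply relabels $p_i$ with $y_i$; otherwise $\textsc{A}$ also misses, and RB-compliance combined with $|\mathbf{1}_{i-1}^\textsc{A}| = |\mathbf{1}_{i-1}^*| \geq 1$ guarantees that $\textsc{A}$ evicts a 1-page as well.

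It then remains to do the bookkeeping. In every sub-case the page leaving either cache (if any) is a 1-page, so the $\mathbf{0}$-sets can only grow; the page $p_i$ enters $\mathbf{0}_i^*$ and $\mathbf{0}_i^\textsc{A}$ exactly when $y_i = 0$; and the net change in $|\mathbf{1}|$ equals $-1$ if $y_i = 0$ and $0$ if $y_i = 1$ in the missing and asymmetric sub-cases, and $+1$ (when $y_i=1$) or $0$ (when $y_i=0$) identically on both sides in the hitting case. Collecting these identical updates yields $\mathbf{0}_i^\textsc{A} = \mathbf{0}_i^*$ and $|\mathbf{1}_i^\textsc{A}| = |\mathbf{1}_i^*|$, closing the induction.

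The main obstacle I anticipate is the asymmetric sub-case where $\textsc{OPT}$ misses on $p_i$ while $\textsc{A}$ hits, i.e., $p_i \in \mathbf{1}_{i-1}^\textsc{A} \setminus \mathbf{1}_{i-1}^*$. Here the two algorithms mutate their 1-sets in structurally different ways ($\textsc{OPT}$ swaps $q$ out and $p_i$ in, while $\textsc{A}$ merely relabels $p_i$), and the set $\mathbf{1}_i^\textsc{A}$ genuinely differs from $\mathbf{1}_i^*$. Verifying that the cardinality identity and the $\mathbf{0}$-set equality nevertheless survive relies crucially on fact (ii) — that the page $\textsc{OPT}$ evicts is always a 1-page, hence contributes $-1$ to $|\mathbf{1}^*|$ and nothing to $\mathbf{0}^*$ — and cleanly tracking these $\pm 1$ contributions in parallel on both sides is the technical crux of the argument.
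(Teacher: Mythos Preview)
Your argument is correct: the induction on $i$ with facts (i) and (ii) and the hit/miss case split---including the asymmetric sub-case you flag, where the matching $\pm 1$ contributions to $|\mathbf{1}|$ on both sides are exactly what closes the step---establishes both invariants cleanly. The paper defers its proof of this lemma to the appendix, which is not included in the provided source, but induction on the request index with precisely this case analysis is the canonical route, and your decomposition is the natural one.
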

\begin{proof}
    The proof is carried out by exhaustive case analysis. See Appendix \ref{appendix_proof_of_lemma_rbc_0_1_sets_0_sets} for details.
\end{proof}

This result implies that an RB-compliant algorithm always caches at least one 1-page when a cache miss occurs; otherwise, \textsc{OPT} would evict a 0-page at that time, contradicting the definition. Therefore, we have the following corollary:

\begin{corollary} \label{corollary_rbc_only_evcit_1}
    RB-compliant algorithms evict only 1-pages.
\end{corollary}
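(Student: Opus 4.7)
The plan is to argue by contradiction, leveraging Lemma~\ref{lemma_rbc_0_1_sets} to show that \textsc{A}'s cache cannot be entirely composed of 0-pages whenever an eviction is required.

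First, I would suppose for contradiction that \textsc{A} evicts a 0-page when serving some request $r_i$. Because \textsc{A} is RB-compliant, it prioritizes evicting 1-pages over 0-pages; hence the only way it could evict a 0-page is if its cache contained no 1-pages immediately before processing $r_i$, i.e., $\mathbf{1}_{i-1}^\textsc{A} = \emptyset$. Next, I would invoke Lemma~\ref{lemma_rbc_0_1_sets} at index $i-1$ to deduce $|\mathbf{1}_{i-1}^*| = |\mathbf{1}_{i-1}^\textsc{A}| = 0$ and $\mathbf{0}_{i-1}^* = \mathbf{0}_{i-1}^\textsc{A}$, so that \textsc{OPT}'s cache coincides exactly with \textsc{A}'s cache and is entirely composed of 0-pages just before $r_i$ is served.

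From here the contradiction would be extracted as follows. Since \textsc{A} performs an eviction on $r_i$, the page $p_i$ is absent from \textsc{A}'s cache, and by the set equality it is also absent from \textsc{OPT}'s cache; thus \textsc{OPT} also incurs a miss at $r_i$ and must evict some cached page. Every such page is a 0-page, so whatever \textsc{OPT} evicts is a 0-page being evicted before its next request --- a direct contradiction with the definition of a 0-page (whose label $y=0$ certifies that \textsc{OPT} does \emph{not} evict it before its next request). Hence no 0-page is ever evicted by \textsc{A}, and together with RB-compliance this means all evictions are of 1-pages.

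The only subtlety worth flagging is the indexing convention: one must be careful that $\mathbf{1}_{i-1}^\textsc{A}, \mathbf{0}_{i-1}^\textsc{A}$ refer to the cache state right before $r_i$ is processed, which is exactly the state to which the RB rule is applied when choosing an eviction for $r_i$. Beyond this bookkeeping the argument is essentially immediate from Lemma~\ref{lemma_rbc_0_1_sets}, so I expect no substantial obstacle.
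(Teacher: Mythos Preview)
Your proof is correct and follows essentially the same argument the paper gives in the sentence preceding the corollary: use Lemma~\ref{lemma_rbc_0_1_sets} to transfer the hypothetical ``no 1-page in \textsc{A}'s cache'' to \textsc{OPT}, and then observe that \textsc{OPT} evicting a 0-page contradicts the definition of the label. The only difference is that you spell out the bookkeeping (identical caches, hence \textsc{OPT} also misses on a full cache) more explicitly than the paper does.
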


This ensures that RB-compliant algorithms perform no worse than \textsc{OPT}, leading to:

\begin{corollary} \label{corollary_rbc_same_cost_as_opt}
    RB-compliant algorithms are optimal, incurring the same cost as \textsc{OPT} at any time.
\end{corollary}

The following Proposition \ref{proposition_fbc_next_request_time} reveals a property concerning the ordering of next request times.

\begin{proposition}\label{proposition_fbc_next_request_time}
    For an RB-compliant algorithm \textsc{A}, after serving each request $r_i$, the next request time of any page in $\mathbf{1}_i^\textsc{A}$ is later than that of any page in $\mathbf{0}_i^\textsc{A}$.
\end{proposition}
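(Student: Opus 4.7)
My plan is to apply Lemma \ref{lemma_rbc_0_1_sets} to reduce the claim, then exploit Belady's rule. Write $\tau(p)$ for the next request time of page $p$ after $t_i$. Since $\mathbf{0}_i^\textsc{A} = \mathbf{0}_i^*$, it suffices to show $\tau(p) > \tau(q)$ for every $p \in \mathbf{1}_i^\textsc{A}$ and every $q \in \mathbf{0}_i^*$.

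I first establish this inside \textsc{OPT}'s cache: for any $p' \in \mathbf{1}_i^*$ and any $q \in \mathbf{0}_i^*$, one has $\tau(p') > \tau(q)$. Since $p'$ is a 1-page still held by \textsc{OPT}, \textsc{OPT} evicts $p'$ at some $t_e \in (t_i, \tau(p'))$. Either $t_e \ge \tau(q)$, giving $\tau(p') > t_e \ge \tau(q)$; or the 0-page $q$ remains in \textsc{OPT}'s cache at $t_e$ (its 0-page label means \textsc{OPT} holds $q$ continuously until $\tau(q)$), in which case Belady's maximality at $t_e$ combined with the absence of requests for $p'$ or $q$ in $(t_i, t_e]$ gives $\tau(p') \ge \tau(q)$, strict by distinctness of request times.

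Now take $p \in \mathbf{1}_i^\textsc{A}$. If $p \in \mathbf{1}_i^*$, the previous paragraph closes the argument. Otherwise $p \in \mathbf{1}_i^\textsc{A} \setminus \mathbf{1}_i^*$: $p$ is in \textsc{A}'s cache but not \textsc{OPT}'s at $t_i$, and since $p$'s current label is 1, \textsc{OPT} must have evicted $p$ at some past time $t_e \in (t_j, t_i]$, where $t_j$ denotes $p$'s last request time. Belady's maximality at $t_e$ asserts that $\tau(p)$ (which equals $p$'s next-request time as of $t_e$, since $p$ is not requested in $(t_e, t_i]$) dominates every other page's next-request time in \textsc{OPT}'s cache at $t_e$. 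When $q$'s last request before $t_i$ occurs at or before $t_e$, $q$ is continuously in \textsc{OPT}'s cache through $t_e$ and not requested in $(t_e, t_i]$, so Belady closes the argument as before.

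The main obstacle is the remaining sub-case in which $q$'s last request before $t_i$ lies in $(t_e, t_i]$: there the Belady bound at $t_e$ only gives $\tau(p) \ge (\text{next request for } q \text{ after } t_e)$, which is at most $t_i < \tau(q)$, insufficient on its own. The plan for this sub-case is an inductive argument on $i$ with the strengthened invariant that there exists an injection $\phi : \mathbf{1}_i^\textsc{A} \setminus \mathbf{1}_i^* \hookrightarrow \mathbf{1}_i^* \setminus \mathbf{1}_i^\textsc{A}$ satisfying $\tau(p) \ge \tau(\phi(p))$. This invariant reflects that whenever \textsc{A}'s RB-compliant eviction deviates from \textsc{OPT}'s Belady-maximal choice, \textsc{A} retains a page with next-request time at least that of the page \textsc{OPT} discards. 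Combining the invariant with the property established for \textsc{OPT} then yields $\tau(p) \ge \tau(\phi(p)) > \tau(q)$.
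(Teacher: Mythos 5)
Your reduction via Lemma \ref{lemma_rbc_0_1_sets} and your treatment of the two easy cases are correct: the ordering inside \textsc{OPT}'s cache follows from Belady's maximality at the time \textsc{OPT} evicts the cached 1-page, and the same argument handles a 1-page $p$ held only by \textsc{A} whenever the 0-page $q$ was already resident in \textsc{OPT}'s cache at the moment $t_e$ when \textsc{OPT} evicted $p$ and is not requested in $(t_e,t_i]$. (One small slip: if $q$ is the very page whose miss at $t_e$ triggers the eviction of $p$, it is not in \textsc{OPT}'s cache at the decision moment, so ``at or before $t_e$'' should read ``strictly before $t_e$''; that edge case belongs with your hard sub-case.)

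The genuine gap is that the hard sub-case is only planned, not proved. All of the real difficulty in this proposition sits in maintaining your strengthened invariant --- the injection (in fact a bijection, since $|\mathbf{1}_i^{\textsc{A}}|=|\mathbf{1}_i^{*}|$) $\phi:\mathbf{1}_i^{\textsc{A}}\setminus\mathbf{1}_i^{*}\to\mathbf{1}_i^{*}\setminus\mathbf{1}_i^{\textsc{A}}$ with $\tau(p)\ge\tau(\phi(p))$ --- and you never verify that it survives a request. That verification is a nontrivial case analysis: the step where both algorithms miss and evict different pages (each evicted page lying either in the common part of the two caches or in the symmetric difference), and the step where \textsc{A} hits a 1-page that \textsc{OPT} no longer holds while \textsc{OPT} misses and evicts. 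In each case the new matching has to be rebuilt by composing the old one with the fact that \textsc{OPT}'s Belady-evicted page has the largest next-request time in all of \textsc{OPT}'s cache, which contains every page of $\mathbf{1}^{*}\setminus\mathbf{1}^{\textsc{A}}$, together with the observation that labels and next-request times of unrequested pages do not change at the current step; you also need Corollary \ref{corollary_rbc_only_evcit_1} to know that \textsc{A}'s evicted page is a 1-page, and the equality of 0-sets to exclude the combination ``\textsc{OPT} hits, \textsc{A} misses''. The invariant you chose is the right strengthening and, once its maintenance is carried out, it does close the argument exactly as you indicate; but as written, the proof of the only genuinely hard case is missing.
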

\begin{proof}
    The proof follows from several observations about the relationship between the pages cached by \textsc{A} and \textsc{OPT}. See Appendix \ref{appendix_proposition_fbc_next_request_time} for details.
\end{proof}

\subsection{RB-Following Algorithms}\label{s32}
Building on the previous theory, we now define a class of learning-augmented algorithms based on the relaxed Belady’s rule.

\begin{definition} \label{definition_rbf}
    \textit{A learning-augmented algorithm is said to be \textit{RB-following} if it prioritizes evicting 1-pages over 0-pages under perfect predictions.}
\end{definition}

By definition, an RB-following algorithm is RB-compliant when predictions are accurate. As a result, all conclusions and performance guarantees derived for RB-compliant algorithms in Sec. \ref{s31} apply directly to RB-following algorithms under perfect predictions. Consequently, we have:

\begin{corollary} \label{corollary_rb_following_1_consistency}
    RB-following algorithms are $1$-consistent.
\end{corollary}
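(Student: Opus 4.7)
The plan is to derive the corollary directly from the definitions and the structural result already established for RB-compliant algorithms. By Definition \ref{definition_rbf}, an RB-following algorithm, \emph{when predictions are perfect}, prioritizes evicting 1-pages over 0-pages on any cache replacement. But this is exactly the relaxed Belady's rule from Definition \ref{defi_rb}. Hence the first step is to observe that, conditioned on the predictions coinciding with the Belady labels $y_i$ induced by \textsc{OPT}, every RB-following algorithm is RB-compliant on the given request sequence $\sigma$.

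Once that identification is made, the second step is an immediate invocation of Proposition \ref{prop_rbc_same_cost_as_opt}: any RB-compliant algorithm \textsc{A} satisfies $\textsc{A}(\sigma) = \textsc{OPT}(\sigma)$. Thus, under perfect predictions, an RB-following algorithm \textsc{A} achieves $\textsc{A}(\sigma) \le 1 \cdot \textsc{OPT}(\sigma)$ for every request sequence $\sigma$, which is precisely the definition of $1$-consistency spelled out in Sec. \ref{s2}.

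There is essentially no hard step here; the content of the corollary is entirely absorbed into the definition of RB-following plus Proposition \ref{prop_rbc_same_cost_as_opt}. The only subtlety worth flagging explicitly in the write-up is what ``perfect predictions'' means operationally: the prediction consumed by the algorithm must be rich enough to recover, for each request $r_i$, whether $p_i$ is a 1-page or a 0-page (or some sufficient statistic such as the next request time $T_i$, from which the labels follow deterministically given the cache state and the relaxed Belady rule). I would include one sentence verifying that the standard prediction models considered in the paper (next-arrival times as in \textsc{BlindOracle}, binary Belady labels as in \textsc{LRB}, and imitation targets as in \textsc{Parrot}) all supply this information under zero prediction error, so that the reduction to RB-compliance is valid in each instantiation.
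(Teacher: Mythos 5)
Your proposal is correct and follows exactly the paper's route: under perfect predictions an RB-following algorithm is RB-compliant by Definition \ref{definition_rbf}, and Proposition \ref{prop_rbc_same_cost_as_opt} then gives cost equal to \textsc{OPT}, hence $1$-consistency. The extra remark on what ``perfect predictions'' must supply for each prediction type is a reasonable clarification but not a departure from the paper's argument.
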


RB-following algorithms can leverage various types of predictions and encompass a wide range of existing $1$-consistent learning-augmented algorithms. Several representative learning-augmented algorithms are RB-following: \textsc{BlindOracle} \cite{pmlr-v80-lykouris18a} evicts the page with the furthest predicted next request time; LRB \cite{song2020learning} randomly selects a predicted 1-page for eviction, prioritizing them over predicted 0-pages; \textsc{Parrot} \cite{liu2020imitation} mimics the eviction decisions of \textsc{OPT} by evicting the page predicted to be requested furthest in the future (FitF). See Sec. \ref{s44} for further discussion of these algorithms.

However, all the aforementioned RB-following algorithm examples follow predictions blindly, leading to a critical drawback: \textit{poor robustness in the face of inaccurate predictions.} Lemma \ref{lemma_rbf_blindly_no_robust} shows that such algorithms can incur unbounded competitive ratios.
% However, naive RB-following algorithms that blindly follow predictions suffer from unbounded robustness (see Algorithm \ref{algo_rbf} in Appendix for pseudo-code). 

\begin{lemma} \label{lemma_rbf_blindly_no_robust}
    RB-following algorithms that blindly follow predictions have unbounded robustness.
\end{lemma}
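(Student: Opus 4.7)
\textbf{Proof plan for Lemma \ref{lemma_rbf_blindly_no_robust}.} The plan is to exhibit an adversarial pair consisting of a request sequence and a set of predictions for which \textsc{OPT} pays only constant cost while a blind RB-following algorithm \textsc{A} pays cost that grows with the length of the sequence, ruling out any finite robustness factor.

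I would instantiate the construction with $k+1$ distinct pages $p_1,\dots,p_{k+1}$, initial cache $\{p_1,\dots,p_k\}$, and the alternating request sequence $\sigma = p_{k+1}, p_1, p_{k+1}, p_1, \dots$ of length $n$. First I analyze \textsc{OPT}: on the first request $p_{k+1}$ Belady's rule evicts a page from $\{p_2,\dots,p_k\}$ whose next occurrence is $\infty$; thereafter $p_1$ and $p_{k+1}$ both reside in the cache, so every subsequent request is a hit and $\textsc{OPT}(\sigma)=1$. Next I specify the adversarial prediction feed. By Definition~\ref{definition_rbf}, when \textsc{A} blindly trusts its predictions it prioritizes evicting pages labeled as 1-pages (equivalently, pages predicted to be requested furthest in the future, or labels the algorithm's specific interface consumes). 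The adversary assigns, at each cache miss, the prediction label that forces \textsc{A} to evict the page that is about to be requested: whichever of $p_1, p_{k+1}$ is in the cache is marked as the unique 1-page (or given the largest predicted next-request time), with all other cached pages marked as 0-pages. Under the blind follow rule, \textsc{A} then evicts that page, producing a miss on the very next request; iterating yields $\textsc{A}(\sigma) = n-1$.

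Combining the two bounds gives $\textsc{A}(\sigma)/\textsc{OPT}(\sigma) = n-1$, which grows without bound as $n\to\infty$, so no constant $\delta$ can witness $\delta$-robustness. I would close by remarking that the construction instantiates cleanly for the three running examples: for \textsc{BlindOracle}, adversarial next-request times place the soon-to-be-requested page furthest in the future; for \textsc{LRB}, the binary labels flip the true Belady labels on the two oscillating pages; for \textsc{Parrot} (FitF), the predicted future-request time is adversarially swapped.

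The main obstacle is phrasing the adversarial prediction step at the right level of abstraction: RB-following is defined only by the tie-breaking behavior under \emph{perfect} predictions, so I must argue that, for each prediction interface a blind RB-following algorithm may consume, the adversary can supply inputs that are internally consistent with a (false) Belady labeling in which the page to be requested next is the unique in-cache 1-page. Once that translation is in place, Corollary~\ref{corollary_rbc_only_evcit_1} applied to the algorithm's perceived labels forces the bad eviction, and the counting argument above finishes the proof.
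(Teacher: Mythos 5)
Your proposal is correct and matches the standard (and the paper's) argument: an alternating two-page request sequence on $k+1$ pages where \textsc{OPT} pays a constant while adversarial predictions (largest predicted next-request time, unique predicted 1-page, or predicted FitF page, depending on the interface) force the blindly-following algorithm to evict exactly the page requested next, yielding a cost ratio of order $n$. The minor counting discrepancy ($n-1$ versus $n$ misses) and whether the cache starts full or empty affect only additive constants and do not impact the conclusion.
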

\begin{proof}
    The proof presents a simple adversarial case. See Appendix~\ref{appendix_lemma_rbf_blindly_no_robust} for details.
\end{proof}

In addition to the simple examples above, more complex cases in the literature, such as \textsc{Mark0} \cite{antoniadis2023paging} and \textsc{F\&R} \cite{sadek2024algorithms}, are also RB-following. Among them, \textsc{Mark0} still exhibits unbounded robustness, whereas \textsc{F\&R} achieves bounded robustness at the cost of excessive computational overhead.

These observations motivate \textit{a robustification framework that preserves the 1-consistency of learning-augmented algorithms while significantly enhancing their robustness.}

\section{\textsc{Guard}: The Robustification Framework}\label{s4}
\subsection{Insights and Overview}\label{s41}
We now introduce \textsc{Guard}, a framework that enhances the robustness of $1$-consistent algorithms, including the RB-following algorithms mentioned above. It mitigates the impact of prediction errors while preserving $1$-consistency under accurate predictions. The design of \textsc{Guard} is grounded in three key observations.

\textbf{Observation 1: Immediate protection hurts $1$-consistency.} Marking-based learning-augmented caching algorithms, such as \textsc{PredictiveMarker} \citep{pmlr-v80-lykouris18a}, \textsc{LMarker} \citep{rohatgi2020near}, and \textsc{Mark\&Predict} \cite{antoniadis2023paging}, use a fixed mechanism to protect pages from eviction immediately after they are requested.  While this method ensures bounded robustness, it may prevent necessary evictions that would otherwise be justified by accurate predictions. This observation motivates us to rethink the timing and conditions under which protection is applied, rather than applying it uniformly after every access. In particular, we aim to guard pages only when there is evidence of a prediction error, avoiding overprotection.

\textbf{Observation 2: Error detection must balance accuracy and efficiency.} Many switching-based methods detect prediction errors at runtime to switch between prediction-driven and robust fallback policies. A common method, as used by \textsc{BlindOracle\&LRU} and \textsc{BlindOracle\&Marker} \cite{wei2020better}, involves comparing the current total cost incurred by the prediction-based algorithm with that of the fallback algorithm. However, this comparison is often insensitive: when the learning-augmented algorithm underperforms relative to the classical one, the accumulated prediction errors can be large, as the classical algorithm (e.g., LRU) may perform significantly worse than \textsc{OPT}. Consequently, this introduces a multiplicative factor of $2$ in the robustness guarantees. Alternatively, some methods, such as \textsc{F\&R} and \textsc{F\&R (FitF)} \cite{sadek2024algorithms}, detect prediction errors by explicitly recomputing the optimal solution over the observed request sequence up to the point of a cache miss, and checking whether the optimal solution would also incur a miss. While this yields highly accurate error detection, its overhead is prohibitive for real-time use. This motivates an error detection mechanism that is lightweight, sensitive to errors that trigger cache misses, and practical for real systems.

\textbf{Observation 3: Optimal algorithms follow an intrinsic eviction pattern.} Any optimal algorithm, including \textsc{OPT}, never retains an unrequested page in the cache during the time interval between the eviction and the next request of another page, revealing the underlying principle of optimal eviction decisions. We defer the proof to Appendix \ref{appendix_rb_compliant_structured_eviction}. Formally, if a request $r_i$ results in a cache miss at time $t_i$, and the requested page $p_i$ was previously evicted at time $\mu < t_i$, then all pages remaining in the cache at time $t_i$ must have been requested at least once between $\mu$ and $t_i$. This behavior implies a kind of \textit{causal chain} between evictions and requests, which motivates a way to identify mispredictions without recomputing \textsc{OPT} directly.

\begin{algorithm}[!h]
\caption{ \textsc{Guard\&A}}
\label{algo_guard}
\begin{algorithmic}[1]
    \STATE $\mathcal{U} \leftarrow \emptyset$ (the set tracks \textit{unrequested old pages} in the cache)
    \FOR{$i = 1, ..., n$}
        \STATE Receive a page request $r_i = (t_i, p_i)$ at time $t_i$
        \IF {\textit{$p_i$ is not in the cache}}
            \IF {\textit{the cache is full}}
                \IF {$\mathcal{U}$ is empty}
                    \STATE \emph{Unguard all cached pages}
                    \STATE $\mathcal{U} \leftarrow \{ \text{all cached pages} \}$ \emph{(a new phase begins)}
                \ENDIF
                \IF {\textit{$p_i$ was evicted in the current phase}}
                    \STATE Evict a page $x$ from $\mathcal{U}$ uniformly at random
                    \STATE \emph{Guard $p_i$}
                \ELSE
                    \STATE $\mathcal{S} \leftarrow \{ \text{all unguarded cached pages} \} $
                    \STATE Follow A's policy to evict a page $x$ from $\mathcal{S}$ based on predictions
                \ENDIF
                \IF {the evicted page $x \in \mathcal{U}$}
                    \STATE $\mathcal{U} \leftarrow \mathcal{U} \backslash \{ x \}$
                \ENDIF
            \ENDIF
            \STATE Load $p_i$ into the cache
        \ENDIF
        \IF {$p_i \in \mathcal{U}$}
            \STATE $\mathcal{U} \leftarrow \mathcal{U} \backslash \{ p_i \}$
        \ENDIF
    \ENDFOR
\end{algorithmic}
\end{algorithm}

Inspired by the above, \textsc{Guard} selectively guards (i.e., protects) a requested page $p_i$ from eviction only when a prediction error is detected, thereby preserving 1-consistency under accurate predictions. This mechanism ensures that pages that should be evicted before $p_i$ are indeed evicted within the same phase, maintaining bounded robustness even under poor predictions.

\textbf{Algorithm Steps.} The procedure of \textsc{Guard\&A} is presented in Algorithm~\ref{algo_guard}, where \textsc{A} is any $1$-consistent learning-augmented algorithm. The execution of \textsc{Guard} is divided into \textit{phases}. At the start of each phase, all cached pages are labeled as old and stored in a set $\mathcal{U}$, which tracks unrequested old pages. A new phase begins when $\mathcal{U}$ becomes empty, indicating that all old pages have either been requested or evicted. The period from the beginning of execution until the first reset of $\mathcal{U}$ (Line 8) is referred to as \textit{the 0-th phase}. When a cache miss occurs, if the requested page $p_i$ was previously evicted in the current phase (Line 10), this signals a prediction error. In response, \textsc{Guard} evicts a random unguarded page from $\mathcal{U}$ and marks $p_i$ as guarded, preventing it from being evicted until the next phase. Otherwise, \textsc{A}'s eviction policy is followed over the set of unguarded pages. This design leverages the intrinsic pattern of the optimal eviction behavior (Observation 3) to detect meaningful mispredictions efficiently without costly recomputation (Observation 2), while avoiding immediate protection (Observation 1) and insensitive performance comparisons (Observation 2).

We simplify \textsc{A} by omitting its specific eviction behavior and other auxiliary logic in Algorithm \ref{algo_guard}. This is because, in essence, \textsc{Guard} operates as a companion process to \textsc{A}, \textit{running concurrently, dynamically restricting the set of eviction candidates (Line 14-15) or overriding \textsc{A}'s eviction policy (Line 11) when necessary.} 

\subsection{Preserving $1$-Consistency}\label{s42}
\begin{proposition} \label{prop_no_guard}
    Under perfect predictions, no page will be guarded by  \textsc{Guard\&A}.
\end{proposition}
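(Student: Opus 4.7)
The plan is to induct on the request index $i$ and show that, on every prefix $r_1,\ldots,r_i$, \textsc{Guard\&A} has guarded no page. The base case is vacuous, since at initialization the guard set is empty and the cache starts empty. For the inductive step, assume no guarding has occurred while processing $r_1,\ldots,r_{i-1}$. Then Lines~11--12 never fired, Line~14 always had $\mathcal{S}$ equal to the whole cache, and Line~15 carried out the unmodified eviction prescribed by \textsc{A}. Hence the cache contents produced by \textsc{Guard\&A} coincide exactly with those of \textsc{A} on this prefix, and since \textsc{A} is RB-following with perfect predictions, it is RB-compliant; so \textsc{Guard\&A} is RB-compliant on the prefix as well. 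This is the hook that lets me apply the structural property of RB-compliant algorithms promised in Observation~3.

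Next I would show guarding cannot be triggered at $r_i$. Line~11 fires only if $r_i$ is a miss with a full cache and the page $p_i$ was evicted earlier in the current phase, say at time $\mu<t_i$. I split on the state of $\mathcal{U}$ just before Line~6. If $\mathcal{U}=\emptyset$, Lines~7--9 open a brand new phase before Line~10 is evaluated, so no eviction has yet occurred in the ``current phase'' and the Line~10 test fails. Otherwise there is some $q\in\mathcal{U}$; the invariant maintained by Lines~8, 17, and 24 guarantees that $q$ has sat in the cache continuously since the start $\tau\le\mu$ of the current phase and has not been requested during $[\tau,t_i)$, and in particular not during $(\mu,t_i)$. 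But Observation~3 applied to the RB-compliant algorithm \textsc{Guard\&A}, to the miss $r_i$, and to the prior eviction time $\mu$, forces every page still in the cache at $t_i$ to have been requested in $(\mu,t_i)$. This contradicts the existence of $q$, completing the case and hence the induction.

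The main obstacle is the bridge between the local phase bookkeeping of $\mathcal{U}$ and the global structural guarantee of Observation~3. The induction has to be stated strongly enough that the hypothesis ``no prior guarding'' truly entails RB-compliance of \textsc{Guard\&A} on the entire prefix, not merely of \textsc{A}, because Observation~3 is invoked at the eviction time $\mu$ which may lie arbitrarily earlier in the execution. Once one verifies that the Line~14/Line~15 fallback path reproduces \textsc{A}'s decisions exactly whenever no guard is active, the structural property transfers cleanly and the contradiction is immediate; the rest is routine case checking on $\mathcal{U}$.
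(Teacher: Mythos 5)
Your argument is correct, but it takes a genuinely different route from the paper. The paper proves Proposition~\ref{prop_no_guard} by a minimal-counterexample exchange argument: it takes the \emph{first} guarded page $p_a$, notes (via Corollary~\ref{corollary_rbc_only_evcit_1} and Proposition~\ref{proposition_fbc_next_request_time}) that both $p_a$ and the surviving unrequested old page $p_b\in\mathcal{U}$ were 1-pages at the eviction time $\mu$, and then constructs a hypothetical \textsc{Guard\&B} that evicts $p_b$ instead of $p_a$ at $\mu$; this algorithm stays RB-compliant, matches \textsc{OPT}'s cost before $t_i$ (Proposition~\ref{prop_rbc_same_cost_as_opt}), and scores a hit at $t_i$ where \textsc{OPT} misses, contradicting \textsc{OPT}'s optimality. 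You instead treat the structural property stated as Observation~3 in Sec.~\ref{s41} (whose proof the paper defers to the appendix) as a black box: you observe that the guarding trigger of Line~10, together with the $\mathcal{U}$-invariant maintained by Lines~8, 17--18 and 23--24, is \emph{exactly} a violation of that property, and run an induction over requests so that ``no guarding so far'' yields RB-compliance of \textsc{Guard\&A} on the prefix. Your version is more modular and makes the design rationale transparent (the guard condition is precisely the Observation-3 invariant); the paper's version is self-contained and does not lean on the deferred appendix lemma --- in effect you relocate the exchange argument into Observation~3 rather than redoing it. One point you should make explicit: Observation~3 is stated for algorithms that are RB-compliant throughout, whereas your induction only gives RB-compliance of \textsc{Guard\&A} on the prefix up to $r_{i-1}$. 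This is harmless --- the hypothesis and conclusion of Observation~3 depend only on the execution up to $t_i$, and since RB-compliance is a per-eviction condition (with labels defined by \textsc{OPT} on the full sequence) you can pass to a fully RB-compliant completion that agrees with \textsc{Guard\&A} on $r_1,\dots,r_{i-1}$ and apply Observation~3 to it --- but a sentence saying so is needed for the application to be airtight.
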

\begin{proof}
    
    Suppose, for contradiction, that a page $p_a$ is the first page ever guarded by \textsc{Guard\&A}, and it is guarded at time $t_i$ upon a request for $p_a$. Before $t_i$, no pages have been guarded, so \textsc{Guard\&A} behaves exactly like \textsc{A} up to this point. Note that \textsc{Guard\&A} and \textsc{OPT} incur the same number of cache misses before $t_i$ because \textsc{A} is $1$-consistent.
    
    % By the algorithm’s logic, $p_a$ must have been evicted earlier in this phase, say, at time $\mu < t_i$. According to Corollary \ref{corollary_rbc_only_evcit_1}, $p_a$ is a 1-page at time $\mu$. At time $t_i$, since $p_a$ was previously evicted, \textsc{Guard\&A} enters the error-handling branch (Lines 10–12), implying there exists an unrequested old page $p_b \in \mathcal{U}$ at time $t_i$, whose next request occurs at time $T_b > t_i$. Then, $p_b$ was also a 1-page in the cache at time $\mu$, according to Proposition \ref{proposition_fbc_next_request_time}.

    Consider the phase during which $p_a$ is guarded. By the algorithm’s logic, $p_a$ must have been evicted earlier in this phase, say, at time $\mu < t_i$. At time $t_i$, since $p_a$ was previously evicted, \textsc{Guard\&A} enters the error-handling branch (Lines 10–12), implying there exists an unrequested old page $p_b \in \mathcal{U}$ at time $t_i$, whose next request occurs at time $T_b > t_i$.

    Now consider an alternative algorithm, \textsc{Guard\&B}, behaves identically to \textsc{Guard\&A} before $t_i$, except that it evicts $p_b$ instead of $p_a$ at time $\mu$, keeping $p_a$ in the cache until $t_i$. At time $t_i$, \textsc{Guard\&B} would experience a cache hit for $p_a$, whereas \textsc{Guard\&A} incurs a cache miss. This implies $\textsc{Guard\&B}$ performs better than $\textsc{OPT}$ after serving requests up to time $t_i$, which contradicts the optimality of \textsc{OPT}.
    
    Therefore, no such $p_a$ can exist, and no page is ever guarded under perfect predictions.
\end{proof}

From Proposition~\ref{prop_no_guard}, under perfect predictions, \textsc{Guard\&A} never guards any page and thus behaves identically to \textsc{A}. Since \textsc{A} is 1-consistent by definition, \textsc{Guard\&A} inherits the following property:

\begin{corollary}
     For any $1$-consistent learning-augmented algorithm \textsc{A}, \textsc{Guard\&A} is 1-consistent.
\end{corollary}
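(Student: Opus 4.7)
The plan is to derive this corollary almost immediately from Proposition \ref{prop_no_guard} together with Corollary \ref{corollary_rb_following_1_consistency}. First I would invoke Proposition \ref{prop_no_guard}, which guarantees that under perfect predictions no page is ever guarded during an execution of \textsc{Guard\&A}. From this I would argue that the test at Line~10 of Algorithm~\ref{algo_guard} never evaluates to true: if it did, Lines~11--12 would guard the requested page, contradicting the proposition. Hence every cache miss routes through the else branch at Lines~13--15, and \textsc{Guard\&A} delegates its eviction choice directly to \textsc{A}.

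Second, since no page is ever guarded, the candidate set $\mathcal{S}$ at Line~13 coincides with the full cache on every miss. So the eviction policy of \textsc{A} is applied to exactly the same set of cached pages it would consider when run in isolation, and the auxiliary bookkeeping on $\mathcal{U}$ (Lines~7--8 and 16--18) has no effect on the eviction choice itself. It follows that \textsc{Guard\&A} and \textsc{A} produce identical cache states, and hence identical costs, under perfect predictions. Applying Corollary \ref{corollary_rb_following_1_consistency} to \textsc{A} then yields 1-consistency for \textsc{Guard\&A}.

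The only subtlety worth explicitly checking is that the phase machinery (the emptiness test at Line~6, the unguarding at Line~7, and the reset of $\mathcal{U}$ at Line~8) cannot indirectly alter \textsc{A}'s decisions: unguarding pages that were never guarded is a no-op, and the contents of $\mathcal{U}$ only feed into the Line~10 test, which we already established is never satisfied. There is no real obstacle here; the heart of the argument has been done in Proposition \ref{prop_no_guard}, and this corollary is essentially a bookkeeping step verifying that the two algorithms agree trace-for-trace when no guarding occurs.
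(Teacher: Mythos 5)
Your argument is correct and matches the paper's reasoning: the paper likewise derives this corollary directly from Proposition \ref{prop_no_guard} (no page is guarded under perfect predictions, so \textsc{Guard\&A} behaves identically to \textsc{A}) combined with the $1$-consistency of RB-following algorithms. Your extra bookkeeping checks (Line~10 never fires, $\mathcal{S}$ equals the full cache, $\mathcal{U}$ only feeds the never-taken branch) simply spell out what the paper leaves implicit.
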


\subsection{Enhancing Robustness}\label{s43}

Our preliminary proof shows the $\mathcal{O}(\log k)$-robustness of \textsc{Guard\&A}. A more refined analysis (see Appendix \ref{appendix_theorem_guard_rob_tight}) tightens this bound to $2H_{k-1} + 2$.

To establish the robustness bound, we introduce several key notations. Let the final phase executed by \textsc{Guard\&A} be the $Q$-th phase and denote by $\mathcal{Q}$ the set $\{ 0, 1, ..., Q \}$. A page request $r_i$ is \textit{distinct} within a phase if the requested page $p_i$ has not been previously requested during that phase. Let $c_q$ denote the number of distinct new pages requested during the $q$-th phase.

Among the requests leading to evictions in the $q$-th phase, let $n_q$ denote the number of requests for new pages and $o_q$ denote the number of requests for old pages. We decompose $n_q$ into $n^{\textit{new}}_q$, representing the number of requests that cause the eviction of new pages, and $n^{\textit{old}}_q$, representing the number of requests that result in the eviction of old pages. Therefore, we have $n_q = n^{\textit{new}}_q + n^{\textit{old}}_q$. This notation allows us to precisely track the types of evictions occurring during each phase.

The following lemma gives a lower bound on the cost of \textsc{OPT} on sequence $\sigma$. Note that, in this paper, we denote \textsc{OPT}($\sigma$) by \textsc{OPT} for brevity.

\begin{lemma}\label{lemma_opt_cost_bound}
    $\sum\limits_{q \in \mathcal{Q}} \frac{1}{2}c_q \leq \textsc{OPT} \leq \sum\limits_{q\in\mathcal{Q}}n^{old}_q$.
\end{lemma}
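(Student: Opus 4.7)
The plan is to prove the two inequalities separately, using different techniques. For the lower bound $\sum_{q\in\mathcal{Q}} c_q/2 \leq \textsc{OPT}$, I would adapt the classical marker-style pairing argument; for the upper bound $\textsc{OPT} \leq \sum_{q\in\mathcal{Q}} n^{old}_q$, I would use a phase-by-phase charging argument that exploits the structural properties of RB-compliant algorithms.

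For the lower bound, the key step is to show that in any two consecutive phases $q$ and $q+1$, \textsc{OPT} incurs at least $c_{q+1}$ cache misses. The reason is that at the end of phase $q$, \textsc{Guard\&A}'s cache contains exactly the $k$ old pages of phase $q+1$, and each of these pages must have been requested at some point during phase $q$ --- otherwise the phase could not have ended, since such a page would still sit in $\mathcal{U}$ (it was neither requested nor evicted). Together with the $c_{q+1}$ distinct new pages requested in phase $q+1$ (which are disjoint from the above $k$ by definition), the union of distinct pages across phases $q$ and $q+1$ contains at least $k + c_{q+1}$ items. Since \textsc{OPT}'s cache holds at most $k$ pages at the start of phase $q$, it must miss on at least $c_{q+1}$ of these accesses. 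I would then pair phases disjointly as $(0,1),(2,3),\ldots$ and separately as $(1,2),(3,4),\ldots$, noting that phase $0$ alone forces at least $c_0$ \textsc{OPT} misses because every distinct page must be loaded from a cold cache, and sum the two pairings to obtain $2\,\textsc{OPT} \geq \sum_{q\in\mathcal{Q}} c_q$, which rearranges to the claimed bound.

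For the upper bound, I would construct a phase-by-phase mapping from \textsc{OPT}'s cache misses into the set of new-page requests that trigger old-page evictions in \textsc{Guard\&A} (counted by $n^{old}_q$). The intuition is that each such eviction pinpoints a moment when an old page that \textsc{Guard\&A} had at the start of the phase is displaced to admit a new page; invoking the structured eviction behavior of any RB-compliant algorithm (Observation~3 in Sec.~\ref{s41} together with Corollary~\ref{corollary_rbc_only_evcit_1}), one can argue that \textsc{OPT} cannot collectively serve the corresponding accesses for free within the same phase. Concretely, I would bound \textsc{OPT}'s misses in phase $q$ by $n^{old}_q$ by exhibiting a witness for each \textsc{OPT} miss --- namely, an $n^{old}_q$ event in that phase to which it can be charged --- then sum across $q\in\mathcal{Q}$.

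The main obstacle will be the upper bound: carefully constructing the injective charging and tracking how \textsc{OPT}'s cache state evolves relative to \textsc{Guard\&A}'s across phase boundaries is delicate, particularly because \textsc{Guard\&A}'s phases are governed by the dynamics of $\mathcal{U}$ rather than by the classical marker mechanism, and because the guarding rule can reshape the set of eviction candidates mid-phase. The lower bound, by contrast, is an essentially textbook adaptation once the two-phase distinct-pages claim is established.
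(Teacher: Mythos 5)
Your lower-bound half is sound and is the standard phase-pairing argument: the crucial structural fact that every old page of phase $q+1$ must have been requested during phase $q$ (a page in the cache at the phase boundary that was never requested in the phase could only be an old page that was never evicted, hence would still sit in $\mathcal{U}$ and the phase could not have ended) gives $k+c_{q+1}$ distinct pages across phases $q$ and $q+1$, and together with the cold-start bound $M_0\ge c_0$ the two pairings yield $2\,\textsc{OPT}\ge\sum_{q\in\mathcal{Q}}c_q$. That part I would accept as written.

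The gap is in the upper bound $\textsc{OPT}\le\sum_{q\in\mathcal{Q}}n^{old}_q$. You do not actually construct the charging, and the per-phase statement your plan rests on---that \textsc{OPT}'s misses in phase $q$ can be injectively charged to $n^{old}_q$-events \emph{of that phase}---is false under the definitions in Section~\ref{s43}. Take $k=3$ and a phase whose old pages are $\{b,c,e\}$, in which two pages $f,g$ that have never appeared anywhere earlier in $\sigma$ are requested (then $b,c$ are re-requested to end the phase): the base algorithm may evict the old page $e$ to admit $f$ and then evict the newly loaded $f$ to admit $g$ (any prediction making $f$ look furthest away does this), so $n^{old}_q=1$ and $n^{new}_q=1$; yet $f$ and $g$ are compulsory misses for \emph{every} algorithm, so \textsc{OPT} incurs at least two misses on the requests of this phase. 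Repeating such phases makes the shortfall grow linearly, so no phase-local witness map can exist; the same issue already appears with the $k$ compulsory misses of phase $0$, which have no eviction events to be charged to. Consequently the argument has to be global/amortized rather than phase-by-phase---for instance, exhibiting an explicit offline eviction strategy whose total miss count is bounded via the $n^{old}$-events (so that \textsc{OPT}, being optimal, costs no more), or a charging scheme permitted to move misses across phase boundaries---and it must say explicitly how compulsory misses and the phase-$0$ fill are absorbed. As written, your upper-bound sketch restates the goal and appeals to Observation~3 and Corollary~\ref{corollary_rbc_only_evcit_1} without supplying the mechanism that would make the charging injective, so I cannot credit that half.
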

\begin{proof}
    The proof builds on a phase-based analysis, following the approach of \citet{fiat1991competitive}. See Appendix~\ref{appendix_lemma_opt_cost_bound} for details.
\end{proof}

\begin{lemma}\label{lemma_guard_new_page_requests}
    For each $q \in \mathcal{Q}$, $n_q \leq 2 c_q$ and $n_q^\textit{old} \leq c_q$.
\end{lemma}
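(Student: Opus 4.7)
My plan is to reason phase-by-phase over $q \in \mathcal{Q}$. The $0$-th phase is an immediate edge case: the cache is only being filled and no eviction occurs, so $n_0 = 0$ and both bounds hold trivially. For any phase $q \geq 1$ the cache stays full throughout, so every cache miss induces exactly one eviction, and I can move freely between the ``request view'' and the ``eviction view''. I will classify the misses in phase $q$ by the type of triggering request --- first request of a new page, re-request of a previously-evicted new page, or re-request of a previously-evicted old page --- and separately classify the evictions by which branch of Algorithm~\ref{algo_guard} issues them (Line~11 over $\mathcal{U}$, or Line~14 via \textsc{A}'s policy over the unguarded cached pages).

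For the first inequality I would argue that every request contributing to $n_q$ is either a first request of a new page (at most $c_q$ such events by definition of $c_q$) or a re-request of a new page previously evicted in this phase. The crucial observation is that when the algorithm reaches Line~11 on such a re-request, it immediately guards the reloaded page (Line~12); guards are cleared only at the start of the next phase (Line~7), so each new page can trigger the re-request branch at most once within phase $q$. Summing the two contributions yields $n_q \leq c_q + c_q = 2c_q$.

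For the second inequality I would pivot to the eviction view. The key structural fact to establish is that no new page is ever evicted by Line~11, because $\mathcal{U}$ is initialized (Line~8) with the pages cached at the start of the phase --- all old by definition --- and $\mathcal{U}$ only shrinks afterwards via Lines~17 and~23. Consequently every eviction of a new page must come from Line~14, and Line~14 is invoked only on first requests of new pages, i.e., exactly $c_q$ times in phase $q$. Let $s$ denote the number of distinct new pages that are ever evicted during phase $q$; then Line~14 evicts a new page exactly $s$ times and an old page exactly $c_q - s$ times. The very same $s$ pages are precisely those that subsequently trigger the Line~11 branch for a new-page re-request (each at most once, by the previous paragraph), contributing $s$ further evictions of old pages from $\mathcal{U}$. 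Summing, $n_q^{\textit{old}} = (c_q - s) + s = c_q$, which is in fact an equality rather than just an inequality.

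The main obstacle lies in cleanly establishing the two invariants on which the final bookkeeping rests: (i) membership in $\mathcal{U}$ is always restricted to pages cached at the start of phase $q$, which requires tracking how Lines~8, 17 and~23 manipulate $\mathcal{U}$ and observing that $\mathcal{U}$ is never augmented mid-phase; and (ii) once a new page has been reloaded through Line~11 it remains guarded for the remainder of the phase and therefore cannot generate a second eviction event. Once these two invariants are in hand, the two occurrences of $s$ in the decomposition of $n_q^{\textit{old}}$ refer to exactly the same set of pages, and the identity $n_q^{\textit{old}} = c_q$ follows by a direct sum without any slack.
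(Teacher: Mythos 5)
Your proof rests on exactly the two structural facts that drive the paper's own argument: a new page is guarded upon its second load in a phase (hence loaded at most twice, giving $n_q \le 2c_q$), and $\mathcal{U}$ never contains anything but pages cached at the start of the phase, so the random-eviction branch only ever removes old pages and every eviction of a new page comes from \textsc{A}'s branch, which is invoked once per distinct new page. The paper compresses the second half into the inequality $n_q - c_q \le n_q^{\textit{new}}$, whereas you count the $c_q$ invocations of \textsc{A}'s branch and split them by the type of evicted page; these are the same counting argument in different bookkeeping.

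The one genuine flaw is your concluding claim that the $s$ new pages evicted during phase $q$ are \emph{precisely} the ones that later trigger the error-handling branch, so that $n_q^{\textit{old}} = c_q$ holds with equality. That identification fails: an evicted new page need not be requested again within the same phase at all (it may never be requested again, its next request may arrive only after $\mathcal{U}$ has emptied — in which case it is handled as a first request of the \emph{next} phase — and the final phase $Q$ can simply end first). What your accounting actually gives is $n_q^{\textit{old}} = (c_q - s) + s'$ where $s' \le s$ is the number of evicted new pages re-requested within the phase, i.e.\ only $n_q^{\textit{old}} \le c_q$. Since that inequality is all the lemma asserts, your proof of the stated bounds is sound once ``precisely'' is weakened to ``at most''; only the asserted equality should be dropped.
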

\begin{proof}
    Each distinct new page can be loaded into the cache at most twice per phase, as it becomes guarded upon its second request. Hence, $n_q \leq 2c_q$.

    If a new page is loaded into the cache twice within a phase, it must be evicted by a request for another new page before its second request. This is because, if a request for an old page results in a cache miss, \textsc{Guard\&A} only evicts an old page from $\mathcal{U}$. Thus, we have $n_q - c_q \leq n_q^\textit{new}$, which implies $n_q^\textit{old} = n_q - n_q^\textit{new} \leq c_q$.
\end{proof}

\begin{theorem} \label{theorem_guard_rob}
     \textsc{Guard\&A} is $\mathcal{O}(\log k)$-robust.
\end{theorem}
\begin{proof}
   
    We bound the number of evictions. In phase $q$, let $n_q$ be the number of cache misses due to new pages, and $o_q$ those due to old pages. From Lemma \ref{lemma_guard_new_page_requests}, $n_q \leq 2 c_q$. To upper bound $o_q$, we make the following assumptions, each of which can only increase the number of cache misses.
    \begin{enumerate}
        \item The number of distinct old-page requests is $k$ (the maximum possible).
        \item All $n_q^\textit{old}$ evictions happen before requests for old pages.
        \item Evicted old pages have earlier next request times than remaining ones.
    \end{enumerate}
    On each cache miss for an old-page request, an unrequested page from $\mathcal{U}$ is evicted and the requested page is guarded. Let $p_j$ denote the probability of a cache miss for the $j$-th subsequent distinct old page request. We have
    \begin{equation}
        p_j = \left\{ \begin{array}{ll}
            1, & \text{if } 1 \leq j \leq n^{\textit{old}}_q, \\
        \frac{n^{\textit{old}}_q}{k - (j-1)}, & \text{if } j > n^{\textit{old}}_q.
        \end{array}
        \right.
    \end{equation}

    The expected value of $o_q$ is bounded as follows:
    \begin{align}
        \mathbb{E}[o_q] &\leq n^{\textit{old}}_q + \sum_{j=n^{old}_q + 1}^{k} \min \big \{ p_j, 1 \big\} \nonumber \\
        &= 2n^{\textit{old}}_q + \sum\limits_{j=n^{old}_q+1}^{k-n^{\textit{old}}_q}\frac{n^{\textit{old}}_q}{k-(j-1)} \nonumber \\
        &= \Big(2 + H_{k-n^{\textit{old}}_q} - H_{n^{\textit{old}}_q}\Big)n^{\textit{old}}_q \nonumber \\
        &\leq \big(H_k + 1\big) c_q. \quad \text{(by Lemma \ref{lemma_guard_new_page_requests})}
        \label{eq0122_1}
    \end{align}
    The inequalities hold regardless of whether $q < Q$ or $q = Q$. Let $\textsc{GA}_q$ denote the number of cache misses incurred by \textsc{Guard\&A} during the $q$-th phase. By \eqref{eq0122_1}, \textsc{Guard\&A}'s total expected cost $\mathbb{E}[\textsc{GA}]$ is bounded by:
    \begin{align}
        \label{cost_guard_a}
        \mathbb{E}[\textsc{GA}] &= c_0 + \sum\limits_{q \in \mathcal{Q}} \mathbb{E}[\textsc{GA}_q] = c_0 + \sum\limits_{q \in \mathcal{Q}} \big(n_q + \mathbb{E}[o_q]\big) \nonumber \\
         &\leq (H_k + 3) \sum\limits_{q \in \mathcal{Q}}c_q \leq (2H_k + 6) \textsc{OPT}. \quad \text{(by Lemmas \ref{lemma_opt_cost_bound} and \ref{lemma_guard_new_page_requests})}
    \end{align}
    This completes the proof.
\end{proof}

The above assumptions simplify the proof and immediately yield logarithmic robustness, but inevitably loosen the result. Theorem \ref{theorem_guard_rob_tight} presents a tight bound.

\begin{theorem} \label{theorem_guard_rob_tight}
     \textsc{Guard\&A} is $(2H_{k-1} + 2)$-robust, which is tight.
\end{theorem}
\begin{proof}
    The proof analyzes eviction chains that account for each eviction, without relying on any assumptions. See Appendix \ref{appendix_theorem_guard_rob_tight} for details.
\end{proof}

\subsection{Applications}\label{s44}

We apply \textsc{Guard} to three RB-following algorithms: \textsc{BlindOracle} \cite{pmlr-v80-lykouris18a}, \textsc{LRB} \cite{song2020learning}, and \textsc{Parrot} \cite{liu2020imitation}. These algorithms are selected because they are representative and rely on predictors that are practical in implementation. However, as they blindly follow predictions, they are $1$-consistent but not robust. After being robustified via \textsc{Guard}, each algorithm gains $(2H_{k-1} + 2)$-robustness while retaining $1$-consistency. Implementing \textsc{Guard} incurs an $\mathcal{O}(1)$ overhead per request when using hash tables, so the time complexity remains asymptotically the same as that of the base algorithm.

\textsc{BlindOracle} \cite{pmlr-v80-lykouris18a} evicts the page with the furthest predicted next request time (NRT), as detailed in Algorithm \ref{algo_blindoracle} in Appendix \ref{appendix_algo_rb_blindly}. \textsc{LRB} \cite{song2020learning} is an algorithm that has been applied to content distribution network caching. See Algorithm \ref{algo_lrb}. It predicts Belady's binary labels and prioritizes eviction of predicted 1-pages over predicted 0-pages, labeled by the \textsc{OPT} that starts from the current cache content of \textsc{LRB} and serves subsequent requests in $\sigma$. Note that \textsc{LRB} remains RB-following, which ensures its $1$-consistency, as proven in Appendix \ref{appendix_lrb_fb_following}. \textsc{Parrot} \cite{liu2020imitation} learns to imitate the optimal policy using a neural network model. See Algorithm \ref{algo_parrot} for details. It directly evicts the predicted FitF (furthest-in-the-future) page, where FitF follows the definition in \citet{sadek2024algorithms}. \textsc{Guard} is also applicable to more sophisticated RB-following algorithms that do not blindly follow predictions, which we leave as future work.

\begin{table*}[h!]
\vspace*{-2mm}
\caption{Robustified RB-following algorithms. \textsc{B.O.} stands for \textsc{BlindOracle}. NRT refers to the next request time, while FitF stands for furthest-in-the-future.}
\vspace*{0mm}
\begin{center}
\begin{small}
\begin{tabular}{llllll}
\toprule
\textbf{Algorithm}    & \textbf{Prediction} & \textbf{Consistency} & \textbf{Robustness} & \textbf{Smoothness} & \textbf{Time Complexity} \\
\midrule
\textsc{Guard\&B.O.} & NRT & $1$ & $2H_{k-1} + 2$ & $\mathcal{O}(\log (\eta_t/\textsc{OPT}))$ & $\mathcal{O}(n \log k)$ \\
\textsc{Guard\&LRB} & Binary & $1$ & $2H_{k-1} + 2$ & $\mathcal{O}(H_k\cdot \eta_b/\textsc{OPT})$ & $\mathcal{O}(n)$ \\
\textsc{Guard\&Parrot} & FitF Page & $1$ & $2H_{k-1} + 2$ & $\mathcal{O}(H_k\cdot \eta_f/\textsc{OPT})$ & $\mathcal{O}(n)$ \\
\bottomrule
\end{tabular}
\end{small}
\end{center}
\vspace*{-3mm}
\label{table:robustified_algorithms}
\end{table*}

Refer to Table \ref{table:robustified_algorithms} for a summary, and to Appendix \ref{appendix_guard_applications} for their pseudo-codes, smoothness proofs, and implementation details. $\eta_t$, $\eta_b$, and $\eta_f$ are error measures for different types of predictions, representing the total $\ell_1$ error of predicted next request times, the number of incorrect predictions of Belady’s binary labels, and the number of incorrect predictions of FitF pages, respectively. A comprehensive comparison of existing algorithms is provided in Table \ref{table_compare_algo} in Appendix \ref{appendix_existing_algorithms}.

\subsection{Achieving Better Smoothness with \textsc{ExGuard}} \label{s45}
Inspired by \textsc{F\&R} \citep{sadek2024algorithms} and \textsc{AdaptiveQuery-b} \citep{im2022parsimonious}, we explore how to trade off smoothness against predictor usage. The challenge lies in maintaining logarithmic robustness while adjusting the use of predictions. We propose an extension of \textsc{Guard}, called \textsc{ExGuard}, which allows for improved smoothness at the cost of increased predictor usage. \textsc{ExGuard} constrains subsequent random evictions triggered by a misprediction-induced eviction, thereby avoiding excessive conservativeness and improving smoothness. Meanwhile, \textsc{ExGuard} maintains $1$-consistency and $\mathcal{O}(\log k)$-robustness.

The learning-augmented algorithms in this paper invoke the predictor upon eviction, following common implementations in prior caching systems~\cite{song2020learning, hu2022raven, song2023halp}.  \textsc{Guard\&B.O.} and \textsc{Guard\&Parrot} invoke the predictor $\mathcal{O}(\textsc{OPT})$ times, whereas \textsc{ExGuard\&B.O.} and \textsc{ExGuard\&Parrot} use the predictor $\mathcal{O}(d\cdot \textsc{OPT})$ times, where $d \in [1, H_k]$. \textsc{ExGuard\&BlindOracle} achieves $\mathcal{O}\big(\min(\log(\eta_t/\textsc{OPT}), \lambda\sqrt{\eta_t/\textsc{OPT}})\big)$-smoothness, where $\lambda = h/e^h \leq 1/e$. Here, $h=H_k/(2d)$, and thus $h \in [1/2, H_k/2]$. \textsc{ExGuard\&Parrot} achieves $\mathcal{O} (H_k/d \cdot \eta_f/\textsc{OPT})$-smoothness. Algorithms using \textsc{ExGuard} are also included in Table \ref{table_compare_algo} in the Appendix for comparison. In Appendix~\ref{appendix_exguard_a}, we present a comprehensive introduction to \textsc{ExGuard}, including its empirical results and formal proofs of the favorable trade-offs it achieves.

\section{Experiments}\label{s5}
We now present a comprehensive evaluation of learning-augmented caching algorithms to assess their performance under both synthetic and real-world predictions.

\subsection{Experimental Setup}\label{s51}
Building on the experimental frameworks of \citet{liu2020imitation} and \citet{pmlr-v139-chledowski21a}, we construct an expanded benchmark that includes more algorithm variants, datasets, and prediction types. We evaluate algorithms using three types of predictions: \textit{Next request times (NRT)}, \textit{Belady’s binary labels}, and \textit{FitF page predictions}. We also include \textsc{F\&R} \cite{sadek2024algorithms}, whose action predictions can be derived from predicted next request times as demonstrated in \citet{sadek2024algorithms}. For switching-based algorithms, we follow \citet{pmlr-v139-chledowski21a}, setting a deterministic switching bound of $1$ and a randomized weight $\beta = 0.99$. Some algorithms are omitted due to either performance equivalence or excessive computational overhead.

\textbf{Datasets.} We use BrightKite \cite{brightkite2011} and Citi \cite{citibike}, with cache sizes set to 10 and 100, respectively, following \citet{pmlr-v80-lykouris18a}. We further use SPEC CPU2006 memory traces \cite{crc2017}  to evaluate real-world performance. Following \citet{pmlr-v139-chledowski21a} and \citet{liu2020imitation}, we adopt the 16-way 2MB cache configuration for consistency.

\textbf{Predictions.} We consider both synthetic and real predictors. For synthetic predictions, NRT and binary label predictions are processed differently. To simulate NRT predictions, we add log-normal noise to true request times. Pages without future requests are assigned a value of $n+1$, where $n$ is the number of requests. On the other hand, binary label predictions are generated by flipping true Belady labels with a given probability to simulate noisy prediction scenarios. For real-world predictors, we consider the following: (i) PLECO \cite{anderson2014dynamics}: A probability-based predictor that estimates the access likelihood $p$ of a page and predicts its next request after $1/p$ steps; (ii) POPU \cite{antoniadis2023online}: A frequency-based predictor that assumes a page requested in fraction $p$ of past accesses will reappear after $1/p$ steps; (iii) LRB Predictor: Uses LightGBM \cite{ke2017lightgbm} to predict next request times, then classifies pages beyond the Belady boundary as predicted 1-pages and others as predicted 0-pages, consistent with \citet{song2020learning}.

\subsection{Experimental Results}\label{s52}

All cost ratios are reported relative to \textsc{OPT}. Figures~\ref{fig:bk_oracle_logdis} and~\ref{fig:bk_oracle_bin} show algorithm performance under varying levels of synthetic next request time (NRT) and binary prediction errors on the BrightKite dataset. Below, we abbreviate \textsc{BlindOracle (B.O.)}, \textsc{LMarker (LM.)}, \textsc{LNonMarker (LNonM.)}, \textsc{PredictiveMarker (P.M.)}, and \textsc{Mark\&Predict (M.\&P.)}. Due to differing interpretations of binary labels, only synthetic results for \textsc{M.\&P.} are included. Both \textsc{Guard\&B.O.} and \textsc{Guard\&LRB} exhibit empirical $1$-consistency and bounded robustness, aligning with theoretical guarantees. 

\begin{figure*}[hbt!]\centering
\begin{minipage}[b]{0.47\textwidth}
\centering
% \oraclelogreuselegend{brightkite}{1.291}{1.315}{1.35}
\oraclelogreuselegend{brightkite}{1.291}{1.315}{1.5}{\mydir}
\vspace*{-6mm}
\caption{Performance with synthetic predictions of NRT on BrightKite.}
\label{fig:bk_oracle_logdis}
\end{minipage}
%\hfill
\hspace{3.5mm}
\begin{minipage}[b]{0.47\textwidth}
\centering
\oraclebinlegend{brightkite}{1.291}{1.315}{2}
\vspace*{-6mm}
\caption{Performance with synthetic predictions of binary labels on BrightKite.}
\label{fig:bk_oracle_bin}
\end{minipage}
% \caption{Different Synthetic predictors on BrightKite dataset.}
% \label{fig:bk_oracle}
\end{figure*}

\begin{table*}[h!]
\vspace*{-4mm}
\caption{Average cost ratios on SPEC CPU2006 using PLECO and POPU predictors.}
\vspace*{0mm}
\begin{center}
\begin{small}
\begin{tabular}{ccccccccccc}
\toprule
\textbf{Predictor} & \textsc{LRU} & \textsc{B.O.} & \textsc{LM.} & \textsc{LNonM.} & \textsc{P.M.} & \textsc{B.O.\&M.$^{D}$} & \textsc{F\&R} & \textsc{Guard\&B.O.} \\
\midrule
PLECO & 1.478 & 1.404 & 1.335 & 1.346 & 1.335 & 1.294 & 1.360 & \textbf{1.226} \\
POPU & 1.478  & 1.261 & 1.320 & 1.312 & 1.312 & 1.233 & 1.319 & \textbf{1.203} \\
\bottomrule
\end{tabular}
\end{small}
\end{center}
\label{table:real_pleco_popu_spec_0}
\end{table*}

\begin{table*}[h!]
\vspace*{-6mm}
\caption{Average cost ratios on SPEC CPU2006 Benchmark using LRB predictor.}
\vspace*{0mm}
\begin{center}
\begin{small}
\begin{tabular}{cccccc}
\toprule
\textbf{Predictor} & \textsc{LRU} & \textsc{Marker} & \textsc{LRB} & \textsc{MARK0} & \textsc{Guard\&LRB} \\
\midrule
LRB predictor & 1.478 & 1.394 & 1.281 & 1.268 & \textbf{1.171} \\
\bottomrule
\end{tabular}
\end{small}
\end{center}
\label{table:real_gbm_spec_0}
\vspace{-4pt}
\end{table*}

We further evaluate the algorithms on the SPEC CPU2006 benchmark using real-world predictors. Average cost ratios across all 13 datasets are shown in Tables \ref{table:real_pleco_popu_spec_0} and \ref{table:real_gbm_spec_0}. \textsc{B.O.\&M.$^{R}$} is omitted due to inferior performance compared to \textsc{B.O.\&M.$^{D}$}. Results show that both \textsc{Guard\&B.O.} and \textsc{Guard\&LRB} achieve the lowest average cost ratios when using their respective predictors.

Additional results, including per-dataset performance and those using the ``FitF page'' predictor, are presented in Appendix \ref{appendix_exp}. We implement a new benchmark, Cache-Coliseum, for comprehensive comparison of learning-augmented algorithms (including ours), which is publicly available at \url{https://github.com/OptiSys-ZJU/cache-coliseum}.

\section{Conclusion}
In this paper, we introduced \textsc{Guard}, a framework designed to robustify learning-augmented caching algorithms that follow the relaxed Belady’s rule. \textsc{Guard} preserves $1$-consistency, improves robustness to $2H_{k-1} + 2$, and incurs minimal additional computation, making it highly practical for real-world applications. Experiments across multiple datasets and predictors show that \textsc{Guard\&B.O.}, \textsc{Guard\&LRB}, and \textsc{Guard\&Parrot} achieve the best or near-best performance. These results validate the effectiveness of \textsc{Guard} in enhancing both theoretical guarantees and empirical performance.

Our results highlight the potential of lightweight robustification techniques for integrating machine learning to enhance existing caching systems. Future research includes: (1) whether robustness and smoothness can be further improved toward their respective theoretical lower bounds, as discussed in Appendix~\ref{appendix_lower_bounds}, while still preserving 1-consistency and the asymptotic time complexity of the base algorithm; and (2) exploring robust learning-augmented algorithms under other caching models that arise in modern systems, as discussed in Appendix~\ref{appendix_new_caching_model}, where, for example, the conventional constraint that “the requested item is always stored” is relaxed.

\begin{ack}
This work was supported in part by the National Key Research and Development Program of China under Grant 2022YFB4500100, the National Science Foundation of China (62125206, 62502441), the Major Program of the National Natural Science Foundation of Zhejiang (LD25F020002), and the Singapore Ministry of Education under Academic Research Fund Tier 2 Award MOE-T2EP20122-0007 and Tier 1 Award RG23/23. Hailiang Zhao's work was supported in part by the Zhejiang University Education Foundation Qizhen Scholar Foundation.
\end{ack}

\bibliographystyle{unsrtnat}
\bibliography{main}

\newpage
\appendix

\textbf{\Large Appendix}

\section{Further Related Work}\label{appendix_related_work}

\textbf{Learning-Augmented Algorithms.} The concept of learning-augmented algorithms was pioneered by \citet{pmlr-v80-lykouris18a}, who introduced the idea of integrating machine-learned predictions into online decision-making frameworks. Since then, this paradigm has gained traction across a wide range of algorithmic domains, including index structures \cite{kraska2018case}, ski rental problems \cite{purohit2018improving, antoniadis2021learning}, the Bahncard problem \cite{drygala2023online, zhao2024learning}, online knapsack \cite{im2021online, zeynali2021data}, online TSP \cite{bernardini2022universal, bampis2023learning}, 
$k$-server problems \cite{lindermayr2021double}, metrical task systems (MTS) \cite{antoniadis2023online, christianson2023optimal}, secretary problems \cite{dutting2021secretaries, fujii2024secretary}, graph-related tasks \cite{antoniadis2024online, depavia2024learning}, and data structures \cite{pmlr-v162-lin22f, erlebach2022learning, berg2023online}. These techniques have also found applications in broader system-level contexts such as networking \cite{song2020learning, torabi2024learning} and caching \cite{jain2016back, shi2019applying}.

\textbf{Learning-Augmented Caching.} Recently, there has been growing interest in applying learning-augmented methods to the caching problem. The first robust learning-augmented caching algorithm, \textsc{PredictiveMarker}, was proposed by \citet{pmlr-v80-lykouris18a}. It achieves a competitive ratio of $\min\{ 2 + 2 \sqrt{4\eta_{t} / \textsc{OPT}+1}, 4 H_k \}$, where \textsc{OPT} is the optimal offline cost and $\eta_t$ represents the $l_1$-error of the prediction on next request times. This bound was later improved by \citet{rohatgi2020near}, who introduced \textsc{LMarker}, achieving a smoothness guarantee of $\mathcal{O}(\log( \eta_t / \textsc{OPT}))$. The same work also proposed \textsc{LNonMarker}, which offers superior smoothness but sacrifices bounded robustness. \citet{wei2020better} further improved the competitive ratio by proposing a simple approach to combine   \textsc{BlindOracle} with LRU, and proved its competitive ratio to be $2 \min \{ 1 + 2 \eta_t / \textsc{OPT}, 4 + 4/(k-1) \cdot \eta_t / \textsc{OPT}, k \}$. \citet{wei2020better} also combined \textsc{BlindOracle} with the randomized algorithm \textsc{Equitable}, and proved that its competitive ratio is $(1 + \gamma) \min \{ 1 + 2 \eta_t / \textsc{OPT}, 4 + 4/(k-1) \cdot \eta_t / \textsc{OPT}, H_k \}$, plus an additive constant $\mathcal{O}(k / \gamma)$, where $\gamma \in (0, 1/4)$ is a trade-off parameter.

% \citet{pmlr-v139-chledowski21a} presented a comprehensive comparison of learning-augmented caching algorithms by evaluating them using real-world datasets and predictors.

\citet{antoniadis2023online} studied the succinctness of the prediction. They show that it suffices to receive predictions of size $\mathcal{O}(\log k)$ per request, indicating which page should be evicted. This was further simplified by \citet{antoniadis2023paging}, who demonstrated that binary predictions indicating whether a cached page is expected to be evicted can still yield strong performance. This practical insight inspired our exploration of caching with binary predictions and led to the development of the \textsc{Guard} framework based on this minimalistic form of side information.

\citet{im2022parsimonious} and \citet{sadek2024algorithms} approached the problem differently, focusing on limiting the number of times predictions are used. Their work highlights the trade-off between the frequency of prediction queries and overall algorithmic performance, which is a consideration of particular relevance in settings where acquiring predictions is costly or resource-intensive.

\section{Comparative Analysis of Existing Algorithms}\label{appendix_existing_algorithms}

Table~\ref{table_compare_algo} provides a comprehensive comparison of recent learning-augmented caching algorithms, focusing on their performance guarantees and prediction requirements. Below, we analyze key algorithmic paradigms and design principles that distinguish these approaches.

\begin{table*}[h!]
    \caption{\textbf{Comparison of Learning-Augmented Caching Algorithms.} \textbf{Prediction type} categorizes the algorithms listed in the first column, including NRT (Next Request Time), Binary information (e.g., Belady's binary label or whether a page will be accessed in the current phase), FitF (Furthest-in-the-Future page to be requested), and Action (e.g., cache content of an optimal offline algorithm). \textbf{Abbreviations} include \textsc{PredictiveMarker} (P.M.), \textsc{LNonMarker} (\textsc{LNonM.}), \textsc{LMarker} (\textsc{LM.}), \textsc{BlindOracle} (B.O.), \textsc{Equitable} (EQ.), \textsc{BlindOracle\&Marker} (\textsc{B.O.\&M.}), \textsc{Mark\&Predict} (\textsc{M.\&P.}), and \textsc{Parrot} (\textsc{Pa.}). The superscripts \textsc{$D$} and \textsc{$R$} denote deterministic switching and randomized switching, respectively. For \textsc{B.O.\&M.$^R$} and \textsc{B.O.\&EQ.$^R$}, $\gamma \in (0, \frac{1}{4})$ is a tunable trade-off parameter. \textbf{Time.} refers to the time complexity of the algorithm, excluding the cost of making predictor calls. $n$ is the total number of page requests. The $\log k$ in the time complexity of algorithms using NRT predictions comes from selecting the page with the furthest next request time. \textbf{Cons.} denotes consistency (i.e., the competitive ratio when predictions are perfect). \textbf{Coefficients and constants} presented in robustness are mostly obtained directly from the conclusions of the respective papers, while those for \textsc{F\&R} and \textsc{F\&R (FitF)} are derived from inequalities and may not be tight. \textbf{Error measures} $\eta_t$, $\eta_b$, $\eta_a$, and $\eta_f$ use different units across different prediction types, so their magnitudes are not directly comparable. For \textsc{ExGuard}, $d \in [1, H_k]$. For \textsc{ExGuard\&B.O.}, $\lambda = h/e^h \leq 1/e$, where $h = H_k / (2d)$, and thus $h \in [1/2, H_k / 2]$. For \textsc{F\&R (FitF)}, parameter $b \in \{1, ..., \log k\}$.}
    \label{table_compare_algo}
    \begin{center}
    \begin{small}
    \begin{tabular}{lllll}
    \toprule
    % \textbf{Algorithms} & \textbf{Cons.} & \textbf{Robustness}  & \textbf{Smoothness} & \textbf{Comb.} & \textbf{Time.} \\
    \textbf{NRT} & \textbf{Time.} & \textbf{Cons.} & \textbf{Robustness} & \textbf{Smoothness} \\
    \midrule
    \textsc{P.M.} \cite{pmlr-v80-lykouris18a}  & $\mathcal{O}(n\log k)$ & $2$ & $4H_k$     & $\mathcal{O}(\sqrt{\eta_t/\textsc{OPT}})$  \\
    \textsc{LM.} \cite{rohatgi2020near}        & $\mathcal{O}(n\log k)$  & $4$ & $2H_k + 4$ & $\mathcal{O}(\log(\eta_t/\textsc{OPT}))$ \\
    \textsc{LNonM.} \cite{rohatgi2020near}     & $\mathcal{O}(n\log k)$ & $4$ & $\infty$   & $\mathcal{O}(\log(k)/k \cdot \eta_t/\textsc{OPT})$ \\
    \textsc{B.O.} \cite{pmlr-v80-lykouris18a}  & $\mathcal{O}(n\log k)$ & $1$ & $\infty$   & $\mathcal{O}(1/k \cdot \eta_t/\textsc{OPT})$ \\
    \textsc{B.O.\&LRU} \cite{wei2020better}    & $\mathcal{O}(n\log k)$ & $2$ & $2k$       & $\mathcal{O}(1/k \cdot \eta_t/\textsc{OPT})$    \\
    \textsc{B.O.\&M.$^D$} \cite{wei2020better}  &  $\mathcal{O}(n\log k)$ & $2$ & $4H_k-2$   & $\mathcal{O}(1/k \cdot \eta_t/\textsc{OPT})$ \\
    \textsc{B.O.\&M.$^R$} \cite{wei2020better}  & $\mathcal{O}(n\log k)$ & $1 +\gamma$ & $(1+\gamma) (2H_k-1)+ \mathcal{O}(\frac{k}{\gamma})$   & $\mathcal{O}(1/k \cdot \eta_t/\textsc{OPT})$  \\
    \textsc{B.O.\&EQ.$^R$} \cite{wei2020better}     & $\mathcal{O}(nk^2)$ & $1+\gamma$ & $(1+\gamma) H_k+\mathcal{O}(\frac{k}{\gamma})$ & $\mathcal{O}(1/k \cdot \eta_t/\textsc{OPT})$  \\
    \textsc{Guard\&B.O.}                       & $\mathcal{O}(n\log k)$ & $1$ & $2H_{k-1} + 2$ & $\mathcal{O}(\log(\eta_t/\textsc{OPT}))$ \\
    \textsc{ExGuard\&B.O.}                     & $\mathcal{O}(n\log k)$ & $1$ & $2H_{k-1} + \mathcal{O}(d)$ & $\mathcal{O}(\min(\log(\frac{\eta_t}{\textsc{OPT}}), \lambda\sqrt{\frac{\eta_t}{\textsc{OPT}}}))$ \\
    \midrule
    \textbf{Binary} & \textbf{Time.} & \textbf{Cons.} & \textbf{Robustness} & \textbf{Smoothness} \\
    \midrule
    \textsc{M.\&P.} \cite{antoniadis2023paging}    & $\mathcal{O}(n)$ & $2$ & $4H_k + \mathcal{O}(1)$ & $\mathcal{O}(H_k \cdot \eta_b/\textsc{OPT})$ \\
    \textsc{Mark0} \cite{antoniadis2023paging}     & $\mathcal{O}(n)$ & $1$ & $\infty$   & $\mathcal{O}(H_k \cdot \eta_b/\textsc{OPT})$  \\
    \textsc{Guard\&LRB}        & $\mathcal{O}(n)$ & $1$ & $2H_{k-1} + 2$ & $\mathcal{O}(H_k \cdot \eta_b/\textsc{OPT})$ \\
    \textsc{ExGuard\&LRB}       &  $\mathcal{O}(n)$ & $1$ & $2H_{k-1} + \mathcal{O}(d)$ & $\mathcal{O}(H_k/d\cdot \eta_b/\textsc{OPT})$  \\
    %\midrule
    %\textbf{Cache Predictions} \\
    %\textsc{F\&R}  \cite{sadek2024algorithms}     & $\mathcal{O}(n^2)$ & $1$ & $c\cdot\log k + \mathcal{O}(1), c\geq 3 $ & $\mathcal{O}(\log (1 + \eta_c/\textsc{OPT}))$ \\
    \midrule
    \textbf{Action}  & \textbf{Time.} & \textbf{Cons.} & \textbf{Robustness} & \textbf{Smoothness} \\
    \midrule
    \textsc{F\&R}  \cite{sadek2024algorithms}   & $\mathcal{O}(n^2\log k)$ & $1$ & $3(1 + \log k ) + \mathcal{O}(1)$ & $\mathcal{O}(\log (\eta_a/\textsc{OPT}))$  \\
    \midrule
    \textbf{FitF} &  \textbf{Time.} & \textbf{Cons.} & \textbf{Robustness} & \textbf{Smoothness}  \\
    \midrule
    \textsc{F\&R (FitF)}  \cite{sadek2024algorithms}   & $\mathcal{O}(n^2\log k)$ & $1$ & $9(1+1/b)\log k + \mathcal{O}(b)$ & $\mathcal{O}(\log(k)/b \cdot \eta_f/\textsc{OPT})$  \\
    \textsc{Guard\&Pa.}   & $\mathcal{O}(n)$ &  $1$ & $2H_{k-1} + 2$ & $\mathcal{O}(H_k \cdot \eta_f/\textsc{OPT})$ \\
    \textsc{ExGuard\&Pa.}   & $\mathcal{O}(n)$ & $1$ & $2H_{k-1} + \mathcal{O}(d)$ & $\mathcal{O}(H_k/d \cdot \eta_f/\textsc{OPT})$ \\
    \bottomrule
    \end{tabular}
    \end{small}
    \end{center}
    \vspace{-5pt}
\end{table*}

\textbf{Prediction Types.} The literature explores various forms of predictions, each offering different levels of information and utility.
\begin{enumerate}
    \item \textit{Next Request Time Prediction.} The most commonly used prediction type is the next request time of a page, which is utilized by many existing algorithms, including \textsc{BlindOracle} \citep{pmlr-v80-lykouris18a}, \textsc{PredictiveMarker} \citep{pmlr-v80-lykouris18a}, \textsc{LMarker} \citep{rohatgi2020near}, and \textsc{LNonMarker} \citep{rohatgi2020near}.
    \item \textit{Binary Prediction.} Binary predictions convey specific binary information and have been explored both in systems \citep{jain2016back, shi2019applying} and theoretical studies \citep{antoniadis2023paging}. For example, \textsc{Mark0} predicts Belady’s binary labels to guide eviction decisions. \textsc{Mark\&Predict} predicts whether a page will be requested during a given phase, rather than estimating the exact request time.
    \item \textit{FitF (Furthest-in-the-Future) Page Prediction.} Given the cache content, the predictor directly identifies the cached page that will be requested furthest in the future. It is used by \textsc{F\&R} (FitF) \citep{sadek2024algorithms}, and \textsc{Parrot} \cite{song2020learning}.
    \item \textit{Action Prediction.} Introduced by \citet{antoniadis2023online}, action predictions have been primarily used in metrical task systems (MTS). In caching, they indicate the cache contents of an optimal algorithm. \textsc{F\&R} \citep{sadek2024algorithms} is a recent example that leverages this type of prediction. 
\end{enumerate}

\textbf{Design Principles.} Below we compare the underlying designs of existing algorithms in detail.

\begin{itemize}
    \item Embedding-based methods integrate additional algorithmic logic directly into existing caching algorithms to refine eviction strategies. For instance, \textsc{PredictiveMarker} \citep{pmlr-v80-lykouris18a} and \textsc{LMarker} \citep{rohatgi2020near} refine eviction policies by leveraging eviction chains, i.e., a sequence of evictions where each eviction is triggered by a cache miss caused by a prior eviction. \textsc{Mark\&Predict} \citep{antoniadis2023paging} augments the \textsc{Marker} algorithm by incorporating binary predictions to prioritize certain pages for eviction. The \textsc{Marker} framework marks a requested page and prohibits its eviction until the end of a phase. While embedding into \textsc{Marker} enables the aforementioned algorithms to achieve logarithmic robustness, this approach inherently limits the best achievable consistency to 2. This limitation arises because the \textsc{Marker} framework prevents evicting some marked pages, even when \textsc{OPT} would prioritize their eviction.

    Instead, some algorithms adopt high-level principles from the \textsc{Marker} framework without strictly adhering to its structure. For example, \textsc{LNonMarker} \citep{rohatgi2020near} and \textsc{Mark0} \citep{antoniadis2023paging} modify the definition of phases. They both achieve $1$-consistency but have unbounded robustness.
    % Additionally, some algorithms, such as \textsc{LNonMarker} \citep{rohatgi2020near} and \textsc{Mark0} \citep{antoniadis2023paging}, are not strictly embedded within the classical \textsc{Marker} framework but leverage its high-level design principles and phase definitions. For example, \textsc{Mark0} introduces a new phase definition. These two algorithms achieve $1$-consistency but lack bounded robustness. Their work has greatly inspired us to move beyond the limitations of the \textsc{Marker} framework and develop an alternative framework to ensure robustness without compromising $1$-consistency.

    \item Switching-based methods dynamically alternate between algorithms based on historical performance or detected prediction errors. This approach was first introduced in the seminal work of \citet{fiat1991competitive} and later refined by \citet{blum1997line}. The algorithms proposed by \citet{wei2020better} switch between a $1$-consistent learning-augmented algorithm and a classical robust algorithm. Their deterministic variant achieves a consistency of 2, while the randomized version attains $(1 + \gamma)$-consistency for $\gamma \in (0, 1/4)$. Although $(1 + \gamma)$ appears promising, the competitive ratio of the randomized version includes a non-negligible constant term $\mathcal{O}(k/\gamma)$, which degrades practical performance. \citet{sadek2024algorithms} further refine switching-based methods by adopting a more fine-grained approach. Unlike \citet{wei2020better}, \textsc{F\&R} \citep{sadek2024algorithms} switches to a robust algorithm only when prediction errors are detected on the fly, thereby ensuring $1$-consistency. However, the detection mechanism is overly sensitive, resulting in high algorithmic complexity of $\mathcal{O}(n^2\log k)$. The key challenge of switching-based algorithms lies in determining the switching timing.
\end{itemize}

\section{Proof of Lemma \ref{lemma_rbc_0_1_sets}}\label{appendix_proof_of_lemma_rbc_0_1_sets_0_sets}
\textbf{Lemma \ref{lemma_rbc_0_1_sets}.}
\textit{For any RB-compliant algorithm \textsc{A}, we have $| \mathbf{1}_i^\textsc{A} | = | \mathbf{1}_i^* |$ and $\mathbf{0}_i^\textsc{A} = \mathbf{0}_i^*$ after serving each request $r_i$.}
\begin{proof}
%%%%%% 25.02.08: Checked
The proof proceeds by induction. Suppose the cache becomes full for the first time after serving $m$ page requests. For any $i = 1, ..., m$, the lemma holds trivially, as no eviction occurs for either $\textsc{A}$ or OPT. Now assume that at time $t_i$ ($i \geq m$), the conditions $| \mathbf{1}_i^\textsc{A} | = | \mathbf{1}_i^* |$ and $\mathbf{0}_i^\textsc{A} = \mathbf{0}_i^*$ hold. Consider the following two cases when $r_{i+1}$ arrives at time $t_{i+1}$:
%%%%%% The two cases below show that cache hit only happens for 0-pages in OPT
\begin{enumerate}[label=\textbf{Case (\roman*)}, wide=0pt, leftmargin=*]
    \item $p_{i+1} \in \mathbf{0}_i^*$. By hypothesis $\mathbf{0}_i^\textsc{A} = \mathbf{0}_i^*$, we have $p_{i+1} \in \mathbf{0}_i^\textsc{A}$. Since $\mathbf{0}_i^*$ and $\mathbf{0}_i^\textsc{A}$ are the sets of cached 0-pages for OPT and $\textsc{A}$, respectively, $r_{i+1}$ results in a cache hit for both $\textsc{A}$ and OPT.
    \begin{enumerate}
        \item If $y_{i+1} = 1$, $p_{i+1}$ is a 1-page at this moment. Then $\mathbf{1}_{i+1}^* = \mathbf{1}_i^* \cup \{ p_{i+1} \}$ and $\mathbf{1}_{i+1}^\textsc{A} = \mathbf{1}_i^\textsc{A} \cup \{ p_{i+1} \}$. Since $|\mathbf{1}_i^*| = |\mathbf{1}_i^\textsc{A}|$, it follows that $|\mathbf{1}_{i+1}^*| = |\mathbf{1}_{i+1}^\textsc{A}|$. On the other hand, $\mathbf{0}_{i+1}^* = \mathbf{0}_i^* \setminus \{ p_{i+1} \} = \mathbf{0}_i^\textsc{A} \setminus \{ p_{i+1} \} = \mathbf{0}_{i+1}^\textsc{A}$. 
        \item If $y_{i+1} = 0$, $p_{i+1}$ is a 0-page at this moment. Then $\mathbf{1}_{i+1}^* = \mathbf{1}_i^*$ and $\mathbf{1}_i^\textsc{A} = \mathbf{1}_{i+1}^\textsc{A}$. Thus, $|\mathbf{1}_{i+1}^*| = |\mathbf{1}_i^*| = |\mathbf{1}_i^\textsc{A}| = |\mathbf{1}_{i+1}^\textsc{A}|$. Similarly, $\mathbf{0}_{i+1}^* = \mathbf{0}_i^* = \mathbf{0}_i^\textsc{A} = \mathbf{0}_{i+1}^\textsc{A}$.
    \end{enumerate}
    \item $p_{i+1} \notin \mathbf{0}_i^*$. Note that $p_{i+1} \notin \mathbf{1}_i^*$ must hold because OPT evicts 1-pages at some time before they are requested again. Since $p_{i+1}$ is neither in $\mathbf{0}_i^*$ nor in $\mathbf{1}_i^*$, $r_{i+1}$ results in a cache miss for OPT. By Belady's rule, OPT evicts the page with the furthest next request time, denoted as $p_x$. This page $p_x$ must be a 1-page by definition, implying $|\mathbf{1}_i^*| > 0$.
    
    Then we consider \textsc{A}. By hypothesis $\mathbf{0}_i^\textsc{A} = \mathbf{0}_i^*$, we have $p_{i+1} \notin \mathbf{0}_i^\textsc{A}$. By hypothesis $| \mathbf{1}_i^\textsc{A} | = | \mathbf{1}_i^* |$, we have $|\mathbf{1}_i^\textsc{A}| > 0$. Therefore, if $r_{i+1}$ results in a cache miss for \textsc{A}, it evicts a 1-page from $| \mathbf{1}_i^\textsc{A} |$, denoted as $p_y$; otherwise, $r_{i+1}$ results in a cache hit for \textsc{A}, indicating $p_{i+1} \in \mathbf{1}_i^\textsc{A}$.
    \begin{enumerate}
        \item If $y_{i+1} = 1$, $p_{i+1}$ is a 1-page at this moment. Then $\mathbf{1}_{i+1}^* = \mathbf{1}_i^* \setminus \{ p_x \} \cup \{ p_{i+1} \}$. For \textsc{A}, $|\mathbf{1}_{i+1}^\textsc{A}| = |\mathbf{1}_i^\textsc{A}|$ regardless of whether $r_{i+1}$ results in a cache hit or not for \textsc{A}. Thus, $|\mathbf{1}_{i+1}^*| = |\mathbf{1}_{i+1}^\textsc{A}|$. Moreover, $\mathbf{0}_{i+1}^* = \mathbf{0}_i^* = \mathbf{0}_i^\textsc{A} = \mathbf{0}_{i+1}^\textsc{A}$.
        \item If $y_{i+1} = 0$, $p_{i+1}$ is a 0-page at this moment. Then $\mathbf{1}_{i+1}^* = \mathbf{1}_i^* \setminus \{ p_x \}$ and $\mathbf{0}_{i+1}^* = \mathbf{0}_i^* \cup \{ p_{i+1} \}$. For \textsc{A}, if $r_{i+1}$ results in a cache miss, $\mathbf{1}_{i+1}^\textsc{A} = \mathbf{1}_i^\textsc{A} \setminus \{ p_y \}$ and $\mathbf{0}_{i+1}^\textsc{A} = \mathbf{0}_i^\textsc{A} \cup \{ p_{i+1} \}$; if $r_{i+1}$ results in a cache hit, $\mathbf{1}_{i+1}^\textsc{A} = \mathbf{1}_i^\textsc{A} \setminus \{ p_{i+1} \}$ and $\mathbf{0}_{i+1}^\textsc{A} = \mathbf{0}_i^\textsc{A} \cup \{ p_{i+1} \}$. Thus, $| \mathbf{1}_{i+1}^\textsc{A} | = | \mathbf{1}_{i+1}^* |$ and $\mathbf{0}_{i+1}^\textsc{A} = \mathbf{0}_{i+1}^*$ always hold.
    \end{enumerate}
\end{enumerate}
In both cases, $| \mathbf{1}_{i+1}^\textsc{A} | = | \mathbf{1}_{i+1}^* |$ and $\mathbf{0}_{i+1}^\textsc{A} = \mathbf{0}_{i+1}^*$, completing the induction. Additionally, we observe that \textsc{A} experiences no more cache misses than \textsc{OPT} from the above analysis.
\end{proof}

\section{Proof of Proposition \ref{proposition_fbc_next_request_time}}\label{appendix_proposition_fbc_next_request_time}

We first establish three lemmas to facilitate the proof of Proposition \ref{proposition_fbc_next_request_time}.

\begin{lemma}\label{appendix_lemma_opt_next_request_time}
    For OPT, after serving each request $r_i$, the next request time of any page in $\mathbf{1}_i^*$ is later than that of any page in $\mathbf{0}_i^*$.
\end{lemma}
\begin{proof}
    Suppose that there exists a pair of 1-page $a_1$ and 0-page $a_0$ in the  \textsc{OPT}'s cache such that the next arrival time of $a_1$ is smaller than that of $a_0$. That is to say, $h_1 < h_0$. Assume $a_1$ is evicted at time $\mu_1$, then $\mu_1 < h_1$ since a 1-page is evicted before it is requested again. $a_0$ is also cached at time $\mu_1$ since $\mu_1 < h_1 < h_0$, and it will be kept cached before $h_0$ by the definition of 0-page. This implies at the time $\mu_1$,  \textsc{OPT} evicts a page whose next arrival time is not the largest, which contradicts the definition.
\end{proof}

\begin{lemma}\label{appendix_lemma_0_sets}
    For any algorithm \textsc{A}, after serving each request $r_i$, $\mathbf{0}^\textsc{A}_i \subseteq \mathbf{0}^*_i$ holds.
\end{lemma}
\begin{proof}
    %%%%%% 25.02.08: 此处的难点应该是：说明A cache 的0-page，也一定会被 OPT cache住。在时刻 t_j，若 OPT miss，则 p_j 被放入OPT cache，并且由于 p_j 是 0-page，OPT在T_j之前不会evict它。所以在时刻 \mu 有 p_j \in 0_i^*；若 OPT hit，则 p_j 原本就在OPT cache中，并且由于 p_j 是 0-page，OPT在T_j之前不会evict它。综上，在时刻 \mu，p_j 一定在 OPT 的cache中。
    Consider any 0-page $p_j \in \mathbf{0}^\textsc{A}_i$ at some time $\mu \in (t_i, t_{i+1})$. Its last request occurs at time $t_j \leq t_i < \mu$, and its next request occurs at time $T_j > \mu$. Immediately after its last request at $t_j$, $p_j$ must be in \textsc{OPT}'s cache, regardless of whether \textsc{OPT} experienced a cache hit or not at $t_j$. Since $p_j$ is a 0-page, \textsc{OPT} will not evict it before $T_j$. Thus, at time $\mu$, we have $p_j \in \mathbf{0}^*_i$, which proves the lemma.
\end{proof}

%%%%%% 25.02.08: BTW，如果 propositions 3.7 和 3.8 只针对 RB-compliant algorithm，则移除 lemma D.1

For OPT and an RB-compliant algorithm \textsc{A}, let $\mathbf{H}_i^*$ and $\mathbf{H}_i^\textsc{A}$ denote the \textit{ordered sets} of the next request times of 1-pages in $\textbf{1}^*_i$ and $\textbf{1}_i^\textsc{A}$  after serving request $r_i$, respectively, where the next request times are arranged in ascending order. We use $\mathbf{H}_i^*(l)$ and $\mathbf{H}_i^\textsc{A}(l)$ to represent the $l$-th element in these ordered sets.

\begin{lemma} \label{lemma_fbp_ordered_1_set}
    For an RB-compliant algorithm \textsc{A}, after serving each request $r_i$, $|\mathbf{H}^*_i| = |\mathbf{H}^\textsc{A}_i|$ and the following holds:
    \begin{align}
        \forall l = 1, ..., |\mathbf{H}^*_i|:
        %|\mathbf{H}_i^*|: 
        \quad \mathbf{H}_i^*(l) \leq \mathbf{H}_i^\textsc{A}(l).
        \label{eq:relationship_of_next_request_time}
    \end{align}
\end{lemma}
\begin{proof}
    We prove this proposition by induction. Suppose the cache becomes full for the first time after serving $m$ page requests. The above inequalities hold trivially for any $i = 1, ..., m$, as no eviction occurs for either $\textsc{A}$ or OPT. Now assume that after serving request $r_i$ ($i \geq m$), $|\mathbf{H}^*_i| \leq |\mathbf{H}^\textsc{A}_i|$ and \eqref{eq:relationship_of_next_request_time} hold. We will show that they also hold after serving request $r_{i+1}$. There are three cases to consider when $r_{i+1}$ arrives:
    \begin{enumerate}[label=\textbf{Case (\roman*)}, wide=0pt, leftmargin=*]
        \item $p_{i+1} \in \mathbf{0}^\textsc{A}_i$. By Lemma \ref{appendix_lemma_0_sets}, $p_{i+1} \in \mathbf{0}^*_i$ also holds. Thus, $r_{i+1}$ results in a cache hit for both \textsc{A} and \textsc{OPT}. Then, $p_{i+1}$ remains in $\mathbf{0}^\textsc{A}_{i+1}$ and $\mathbf{0}^*_{i+1}$ if $y_{i+1} = 0$. Otherwise, $p_{i+1}$ is moved to $\mathbf{1}^\textsc{A}_{i+1}$ and $\mathbf{1}^*_{i+1}$ for \textsc{A} and \textsc{OPT}, respectively. In both scenarios, $|\mathbf{H}^*_{i+1}| \leq |\mathbf{H}^\textsc{A}_{i+1}|$ and element-wise relationship \eqref{eq:relationship_of_next_request_time} is maintained after serving $r_{i+1}$.
        \item $p_{i+1} \in \mathbf{0}^*_i$ but $p_{i+1} \notin \mathbf{0}^\textsc{A}_i$. Since $p_{i+1}$ is a 0-page at this moment, $p_{i+1} \notin \mathbf{1}^\textsc{A}_i$ also holds. By Lemma \ref{appendix_lemma_0_sets}, we know that $|\mathbf{0}^\textsc{A}_i| < |\mathbf{0}^*_i|$. Since $|\mathbf{0}_i^\textsc{A}| + |\mathbf{1}_i^\textsc{A}| = k = |\mathbf{0}_i^*| + |\mathbf{1}_i^*|$, we have $|\mathbf{1}^*_i| < |\mathbf{1}^\textsc{A}_i|$ and $|\mathbf{H}^*_i| < |\mathbf{H}^\textsc{A}_i|$. In this case, \textsc{OPT} experiences a cache hit while \textsc{A} suffers a cache miss. 

        Conceptually, cache evolvement for serving request $r_{i+1}$ includes two steps: (i) since \textsc{A} is RB-compliant, it must evict a page from $\mathbf{1}^\textsc{A}_i$; (ii) $p_{i+1}$'s next request time is either inserted to both $\mathbf{H}^*_{i+1}$ and $\mathbf{H}^\textsc{A}_{i+1}$ (if $y_{i+1} = 1$) or none (if $y_{i+1} = 0$). After these two steps, $|\mathbf{H}^*_{i+1}| \leq |\mathbf{H}^\textsc{A}_{i+1}|$ because $|\mathbf{H}^*_i| < |\mathbf{H}^\textsc{A}_i|$ before step (i), and step (ii) does not change the relation between $|\mathbf{H}^*_{i+1}|$ and $|\mathbf{H}^\textsc{A}_{i+1}|$. Moreover, the element-wise relationship \eqref{eq:relationship_of_next_request_time} continues to hold for $i+1$. This is because step (i) removes an element from $\mathbf{H}^\textsc{A}_{i}$, which can never decrease  $\mathbf{H}_i^\textsc{A}(l)$ at any index $l$; and for step (ii), inserting the same element to both $\mathbf{H}^*_{i+1}$ and $\mathbf{H}^\textsc{A}_{i+1}$ does not change the element-wise relation.
        
        \item $p_{i+1} \notin \mathbf{0}^*_i$. By Lemma \ref{appendix_lemma_0_sets}, we have $p_{i+1} \notin \mathbf{0}^\textsc{A}_i$. Note that $p_{i+1} \notin \mathbf{1}_i^*$ must hold because OPT evicts 1-pages at some time before they are requested again. Thus, \textsc{OPT} suffers a cache miss and evicts the 1-page with the furthest next request time from $\mathbf{1}^*_i$. Meanwhile, \textsc{A} can experience either a cache miss or a cache hit. 
        \begin{itemize}
            \item If \textsc{A} experiences a cache hit, $p_{i+1} \in \mathbf{1}^\textsc{A}_i$ as $p_{i+1} \notin \mathbf{0}^\textsc{A}_i$. We can conceptually consider cache evolvement as two steps: (i) \textsc{OPT} evicts the 1-page with the furthest next request time from $\mathbf{1}^*_i$, and \textsc{A} evicts $p_{i+1}$ from $\mathbf{1}^\textsc{A}_i$; (ii) $p_{i+1}$ is either inserted to both $\mathbf{1}^*_{i+1}$ and $\mathbf{1}^\textsc{A}_{i+1}$ (if $y_{i+1} = 1$) or none (if $y_{i+1} = 0$). After these two steps, $|\mathbf{H}^*_{i+1}| \leq |\mathbf{H}^\textsc{A}_{i+1}|$ apparently holds because $|\mathbf{H}^*_i| \leq |\mathbf{H}^\textsc{A}_i|$ before step (i). Moreover, step (i) does not affect the element-wise relationship \eqref{eq:relationship_of_next_request_time} because \textsc{OPT} evicts the page with the furthest next request time from $\mathbf{1}^*_i$, i.e., removing the last element from $\mathbf{H}^*_i$. For step (ii), inserting the same element to both $\mathbf{H}^*_{i+1}$ and $\mathbf{H}^\textsc{A}_{i+1}$ does not change the element-wise relation either. Therefore, \eqref{eq:relationship_of_next_request_time} continues to hold for $i+1$.
            \item If \textsc{A} suffers a cache miss, it just changes ``\textsc{A} evicts $p_{i+1}$ from $\mathbf{1}^\textsc{A}_i$'' in step (i) above to be ``\textsc{A} evicts a page from $\mathbf{1}^\textsc{A}_i$'' since \textsc{A} is RB-compliant. All the above arguments continue to apply, so both $|\mathbf{H}^*_{i+1}| \leq |\mathbf{H}^\textsc{A}_{i+1}|$ and the element-wise relationship \eqref{eq:relationship_of_next_request_time} for $i+1$ hold.
        \end{itemize}
        %%%%%% 25.02.08: 前提是 A evict 必须是 1-page
    \end{enumerate}
    By induction, $|\mathbf{H}^*_i| \leq |\mathbf{H}^\textsc{A}_i|$ and \eqref{eq:relationship_of_next_request_time} hold for all $i = 1, ..., n$, where $n$ denotes the total number of requests. This completes the proof.
\end{proof}

\textbf{Proposition \ref{proposition_fbc_next_request_time}.} \textit{For an RB-compliant algorithm \textsc{A}, after serving each request $r_i$, the next request time of any page in $\mathbf{1}_i^\textsc{A}$ is later than that of any page in $\mathbf{0}_i^\textsc{A}$.}
\begin{proof}
    Suppose $p_x$ and $p_y$ are the 1-pages with the earliest next request times in $\mathbf{1}^\textsc{A}_i$ and $\mathbf{1}^*_i$, respectively. By Proposition \ref{lemma_fbp_ordered_1_set}, the next request time of $p_x$ is not earlier than that of $p_y$. By Proposition \ref{appendix_lemma_opt_next_request_time}, the next request time of $p_y$ is later than any 0-page in $\mathbf{0}^*_i$. Thus, the next request time of $p_x$ is also later than any 0-page in $\mathbf{0}^*_i$. Additionally, we have $\mathbf{0}^\textsc{A}_i \subseteq \mathbf{0}^*_i$ by Lemma \ref{appendix_lemma_0_sets}. Therefore, the next request time of $p_x$ is later than any 0-page in $\mathbf{0}^\textsc{A}_i$, completing the proof.
\end{proof}

\section{Proof of Lemma \ref{lemma_rbf_blindly_no_robust}} \label{appendix_lemma_rbf_blindly_no_robust}
\textbf{Lemma \ref{lemma_rbf_blindly_no_robust}.} \textit{RB-following algorithms that blindly follow predictions have unbounded robustness.}
\begin{proof}
    Consider a cache of size 2, where at time $t$, pages $p_a$ and $p_b$ are cached. From time $t$ onward, an infinite sequence of alternating requests for $p_c$ and $p_b$ arrives. \textsc{OPT} incurs only one cache miss by evicting $p_a$ upon the first request for $p_c$, making $p_a$ a 1-page and both $p_b$ and $p_c$ 0-pages indefinitely. Suppose an RB-following algorithm $\textsc{A}$ receives completely incorrect predictions. If $\textsc{A}$ blindly trusts the predictions, it may instead evict $p_b$ and $p_c$ alternately, resulting in an infinite number of cache misses. Since \textsc{OPT} incurs only one miss, the competitive ratio of $\textsc{A}$ is unbounded, implying infinite robustness. 
\end{proof}

\section{RB-Following Algorithms that Blindly Follow Predictions}\label{appendix_algo_rb_blindly}
Algorithm \ref{algo_rbf} gives a template of RB-following algorithms that blindly follow predictions. Refer to Sec. \ref{s32} for the definition of RB-following.
\begin{algorithm}[H]
    \caption{An RB-following algorithm that blindly follows predictions}
    \label{algo_rbf}
    \begin{algorithmic}[1]
    \FOR{$i = 1, ..., n$}
        \STATE Receive the page request $r_i = (t_i, p_i)$ at time $t_i$
        \IF {\textit{$p_i$ is not in the cache}}
            \IF {\textit{the cache is full}}
                \STATE Call function \texttt{Evict()}
            \ENDIF
            \STATE Load $p_i$ into the cache
        \ENDIF
    \ENDFOR
    \end{algorithmic}
\end{algorithm}

Algorithm \ref{algo_blindoracle}, \ref{algo_lrb}, and \ref{algo_parrot} define the eviction functions of \textsc{BlindOracle}, \textsc{LRB}, and \textsc{Parrot}, respectively. They can be embedded into the template Algorithm \ref{algo_rbf} to restore the corresponding algorithms.

\begin{algorithm}[H]
\caption{\texttt{Evict()} of \textsc{BlindOracle}}
\label{algo_blindoracle}
\begin{algorithmic}[1]
\STATE Invoke the predictor to get the predicted next request times of cache pages
\STATE Evict the page with the furthest predicted next request time
\end{algorithmic}
\end{algorithm}

\vspace{-4mm}

\begin{algorithm}[H]
\caption{\texttt{Evict()} of \textsc{LRB}}
\label{algo_lrb}
\begin{algorithmic}[1]
\STATE Invoke the predictor to get the predicted Belady's binary labels of cached pages
\IF {\textit{there is a predicted 1-page}}
    \STATE Evict a predicted 1-page uniformly at random
\ELSE
    \STATE Evict a predicted 0-page uniformly at random
\ENDIF
\end{algorithmic}
\end{algorithm}

\vspace{-4mm}

\begin{algorithm}[H]
\caption{\texttt{Evict()} of \textsc{Parrot}}
\label{algo_parrot}
\begin{algorithmic}[1]
\STATE Invoke the predictor to get the predicted FitF page
\STATE Evict the predicted FitF page
\end{algorithmic}
\end{algorithm}

\section{Proof of Eviction Pattern of Optimal Algorithms}\label{appendix_rb_compliant_structured_eviction}

The following theorem corresponds to Observation 3 in Section \ref{s41}.

\begin{theorem}
    For an optimal algorithm \textsc{A}, it never keeps an unrequested page in the cache during the time interval between the eviction and next request of another page. Formally, if a request $r_i$ results in a cache miss at time $t_i$, and the requested page $p_i$ was previously evicted at time $\mu < t_i$, then all pages remaining in the cache at time $t_i$ must have been requested at least once between $\mu$ and $t_i$.
\end{theorem}
\begin{proof}
    Suppose, for contradiction, that there is a cached page $p_\alpha$ at time $t_i$ that has not been requested since time $\mu$. Consider an alternative algorithm \textsc{B}, which behaves identically to \textsc{A} up to $t_i$, except that it evicts $p_\alpha$ instead of $p_i$ at time $\mu$, thereby keeping $p_\alpha$ cached until $t_i$. Algorithm \textsc{B} incurs the same number of cache misses as \textsc{A} before time $t_i$. However, at time $t_i$, \textsc{B} experiences a cache hit, whereas \textsc{A} incurs a cache miss due to the prior eviction of $p_i$. This indicates that \textsc{B} outperforms \textsc{A} after serving $r_i$, contradicting the optimality of \textsc{A}.
\end{proof}

\section{Proof that \textsc{LRB} is RB-following} \label{appendix_lrb_fb_following}
We first show a general conclusion below.

\begin{lemma} \label{lemma_appendix_rbf_with_opt}
    If algorithm \textsc{A} is RB-following when serving $\{r_1, ...,r_j\}$, then \textsc{A\&OPT}$(j)$ is RB-following. Here, we define the following: (1) \textsc{OPT}(\textsc{A}, $j$) denotes the \textsc{OPT} algorithm that begins with the cache content \textsc{A}$(\{r_1, ..., r_j\})$ and serves the subsequent requests in $\sigma$, where \textsc{A}$(\{r_1, ..., r_j\})$ represents the cache content of \textsc{A} after serving the first $j$ requests; (2) \textsc{A\&OPT}$(j)$ denotes the hybrid algorithm that follows \textsc{A} when serving $\{r_1, ...,r_j\}$ and then follows \textsc{OPT}$(\textsc{A}, j)$ for the remaining sequence.
\end{lemma}
\begin{proof}
    %where Belady's binary labels are generated by \textsc{OPT}(\textsc{LRB}, $i-1$).
    Since \textsc{A} is RB-following before serving $r_{j+1}$, we have $\mathbf{0}_{j}^{A} = \mathbf{0}_{j}^{*}$ by Lemma \ref{lemma_rbc_0_1_sets} and Corollary \ref{corollary_rbc_same_cost_as_opt}, implying that \textsc{OPT} also incurs a cache miss at time $t_{j+1}$ and evicts a 1-page. Moreover, $|\mathbf{1}_{j}^{A}| = |\mathbf{1}_{j}^{*}| > 0$. By Belady's rule, at time $t_{j+1}$, \textsc{OPT}$(\textsc{A}, j)$ evicts the page with the furthest next request time, denoted as $p_\alpha$. By Proposition \ref{proposition_fbc_next_request_time} and $|\mathbf{1}_{j}^{A}| > 0$, we have $p_\alpha \in \mathbf{1}^{A}_j$. This implies that \textsc{A\&OPT}$(j)$ is RB-following until served $r_{j+1}$ as it prioritizes the eviction of 1-pages.

    Therefore, if algorithm \textsc{A} is RB-following when serving $\{r_1, ...,r_j\}$, then \textsc{A\&OPT}$(j)$ remains RB-following after serving $r_{j+1}$. By induction, \textsc{A\&OPT}($j$) is RB-following for the entire request sequence, completing the proof.
\end{proof}

\begin{proposition}
    \textsc{LRB} is RB-following.
\end{proposition}
\begin{proof}
    Suppose \textsc{LRB} is RB-following when serving $\{r_1, ...,r_j\}$ and incurs a cache miss when serving request $r_{j+1}$. By Lemma \ref{lemma_appendix_rbf_with_opt} and Corollary \ref{corollary_rbc_only_evcit_1}, \textsc{LRB\&OPT}$(j)$ is RB-following and always evicts only 1-pages under perfect predictions. Recall that \textsc{LRB} relies on predicted 1-pages labeled by the \textsc{OPT} that begins with the current cache content of \textsc{LRB}, which corresponds exactly to \textsc{LRB\&OPT}($j$). Consequently, under perfect predictions, all predicted 1-pages are 1-pages labeled by \textsc{OPT}. Therefore, \textsc{LRB} remains RB-following after serving $r_{j+1}$. The proof follows by induction.
\end{proof}

\section{Proof of Lemma \ref{lemma_opt_cost_bound}}\label{appendix_lemma_opt_cost_bound}
\textbf{Lemma \ref{lemma_opt_cost_bound}.}
\textit{$\sum\limits_{q \in \mathcal{Q}} \frac{1}{2}c_q \leq \textsc{OPT} \leq \sum\limits_{q\in\mathcal{Q}}n^{old}_q$.}
\begin{proof}
     We first show $\textsc{OPT} \leq \sum\limits_{q\in\mathcal{Q}}n^{old}_q$. Assume an alternative algorithm \textsc{Guard\&B} that behaves identically to \textsc{Guard\&A} except that, when serving requests for new pages, it evicts the $n^{old}_q$ old pages whose next request times are later than those of the remaining old pages. As a result, \textsc{Guard\&B} experiences cache hits when the remaining old pages are requested. Note that at the end of the $q$-th phase, the cache content of \textsc{Guard\&A} is identical to that of \textsc{Guard\&B}. Clearly, the total cost of \textsc{Guard\&B} is $\sum\limits_{q\in\mathcal{Q}}n^{old}_q$, implying that $\textsc{OPT} \leq \sum\limits_{q\in\mathcal{Q}}n^{old}_q$.
        
     Next, the proof of $\textsc{OPT} \geq \sum\limits_{q \in \mathcal{Q}} \frac{1}{2}c_q$ follows from \citet{fiat1991competitive}, though our phase definition differs from theirs.

     Let $\textsc{OPT}_q$ represent the number of cache misses incurred by \textsc{OPT} while processing the requests that arrive during the $q$-th phase of \textsc{Guard\&A}. Let $\mathcal{S}_q^*$ and $\mathcal{S}_q^\textsc{GA}$ denote the sets of cached pages for \textsc{OPT} and \textsc{Guard\&A}, respectively, at the beginning of the $q$-th phase. Define $h_q$ ($0 \leq q \leq Q$) as the number of pages in \textsc{OPT}’s cache but absent in \textsc{Guard\&A}’s cache, i.e., $h_q = k - |\mathcal{S}_q^* \cap \mathcal{S}_q^\textsc{GA}|$. Note that $h_0 = 0$.
    
    First, we claim that $\textsc{OPT}_q \geq c_q - h_q$. Among the $c_q$ distinct new page requests in the $q$-th phase, at most $h_q$ of them may already reside in \textsc{OPT}’s cache. Hence, \textsc{OPT} incurs at least $c_q - h_q$ cache misses.
    
    Next, we show that $\textsc{OPT}_q \geq h_{q+1}$. By definition, the pages in \textsc{Guard\&A}’s cache at the end of the $q$-th phase, $\mathcal{S}_{q+1}^\textsc{GA}$, remain in the cache at the beginning of the $(q+1)$-th phase. Since \textsc{OPT}’s cache $\mathcal{S}_{q+1}^*$ contains $h_{q+1}$ pages not present in $\mathcal{S}_{q+1}^\textsc{GA}$, it follows that $\mathcal{S}_{q+1}^\textsc{GA}$ also contains $h_{q+1}$ pages that are absent in $\mathcal{S}_{q+1}^*$. Moreover, since every cached page in $\mathcal{S}_{q+1}^\textsc{GA}$ is requested at least once during the $q$-th phase, \textsc{OPT} must have evicted at least $h_{q+1}$ pages within the same phase.

    Note that the final phase ($Q$-th phase) may not end after serving all requests. Thus, to be rigorous, we denote $h_{Q+1}$ as the number of pages that satisfy two conditions: (1) The page has been requested at least once in the $Q$-th phase. (2) The page is in \textsc{Guard\&A}'s cache but absent in \textsc{OPT}'s cache after serving all requests. Similarly, \textsc{OPT} must have evicted at least $h_{Q+1}$ ($h_{Q+1} > 0$) pages within the same phase.
    
    Combining the two inequalities, we have:
    \begin{align}
        \sum\limits_{q=0}^{Q} \textsc{OPT}_q 
        &\geq \sum\limits_{q=0}^{Q} \max \Big(c_q - h_q, h_{q+1}\Big) \nonumber \\
        &\geq \sum\limits_{q=0}^{Q} \frac{1}{2} \Big(c_q - h_q + h_{q+1}\Big) \nonumber \\
        &= \sum\limits_{q=0}^{Q} \frac{1}{2}c_q - h_0 + h_{Q+1} \nonumber \\
        &\geq \sum\limits_{q=0}^{Q} \frac{1}{2}c_q \quad \text{(since $h_0 = 0, h_{Q+1} \geq 0$),}
    \end{align}
    which completes the proof.
\end{proof}

\section{A Tight Analysis of \textsc{Guard\&A}'s Robustness}\label{appendix_theorem_guard_rob_tight}
\textbf{Theorem \ref{theorem_guard_rob_tight}.} 
    \textit{\textsc{Guard\&A} is $(2H_{k-1} + 2)$-robust, which is tight.}
\begin{proof}
We adopt Lemma \ref{lemma_opt_cost_bound}, Lemma \ref{lemma_guard_new_page_requests}, and notations from Sec. \ref{s43}.

All evictions that occur during a given phase form an eviction graph. A directed edge $(p_\alpha, p_\beta)$ is added from page $p_\alpha$ to page $p_\beta$ if $p_\alpha$ is evicted upon the arrival of request $r_\beta$, which targets page $p_\beta$. At a high level, the directed edge represents ``evicted by''. Figure \ref{fig:eviction_graph} illustrates an example of such a graph consisting of multiple connected subgraphs. In the graph, edges with ``pred.'' and ``rand.'' denote prediction-driven evictions and random evictions, respectively.

\begin{figure}[H]
    \centering
    \includegraphics[width=0.78\linewidth]{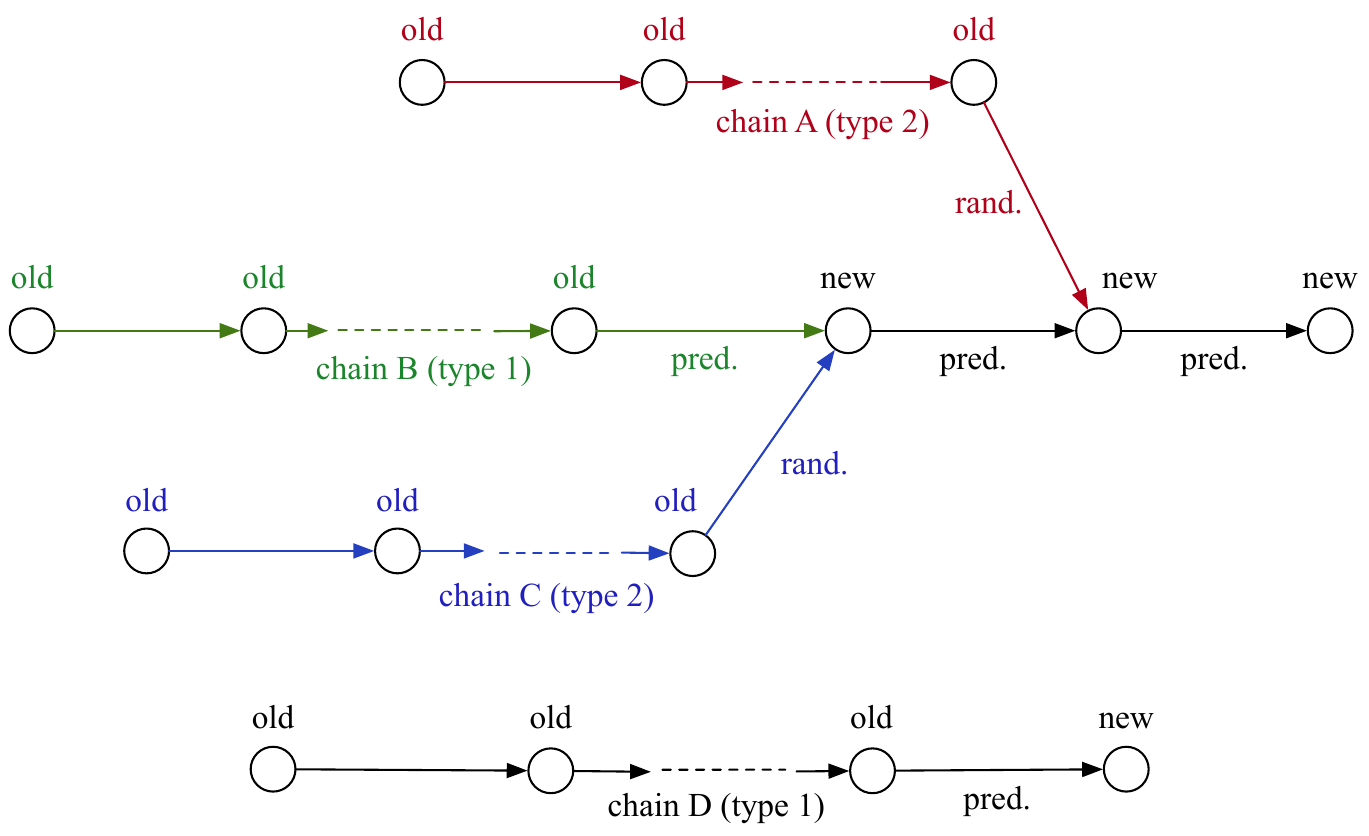}
    \caption{Eviction Graph}
    \label{fig:eviction_graph}
\end{figure}

The number of edges in the eviction graph equals the number of evictions that occur during the phase. According to the definition of \textsc{Guard\&A}, an eviction graph has the following properties:
\begin{enumerate}
    \item The out-degree of any page is at most $1$, as each page can be evicted no more than once within the phase under the ``guard'' mechanism.
    \item The in-degree of each old page is at most $1$, as it will be guarded after being loaded into the cache.
    \item The in-degree of each new page is at most $2$, as it can be loaded into the cache at most twice. In addition, a new page with an in-degree of 2 must have an out-degree of 1. It evicts a page based on prediction upon its first cache miss, and a randomly selected old page upon its second cache miss.
    \item A new page can only point to (be evicted by) another new page (based on predictions), since a request for an old page or a previously evicted new page only leads to the random eviction of an old page upon a cache miss.
    \item A page with an out-degree of 0 must be a new page, as an old page remains in the cache if it is not evicted during the phase.
\end{enumerate}

A chain that contains multiple old pages and a single new page is defined as an old-page eviction chain, such as chains A, B, C, and D in Figure \ref{fig:eviction_graph}. The following two conclusions hold regarding the length of an old-page eviction chain, which is defined as the number of edges and depends on the chain type. We classify old-page eviction chains based on the first eviction.

\begin{itemize}
    \item \textbf{Type 1:} the request for the new page evicts an old page based on predictions. The expected length of this type of chain is at most $H_{k-1} + 1$.
    \begin{proof}
        Denote $L(x)$ as the expected number of subsequent random evictions of old pages when the number of unrequested old pages in the cache is $x$. This implies that, subsequently, up to $x$ old pages will be requested during this phase. Thus, we have
        \begin{equation}
            L(x) \leq 1 + \frac{1}{x}\sum_{i=0}^{x-1}L(i),
        \end{equation}
        where $L(0)=0$. Consequently, $L(x) \leq \sum_{i=1}^{x}\frac{1}{i}=H_x$.
        
        In the worst case, the old page with the earliest next request time is evicted by the request for the new page. Old pages are evicted randomly afterwards, so the expected length of the sub-chain is bounded by $L(k-1)+1=H_{k-1}+1$.
    \end{proof}
    \item \textbf{Type 2:} the request for the new page evicts an old page uniformly at random. The expected length of this type of chain is at most $H_{k-1}$.
    \begin{proof}
        When the request for the new page evicts an old page randomly, at most $k-1$ unrequested old pages remain in the cache, since the new page must have been evicted previously. Thus, the expected length of the chain is at most $L(k-1)=H_{k-1}$.
    \end{proof}
\end{itemize}

Within a connected subgraph with $x$ distinct new pages, all new pages form a chain, according to Properties 1 and 4 stated above. Based on Property 3, at most one old-page-eviction chain of type 1 and $x - 1$ old-page-eviction chains of type 2 exist in the subgraph. See Figure \ref{fig:eviction_graph} for an illustration. Therefore, the expected number of edges within this subgraph is at most $H_{k-1}+1+(x-1)H_{k-1}+(x-1)=x(H_{k-1}+1)$.

By Lemma \ref{lemma_guard_new_page_requests}, we know that the number of old-page eviction chains $n^{old}_q$ is at most $c_q$. This leads to the following lemma directly.

\begin{lemma} \label{lemma_appendix_eviction_chain_num}
    At most $c_q$ old-page eviction chains exist in the eviction graph of phase $q$, where $c_q$ denotes the number of distinct new pages requested during this phase.
\end{lemma}

Combining the above conclusions, the expected number of evictions in phase $q$ is at most $c_q(H_{k-1}+1)$. Finally, we derive an upper bound on the total cost of \textsc{Guard\&A} as follows.
\begin{align}
    \mathbb{E}[\textsc{GA}] = d_0 + \sum\limits_{q=1}^{Q} \mathbb{E}[\textsc{GA}_q] &\leq (H_{k-1} + 1) \sum\limits_{q=0}^{Q} c_q \nonumber \\
    &\leq (2H_{k-1} + 2) \textsc{OPT}, \quad \text{(by Lemma \ref{lemma_opt_cost_bound})}
    \label{appendix_ga_bound}
\end{align}
where both \(\textsc{Guard\&A}\) and \(\textsc{OPT}\) experience \(d_0\) cache misses during the $0$-th phase as no evictions occur. \textsc{Guard\&A} is $(2H_{k-1} + 2)$-robust, as established by \eqref{lemma_opt_cost_bound}.

\textbf{Tightness.} Consider the case where $n_q$ evictions occur at the beginning of the phase, evicting exactly $c_q$ old pages (i.e., $n_q = c_q$) whose next request times are earlier than the remaining old pages. In this case, we can bound \textsc{Guard\&A}'s total cost \textsc{GA} as follows, where $j$ denotes the $j$-th subsequent distinct old page request.

\begin{align}
    \label{cost_guard_a_1}
    \mathbb{E}[\textsc{GA}] &= c_0 + \sum\limits_{q=1}^{Q} \mathbb{E}[\textsc{GA}_q] = c_0 + \sum\limits_{q=1}^{Q}\Big(n_q + \mathbb{E}[o_q]\Big) \nonumber \\
    &= c_0 + \sum\limits_{q=1}^{Q} \Big (n_q + c_q + \sum\limits_{j=c_q+1}^{k-c_q}\frac{c_q}{k-(j-1)} \Big) \nonumber \\
    &= c_0 + \sum\limits_{q=1}^{Q}(2 + H_{k-c_q} - H_{c_q})c_q \quad (\text{since }n_q = c_q) \nonumber \\
    &\leq c_0 + \sum\limits_{q=1}^{Q}(H_{k-1} + 1)c_q. \quad \text{(equality holds if $c_q=1$)} \nonumber \\
\end{align}

By \eqref{cost_guard_a_1}, we have
\begin{align}
    \label{cost_guard_a_2}
    \lim_{Q\rightarrow \infty}\mathbb{E}[\textsc{GA}] &\leq \sum\limits_{q \in \mathcal{Q}}(H_{k-1} + 1)c_q \nonumber \\
    & \leq (2H_{k-1} + 2) \textsc{OPT}. \quad \text{(by Lemma \ref{lemma_opt_cost_bound})}
\end{align}

We now demonstrate that the equality in the final inequality \eqref{cost_guard_a_2} can indeed be achieved under the following conditions, where all previously defined settings remain unchanged.

Let $w_q$ denote the only new page requested during phase $q$, implying that $c_q = 1$. Consider the cache contents of \textsc{Guard\&A} and \textsc{OPT} after all requests in phase $q$ have been served:

\begin{itemize}
    \item \textsc{Guard\&A}: \textsc{Guard\&A}'s cache contains $w_q$ and $k-1$ old pages at the end of phase $q$. This is because the request for $w_q$ evicts the old page with the earliest next request time, leading to subsequent random evictions of only old pages.
    \item \textsc{OPT}: Assume that \textsc{OPT}'s cache content at the beginning of phase $q$ matches that of \textsc{Guard\&A} and that $w_q$ is never requested again after its next request. Following Belady’s rule, \textsc{OPT} evicts $w_q$ during this phase. All subsequent old page requests result in cache hits. Consequently, at the end of phase $q$, \textsc{OPT}'s cache contains all old pages, with one old page not being requested during phase $q$. We denote this old page as $u_q$.
\end{itemize}

As a result, at the start of phase $q+1$, $u_q$ resides in \textsc{OPT}'s cache but is absent from \textsc{Guard\&A}'s cache. Suppose $u_q$ serves as the only new page in phase $q+1$, implying $c_{q+1} = 1$, and is requested before any other page in phase $q+1$. After serving requests to $k-1$ distinct old pages, phase $q+1$ ends. At this point, both \textsc{OPT} and \textsc{Guard\&A} once again have identical cache contents.

This construction enables repeated alternation between the behaviors in phases $q$ and $q+1$ for all subsequent phases. Notably, \textsc{OPT} incurs a total cost of 1 across phases $q$ and $q+1$, which equals $(c_q + c_{q+1})/2$. Consequently, as this pair of phases repeats and $Q \rightarrow \infty$, we obtain:

$$
\textsc{OPT} = \sum_{q \in \mathcal{Q}} \frac{1}{2} c_q,
$$

thereby establishing the tightness of the bound in \ref{cost_guard_a}. This case confirms the tightness.

\end{proof}

% We state a theorem of \citet{wei2020better} that relates the prediction error to BlindOracle's cost.
% \begin{theorem}\label{theorem_wei}
%     (\citet{wei2020better}, Theorem 1.1) For learning-augmented online caching, BlindOracle obtains a competitive ratio of 
%      \begin{align}
%          \min(1 + 2\frac{\eta_{t}}{\textsc{OPT}}, 4 + \frac{4}{k- 1}\frac{\eta_{t}}{\textsc{OPT}}),
%      \end{align}
%     where $\eta_{t}$ is the total $l_1$ error of the predicted page request times incurred by the predictor.
% \end{theorem}
% \begin{corollary}
%     BlindOracle is $1$-consistent, non-robust, and $\mathcal{O}(1/k \cdot \eta_{t}/\textsc{OPT})$-smooth, where $\eta_{t}$ denotes the total $l_1$ error of the predicted page request times.
% \end{corollary}

\section{Applications of \textsc{Guard}}\label{appendix_guard_applications}
\subsection{Algorithm Description} \label{appendix_ga_app_description}

We first present the pseudo-codes for the three variants of \textsc{Guard\&A}.

\begin{algorithm}[!h]
\caption{\textsc{Guard\&A}}
\label{algo_guard_a_template}
\begin{algorithmic}[1]
    \STATE $\mathcal{U} \leftarrow \emptyset$ (the set tracks \textit{unrequested old pages} in the cache)
    \FOR{$i = 1, ..., n$}
        \STATE Receive a page request $r_i = (t_i, p_i)$ at time $t_i$
        \IF {\textit{$p_i$ is not in the cache}}
            \IF {\textit{the cache is full}}
                \IF {$\mathcal{U}$ is empty}
                    \STATE \emph{Unguard all cached pages}
                    \STATE $\mathcal{U} \leftarrow \{ \text{all cached pages} \}$ \emph{(a new phase begins)}
                \ENDIF
                \IF {\textit{$p_i$ was evicted in the current phase}}
                    %\IF {\textit{the corresponding eviction chain has consecutively evicted $d$ pages at random}}
                        %\STATE Evict the page from $\mathcal{U}$ with the furthest predicted next request time
                    %\ELSE
                        \STATE Evict a page $x$ from $\mathcal{U}$ uniformly at random
                    %\ENDIF
                    \STATE \emph{Guard $p_i$}
                \ELSE
                    \STATE $\mathcal{S} \leftarrow \{ \text{all unguarded cached pages} \} $
                    \STATE Call function \texttt{Evict$(\mathcal{S})$}
                \ENDIF
                \IF {the evicted page $x \in \mathcal{U}$}
                    \STATE $\mathcal{U} \leftarrow \mathcal{U} \backslash \{ x \}$
                \ENDIF
            \ENDIF
            \STATE Load $p_i$ into the cache
        \ENDIF
        \IF {$p_i \in \mathcal{U}$}
            \STATE $\mathcal{U} \leftarrow \mathcal{U} \backslash \{ p_i \}$
        \ENDIF
    \ENDFOR
\end{algorithmic}
\end{algorithm}

Algorithms \ref{algo_ga_bo}, \ref{algo_ga_lrb}, and \ref{algo_ga_parrot} define the eviction functions of \textsc{Guard\&BlindOracle}, \textsc{Guard\&LRB}, and \textsc{Guard\&Parrot}, respectively. They can be embedded into Algorithm \ref{algo_guard_a_template} to restore the corresponding algorithms.

\begin{algorithm}[H]
\caption{\texttt{Evict$(\mathcal{S})$} of \textsc{Guard\&BlindOracle}}
\label{algo_ga_bo}
\begin{algorithmic}[1]
\STATE Invoke the predictor to get the predicted next request times of pages in $\mathcal{S}$
\STATE Evict the page with the furthest predicted next request time from $\mathcal{S}$
\end{algorithmic}
\end{algorithm}

\vspace{-4mm}

\begin{algorithm}[H]
\caption{\texttt{Evict$(\mathcal{S})$} of \textsc{Guard\&LRB}}
\label{algo_ga_lrb}
\begin{algorithmic}[1]
\STATE Invoke the predictor to get the predicted Belady's binary labels of pages in $\mathcal{S}$
\IF {\textit{there is a predicted 1-page}}
    \STATE Evict a predicted 1-page from $\mathcal{S}$ uniformly at random
\ELSE
    \STATE Evict a predicted 0-page from $\mathcal{S}$ uniformly at random
\ENDIF
\end{algorithmic}
\end{algorithm}

\vspace{-4mm}

\begin{algorithm}[H]
\caption{\texttt{Evict$(\mathcal{S})$} of \textsc{Guard\&Parrot}}
\label{algo_ga_parrot}
\begin{algorithmic}[1]
\STATE Invoke the predictor to get the predicted FitF page from $\mathcal{S}$
\STATE Evict the predicted FitF page from $\mathcal{S}$
\end{algorithmic}
\end{algorithm}

All these algorithms invoke the predictor upon eviction, consistent with common implementations in real systems. \textsc{Guard\&A} restricts the range of predictions only for unguarded pages in the cache, denoted by $\mathcal{S}$. Below, we present implementation details regarding the predictor queries.

\begin{itemize}
    \item \textsc{Guard\&BlindOracle} requires only the predicted next request times for pages in $\mathcal{S}$, which is straightforward to implement since each prediction is independent. 
    \item \textsc{Guard\&LRB} utilizes the predicted Belady's binary labels only for pages in $\mathcal{S}$, as labeled by the \textsc{OPT} algorithm that begins with the current cache content and serving subsequent requests in $\sigma$ without evicting pages outside of $\mathcal{S}$. This is achieved by the LRB Predictor \cite{song2020learning}, which first predicts the next request times of pages in $\mathcal{S}$ and then classifies them based on the predicted Belady boundary.
    \item \textsc{Guard\&Parrot} queries the predictor for the FitF page within $\mathcal{S}$, selected based on the eviction priorities assigned by the FitF page predictor, a neural network-based model proposed by \citet{liu2020imitation}.
\end{itemize}

\subsection{Smoothness Proof for \textsc{Guard\&BlindOracle}} \label{appendix_sec_smo_proof_gbo}

We first show the relationship between prediction error $\eta_t$ and inversions, as studied by \citet{rohatgi2020near}.

\begin{definition}
    \textit{(Inversion). Given two sequences $A = (a_1, a_2, ..., a_n)$ and $B = (b_1, ..., b_n)$, let $\text{inv}(A, B)$ be the number of inversions, which denotes the number of pairs of indices $(i,j)$ such that $a_i < a_j$ but $b_i > b_j$.}
\end{definition}

\begin{definition}
    \textit{($l_1$-distance). Given two sequences $A = (a_1, ..., a_n)$ and $B = (b_1, ..., b_n)$, we define $l_1(A, B) = \sum\limits_{i=1}^{n}|a_i - b_i|$ as the $l_1$-distance between A and $B$.}
\end{definition}

We then present a lemma from \citet{rohatgi2020near}, with its descriptive form slightly modified for clarity.
\begin{lemma}\label{lemma_rohatgi_inv}
    (\citet{rohatgi2020near}, Lemma 4.1) Let $A = (a_1, ..., a_n)$ and $B = (b_1, ..., b_n)$ be two integer sequences. Then $\text{inv}(A, B) \leq 2 l_1(A, B)$, with $\text{inv}(A)$ and $l_1$ as defined above.
\end{lemma}
Next, we relate $\eta_{t}$ to the number of inversions. For a page request sequence $\sigma$, let their next request time be $\theta = \{T_1, T_2, ..., T_n\}$, and let $\hat{\theta} = \{\hat{T}_1, ..., \hat{T}_n\}$ represent the predicted arrival times. By definition, $\eta_{t} = l_1(\theta, \hat{\theta})$. Then, by Lemma \ref{lemma_rohatgi_inv}, we have
\begin{align}\label{appendix_inversion_eta}
    \text{inv}(\theta, \hat{\theta}) \leq 2 l_1(\theta, \hat{\theta}) = 2\eta_{t}
\end{align}

\begin{theorem} \label{theorem_appendix_gb_smoothness}
    \textsc{Guard\&BlindOracle} is $\mathcal{O}(\log (\eta_t /\textsc{OPT}))$-smooth.
\end{theorem}
\begin{proof}
    In the following, we bound the expected length of an old-page eviction chain in phase $q$, called $\mathcal{E}_q$, according to the properties of the eviction graph (Figure \ref{fig:eviction_graph}). Denote $|\mathcal{E}_q|$ as its length.
    \begin{itemize}
        \item If $\mathcal{E}_q$ is a chain of type 1, which consists of a new page and multiple old pages, and the request for the new page evicts an old page based on predictions. Denote the new page as $p_\alpha$. Let $\mathcal{U}_\beta$ be the set of unrequested old pages when the old page $p_\beta$, which was previously evicted at time $\mu$ by the request for $p_\alpha$, is requested and a cache miss occurs. This contributes at least $I(\mathcal{E}_q) = |\mathcal{U}_\beta|$ inversions, since each pair of $p_\beta$ and a page in $\mathcal{U}_\beta$ indicates an inversion in predictions at time $\mu$.
        
        According to $L(x) \leq H_x$ in Appendix \ref{appendix_theorem_guard_rob_tight}, we have $\mathbb{E}[|\mathcal{E}_q|]\leq \mathcal{O}(\log I({\mathcal{E}}_q))$.
        \item If $\mathcal{E}_q$ is a chain of type 2, which consists of a new page and multiple old pages, and the request for the new page $p_\alpha$ randomly evicts an old page. Its new page $p_\alpha$ must be previously evicted by another new page $p_\gamma$, as $p_\alpha$ must have an in-degree of 2 and an out-degree of 1 in the eviction graph, according to Property 3 mentioned in Appendix \ref{appendix_theorem_guard_rob_tight}. Let $\mathcal{U}_\alpha$ be the set of unrequested old pages when the new page $p_\beta$ is requested. Each pair of $p_\alpha$ and a page in $\mathcal{U}_\alpha$ indicates an inversion in predictions when evicting $p_\alpha$. Similarly, this chain contributes at least $I(\mathcal{E}_q)=|\mathcal{U}_\alpha|$ inversions, and we have $\mathbb{E}[|\mathcal{E}_q|] \leq \mathcal{O}(\log I(\mathcal{E}_q))$.
    \end{itemize}

    The total inversions $\text{inv}(\theta,\hat{\theta}) \geq \sum\limits_{q\in\mathcal{Q}}\sum\limits_{\mathcal{E}_q} I(\mathcal{E}_q)$. Suppose there are a total of $N$ old-page eviction chains across all phases, where $N = \sum\limits_{q\in\mathcal{Q}}\sum\limits_{\mathcal{E}_q}1=\sum\limits_{q\in\mathcal{Q}}n^{old}_q$ by definition. By Lemma \ref{lemma_opt_cost_bound}, we have
    \begin{align} \label{eq_appendix_opt_N_relation}
        \textsc{OPT} \leq N = \sum\limits_{q\in\mathcal{Q}}n^{old}_q \leq \sum\limits_{q\in\mathcal{Q}}c_q \leq 2\textsc{OPT}.
    \end{align} \textsc{Guard\&BlindOracle}'s expected total cost $\mathbb{E}[\textsc{G\&B}]$ is:
    \begin{align}
        \mathbb{E}[\textsc{G\&B}] &\leq \sum\limits_{q\in\mathcal{Q}}\big(c_q + \sum\limits_{\mathcal{E}_q}\mathbb{E}[|\mathcal{E}_q|] \big) \nonumber \\
        &\leq \sum\limits_{q\in\mathcal{Q}}c_q + \sum\limits_{q\in\mathcal{Q}}\sum\limits_{\mathcal{E}_q}\mathcal{O}(\log I(\mathcal{E}_q)) \quad \text{(by Lemma \ref{lemma_opt_cost_bound})} \nonumber \\
        &\leq \sum\limits_{q\in\mathcal{Q}}c_q + N\cdot \mathcal{O}(\log(\frac{\text{inv}(\theta,\hat{\theta})}{N})) \quad \text{(by Jensen's inequality and concavity)} \nonumber \\
        &\leq 2\textsc{OPT} + 2\textsc{OPT}\cdot\mathcal{O}(\log(\frac{\eta_t}{\textsc{OPT}})) \quad \text{(by \eqref{eq_appendix_opt_N_relation} and Lemma \ref{lemma_rohatgi_inv})} \nonumber \\
        &= \mathcal{O}(\textsc{OPT}\cdot \log (\frac{\eta_t}{\textsc{OPT}})).
    \end{align}
    This demonstrates the $\mathcal{O}(\log (\eta_t /\textsc{OPT}))$-smoothness of \textsc{Guard\&BlindOracle}.
\end{proof}

\subsection{Smoothness Proof for \textsc{Guard\&LRB}}

We first present a key lemma that helps establish smoothness. We denote \textsc{OPT} (\textsc{A}, $j$) as the \textsc{OPT} algorithm that begins with the cache content \textsc{A}$(\{r_1, ..., r_j\})$ and serves the subsequent requests in $\sigma$ without evicting guarded pages. Here, \textsc{A}$(\{r_1, ..., r_j\})$ represents the cache content of \textsc{A} after serving the first $j$ requests.

\begin{lemma} \label{lemma_appendix_ga_lrb_evict_1_page}
    When serving request $r_{j+1}$ at time $t_{i+1}$, if \textsc{Guard\&LRB} evicts a 1-page $p_x$ (as labeled by \textsc{OPT}(\textsc{A}, $j$)), then the current phase must terminate before $p_x$ is requested again.
\end{lemma}
\begin{proof}
    Suppose \textsc{OPT}(\textsc{A}, $j$) evicts page $p_x$ at time $\mu > t_{i+1}$. By definition, $p_x$ is the page with the furthest next request time at time $\mu$. Let $T_x$ denote the next request time of $p_x$. This implies that the remaining $k - 1$ pages in the cache at time $\mu$ will all be requested before $T_x$.
    
    Whenever one of these remaining pages, say $p_\alpha$, is requested after time $\mu$, if it results in a cache hit, then $p_\alpha \notin \mathcal{U}$ or $\mathcal{U} = \mathcal{U} \setminus \{p_\alpha\}$. Otherwise, if a cache miss occurs, \textsc{Guard\&LRB} evicts a random page from $\mathcal{U}$. Therefore, after evicting $p_x$ and serving requests for all of these $k - 1$ pages, the current phase must terminate.
\end{proof}

\begin{theorem} \label{theorem_appendix_exga_lrb_smoothness}
    \textsc{Guard\&LRB} is $\mathcal{O}(H_k \cdot \eta_b/\textsc{OPT})$-smooth.
\end{theorem}
\begin{proof}
    According to the description of \textsc{Guard\&LRB} provided in Section \ref{appendix_ga_app_description}, there is at least one 1-page in the set of unguarded pages $\mathcal{S}$ upon a cache miss. Otherwise, the \textsc{OPT} algorithm, starting from the current cache content, would have to evict a 0-page from $\mathcal{S}$, contradicting the definition of Belady's binary labels. Thus, the algorithm always evicts a 1-page when predictions are accurate upon a cache miss.
    
    According to the properties of the eviction graph (Figure \ref{fig:eviction_graph}), each old-page eviction chain results from a unique prediction-driven eviction. By Lemma \ref{lemma_appendix_ga_lrb_evict_1_page}, evicting a 1-page leads to no subsequent evictions. Thus, given a prediction error of $\eta_b$, the total expected length of all old-page eviction chains is at most $\mathcal{O}(H_k\cdot \eta_b)$, where each chain has length at most $\mathcal{O}(H_k)$ as shown in Appendix \ref{appendix_theorem_guard_rob_tight}.
    
    $\mathcal{O}(H_k\cdot \eta_b)$ also bounds the expected total cost of \textsc{Guard\&LRB}, thereby completing the proof.
\end{proof}

\subsection{Smoothness Proof for \textsc{Guard\&Parrot}} \label{appendix_guard_parrot}

\begin{theorem} \label{theorem_appendix_exga_parrot_smoothness}
    \textsc{Guard\&Parrot} is $\mathcal{O}(H_k \cdot \eta_f / \textsc{OPT})$-smooth.
\end{theorem}
\begin{proof}
    According to the properties of the eviction graph (Figure \ref{fig:eviction_graph}), each incorrect prediction of a FitF page leads to either (1) the eviction of an old page, which subsequently generates an old-page eviction chain of type 1, or (2) the eviction of a new page, which subsequently generates an old-page eviction chain of type 2.

    Therefore, given a prediction error of $\eta_f$, there are at most $\eta_f$ old-page eviction chains in the eviction graphs across all phases. From Appendix \ref{appendix_theorem_guard_rob_tight}, the expected length of an old-page eviction chain is at most $\mathcal{O}(H_k)$. Thus, we can bound the expected total cost of \textsc{Guard\&Parrot}, denoted $\mathbb{E}[\textsc{G\&P}]$, as:
    \begin{align}
        \mathbb{E}[\textsc{G\&P}] \leq \mathcal{O}(H_k \cdot \eta_f),
    \end{align}
    which completes the proof.
\end{proof}

\section{Lower Bounds} \label{appendix_lower_bounds}
\textbf{Robustness.} No randomized learning-augmented can achieve robustness better than $H_k$, which is the lower bound of randomized online caching algorithms \citep{fiat1991competitive}. To the best of our knowledge, existing online algorithms that achieve $H_k$ competitive ratio include \textsc{EQUITABLE} \cite{achlioptas2000competitive}, \textsc{Partition} \cite{mcgeoch1991strongly}, \textsc{OnlineMin} \cite{brodal2015onlinemin}, among which \textsc{OnlineMin} has the lowest asymptotic time complexity and uses $\mathcal{O}(\log k)$ time per request (or $\mathcal{O}(\log k/\log\log k)$ in the RAM model). Thus, we may be able to further enhance the consistency/robustness trade-off of the robustification framework to $(1, H_k + \mathcal{O}(1))$, at the expense of more than $\mathcal{O}(1)$ additional computation per request.

\textbf{Smoothness.}
\citet{rohatgi2020near} demonstrates that for a randomized learning-augmented caching algorithm utilizing next request time (NRT) predictions, the smoothness cannot be better than $\mathcal{O}(\log (1 / (k \log k) \cdot \eta_t/\textsc{OPT}))$. To the best of our knowledge, no existing algorithm has achieved this bound. At the same time, whether improving smoothness to approach this bound compromises 1-consistency or bounded robustness remains an open problem.

\section{More Experimental Results and Discussions} \label{appendix_exp}

\subsection{Experiments on Brightkite and Citi Using PLECO and POPU}

We evaluate algorithms that utilize page request time predictions on the BrightKite and Citi datasets using PLECO and POPU, two widely adopted predictors in this domain \citep{pmlr-v80-lykouris18a, antoniadis2023online, sadek2024algorithms}. For \textsc{F\&R}, predictions of page request times are converted into action predictions, following the methodology of \citet{sadek2024algorithms}. Table \ref{table:brightkite_citi_pleco_popu_results} presents the cost ratios of different algorithms. The results indicate that \textsc{Guard\&B.O.} consistently achieves the best or near-best performance.

%For Belady’s label series algorithms that do not use time-based prediction, we utilized the probability of the current item being accessed next, as provided by the PLECO predictor. By applying a predefined threshold (set to 0.5), this probability was converted into a binary 0-1 value, resulting in a new binary predictor, referred to as \textit{PLECO-bin}, as shown in Table \ref{table:ppbin}.

\begin{table*}[h]
\caption{Cost ratios of algorithms on BrightKite and Citi using PLECO and POPU predictors.}
\vspace*{1mm}
\begin{center}
\begin{small}
\begin{tabular}{cccccccccc}
\toprule
\textbf{Dataset} & \textbf{Predictor} & \textsc{B.O.} & \textsc{P.M.} & \textsc{LM.} & \textsc{LNonM.} & \textsc{B.O.\&M.$^{D}$} & \textsc{F\&R} & \textsc{Guard\&B.O.}  \\

\midrule

\multirow{2}{*}{BrightKite} & PLECO & 2.081 & 1.341 & 1.337 & 1.321 & 1.317 & 1.348 & \textbf{1.303} \\
 & POPU & 1.707 & 1.262 & 1.264 & 1.259 & 1.305 & 1.304 & \textbf{1.198} \\

\midrule

\multirow{2}{*}{Citi} & PLECO & 2.277 & 1.877 & 1.875 & 1.888 & \textbf{1.860} & 1.864 & 1.900 \\
 & POPU & 1.739 & 1.776 & 1.779 & 1.769 & 1.732 & 1.790 & \textbf{1.693} \\
 
\bottomrule
\end{tabular}
\end{small}
\end{center}
\label{table:brightkite_citi_pleco_popu_results}
\end{table*}

% \begin{table}[h]
% \begin{center}
% \begin{small}
% \begin{tabular}{cccccccccc}
% \toprule
% \textbf{Dataset} & \textbf{Predictor} & \textsc{FBP} & \textsc{Mark0} & \textsc{FBP\&M.$^{D}$} & \textsc{FBP\&M.$^{R}$} & \textsc{Guard\&FBP}  \\
% \midrule
% BrightKite & PLECO-bin & 1.443 & 5.250 & 1.317 & 1.417 & \textbf{1.364} \\
% Citi & PLECO-bin & 1.912 & 2.852 & 1.859 & 1.868 & \textbf{1.873} \\
% \bottomrule
% \end{tabular}
% \end{small}
% \end{center}
% \caption{Performance Comparison (Competitive Ratio) of Algorithms Based on Belady’s label}
% \label{table:ppbin}
% \end{table}

%%%%%%%%%%%%%%%%%%%%%%%%%%%%%%%%%%%%%%%%%%%%%%%
\subsection{Experiments on SPEC CPU2006 Benchmark Using PLECO and POPU}\label{appendix_exp_spec_pleco_popu}

%In experiments on the SPEC2006 datasets, we introduced a hyperparameter called \textit{Relaxed Time} for the Guard algorithm, abbreviated as Guard\textsubscript{\textit{Relaxed Time}}. This parameter defines the threshold for errors that trigger a guard operation: only errors exceeding the \textit{Relaxed Time} threshold within the current phase are guarded. For the standard Guard algorithm, this parameter is set to 1. While this may reduce robustness, it can occasionally lead to better performance in practical scenarios.

To evaluate the performance of algorithms that utilize predictions of page request times in a more realistic scenario, we employ real-world memory trace datasets from the SPEC CPU2006 benchmark, following the setup of \citet{pmlr-v139-chledowski21a}.

\textbf{Average Cost Ratios.} Table \ref{table:real_pleco_popu_spec_cr} presents the average cost ratios of various algorithms relative to OPT across all 13 datasets. The results indicate that \textsc{Guard\&B.O.} achieves the best overall performance on average.

\begin{table*}[h!]
\caption{Average cost ratios on SPEC CPU2006 Benchmark using PLECO and POPU predictors.}
\vspace*{1mm}
\begin{center}
\begin{small}
\begin{tabular}{ccccccccccc}
\toprule
\textbf{Predictor} & \textsc{LRU} & \textsc{B.O.} & \textsc{P.M.} & \textsc{LM.} & \textsc{LNonM.} & \textsc{B.O.\&M.$^{D}$} & \textsc{F\&R} & \textsc{Guard\&B.O.} \\
\midrule
\textbf{PLECO} & 1.478 & 1.404 & 1.335 & 1.335 & 1.346 & 1.294 & 1.360 & \textbf{1.226} \\
\textbf{POPU} & 1.478 & 1.261 & 1.312 & 1.320 & 1.312 & 1.233 & 1.319 & \textbf{1.203} \\
\bottomrule
\end{tabular}
\end{small}
\end{center}
\label{table:real_pleco_popu_spec_cr}
\end{table*}

%%%%%% 25.02.09: \sigma is not used, and unnecessary
% \textbf{LRU-normalized Cost Ratios and Figures.} To ensure that different data sets are clearly presented in the same figure, we refer to the concept of \textit{LRU-normalized empirical competitive ratio} mentioned in \cite{pmlr-v139-chledowski21a}, which is actually a kind of cost ratio and is abbreviated as \textit{LRU-normalized cost ratio} below. LRU-normalized cost ratio (LCR) is used to evaluate the algorithm ALG compared to LRU, defined by the following formula, where $\sigma$ denotes any given request sequence.
% \begin{equation}
% \text{LCR}(\text{ALG}) = \frac{\text{ALG} - \textsc{OPT}}{\text{LRU}-\textsc{OPT}}.\nonumber
% \end{equation}
\textbf{LRU-normalized Cost Ratios and Figures.} To ensure clear comparisons across different datasets, we adopt the LRU-normalized empirical competitive ratio from \citet{pmlr-v139-chledowski21a}, referred to as the LRU-normalized cost ratio (LCR) in this study. LCR evaluates an algorithm ALG in relation to LRU, and is defined as follows:
\begin{equation}
    \text{LCR}(\text{ALG}) = \frac{\text{ALG} - \textsc{OPT}}{\text{LRU}-\textsc{OPT}}. \nonumber
\end{equation}

The average LRU-normalized cost ratios across all 13 datasets are summarized in Table \ref{table:real_pleco_popu_spec}, while Figures \ref{fig:real_pleco_test} and \ref{fig:real_popu_test} provide a dataset-wise performance breakdown. These results further validate that \textsc{Guard\&B.O.} consistently outperforms other methods or remains among the best-performing algorithms.

\begin{table*}[h!]
\caption{Average LRU-normalized cost ratios on SPEC CPU2006 Benchmark using PLECO and POPU predictors.}
\vspace*{1mm}
\begin{center}
\begin{small}
\begin{tabular}{ccccccccccc}
\toprule
\textbf{Predictor} & \textsc{LRU} & \textsc{B.O.} & \textsc{P.M.} & \textsc{LM.} & \textsc{LNonM.} & \textsc{B.O.\&M.$^{D}$} & \textsc{F\&R} & \textsc{Guard\&B.O.} \\
\midrule
\textbf{PLECO} & 1 & 1.050 & 0.947 & 0.943 & 0.926 & 0.763 & 0.845 & \textbf{0.625} \\
\textbf{POPU} & 1  & 0.768 & 0.881 & 0.885 & 0.880 & 0.664 & 0.829 & \textbf{0.568} \\
\bottomrule
\end{tabular}
\end{small}
\end{center}
\label{table:real_pleco_popu_spec}
\end{table*}

\begin{figure*}[hbt!]
\centering
\realpleco
\begin{minipage}{1\textwidth}
\centering
%\vspace{-15.5cm}
\hspace{6cm}
\begin{tikzpicture}
    \matrix [draw=black, nodes={font=\footnotesize}]
    {
    \draw[marker_line] (-0.4, 0) -- (0.4, 0) {}; 
    \node at (1.2, 0) {\textsc{Marker}}; &

    \draw[ftp_line] (-0.4, 0) -- (0.4, 0) {}; 
    \node at (1, 0)  {\textsc{B.O.}}; &

    \draw[predmark_bar] (-0.4, 0.075) rectangle (0.4, -0.075);
    \node at (1, 0) {\textsc{P.M.}}; &
    
    \draw[lmark_bar] (-0.4, 0.075) rectangle (0.4, -0.075);
    \node at (1, 0)  {\textsc{LM.}}; &

    \draw[lnonmark_bar] (-0.4, 0.075) rectangle (0.4, -0.075);
    \node at (1.3, 0)  {\textsc{LNonM.}}; \\
    
    \draw[blindD_bar] (-0.4, 0.075) rectangle (0.4, -0.075);
    \node at (1.38, 0.05) {\textsc{B.O.\&M.$^{D}$}}; &

    \draw[blindR_bar] (-0.4, 0.075) rectangle (0.4, -0.075);
    \node at (1.44, 0.05) {\textsc{B.O.\&M.$^{R}$}}; &

    \draw[fr_bar] (-0.4, 0.075) rectangle (0.4, -0.075);
    \node at (1.02, 0)  {\textsc{F\&R}}; &

    \draw[guard0_bar] (-0.4, 0.075) rectangle (0.4, -0.075);
    \node at (1.65, 0) {\textsc{Guard\&B.O.}}; \\
    };
\end{tikzpicture}
\end{minipage}
\caption{LRU-normalized cost ratios on SPEC CPU2006 Benchmark using the PLECO predictor.}
\label{fig:real_pleco_test}
\end{figure*}

\begin{figure}[!h]
\centering
\realpopu
\begin{minipage}{1\textwidth}
\centering
%\vspace{-14.8cm}
\hspace{6cm}
\begin{tikzpicture}
    \matrix [draw=black, nodes={font=\footnotesize}]
    {
    \draw[marker_line] (-0.4, 0) -- (0.4, 0) {}; 
    \node at (1.2, 0) {\textsc{Marker}}; &

    \draw[ftp_line] (-0.4, 0) -- (0.4, 0) {}; 
    \node at (1, 0)  {\textsc{B.O.}}; &

    \draw[predmark_bar] (-0.4, 0.075) rectangle (0.4, -0.075);
    \node at (1, 0) {\textsc{P.M.}}; &
    
    \draw[lmark_bar] (-0.4, 0.075) rectangle (0.4, -0.075);
    \node at (1, 0)  {\textsc{LM.}}; &

    \draw[lnonmark_bar] (-0.4, 0.075) rectangle (0.4, -0.075);
    \node at (1.3, 0)  {\textsc{LNonM.}}; \\
    
    \draw[blindD_bar] (-0.4, 0.075) rectangle (0.4, -0.075);
    \node at (1.38, 0.05) {\textsc{B.O.\&M.$^{D}$}}; &

    \draw[blindR_bar] (-0.4, 0.075) rectangle (0.4, -0.075);
    \node at (1.44, 0.05) {\textsc{B.O.\&M.$^{R}$}}; &

    \draw[fr_bar] (-0.4, 0.075) rectangle (0.4, -0.075);
    \node at (1.02, 0)  {\textsc{F\&R}}; &

    \draw[guard0_bar] (-0.4, 0.075) rectangle (0.4, -0.075);
    \node at (1.65, 0) {\textsc{Guard\&B.O.}}; \\
    };
\end{tikzpicture}
\end{minipage}
%\vspace{-1.2cm}
\caption{{LRU-normalized cost ratios on SPEC CPU2006 Benchmark using the POPU predictor.}}
\label{fig:real_popu_test}
\end{figure}
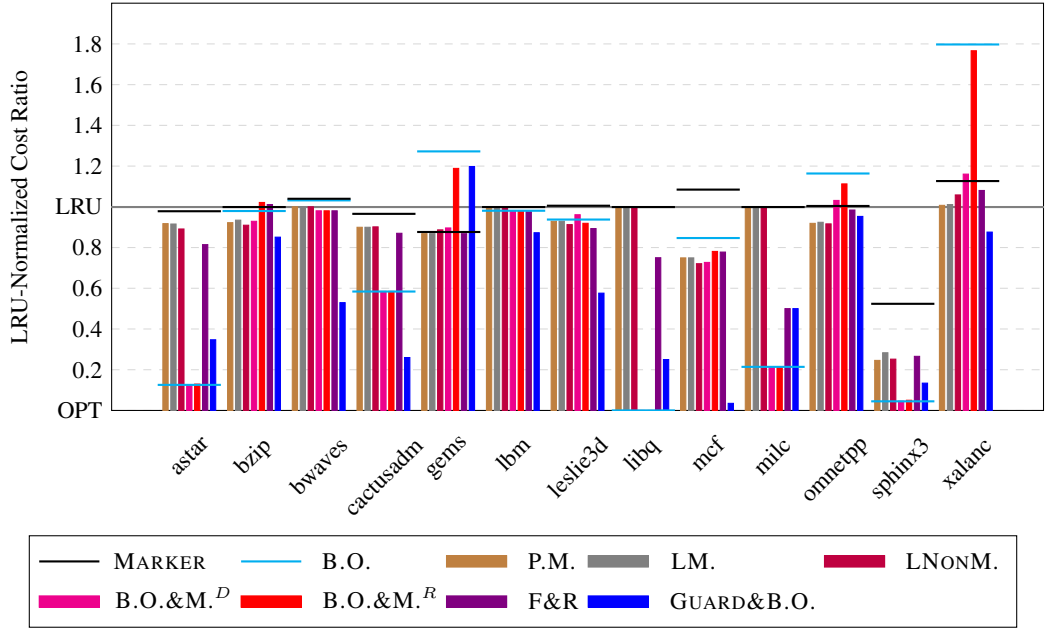

\subsection{Experiments on SPEC CPU2006 Benchmark Using LRB predictor}
In this section, we evaluate the performance of algorithms leveraging binary predictions of Belady’s binary labels on datasets from the SPEC CPU2006 benchmark, using LRB predictor. The algorithm \textsc{Mark\&Predict (M.\&P.)} is excluded from this comparison, as it employs a different interpretation of predicted binary labels. In addition, we implement \textsc{LRB\&Marker$^{D}$} (\textsc{LRB\&M.$^{D}$}) and \textsc{LRB\&Marker$^{R}$} (\textsc{LRB\&M.$^{R}$}), which combine \textsc{LRB} with \textsc{Marker} using deterministic and randomized switching, respectively, following the switching-based approach introduced by \citet{wei2020better}.

Following the methodology of \citet{song2020learning}, we extract features from the SPEC CPU2006 datasets, setting $|\texttt{Delta}_i|=10$ and $|\texttt{EDC}_i|=10$. In addition to the original PC and address features, a total of 22 features were used for training. The GBM model is configured with a learning rate of 0.01, a maximum depth of 6, and 31 leaves. Both the sub-sample rate and column sample rate are set to 0.8. The model employs L2-norm loss, and early stopping is applied at 8000 rounds to determine the optimal parameters.
% We extract features on datasets from SPEC CPU2006 benchmark with \(|\texttt{Delta}_i|=10\) and \(|\texttt{EDC}_i|=10\), following the method described in \cite{song2020learning}. In addition to the original PC and address features, a total of 22 features were used for training. For hyperparameters, we set the learning rate at 0.01 (\texttt{lr=0.01}), the maximum depth to 6 (\texttt{max\_depth=6}), the number of leaves at 31 (\texttt{num\_leaves=31}), and the sub-sample rate and column sample rate to 0.8 (\texttt{subsample=colsample\_bytree=0.8}). The L2-norm loss is employed, and early stopping is applied at 8000 rounds to determine the optimal model parameters."

\textbf{Average Cost Ratios.} Table \ref{table:real_gbm_spec_cr} presents the average cost ratios of all evaluated algorithms relative to OPT across all 13 datasets. The results show that \textsc{Guard\&LRB} outperforms other algorithms, including the base algorithm \textsc{LRB}.

\begin{table*}[h!]
\caption{Average cost ratios on SPEC CPU2006 Benchmark using LRB predictor.}
\vspace*{1mm}
\begin{center}
\begin{small}
\begin{tabular}{cccccccc}
\toprule
\textbf{Predictor} & \textsc{LRU} & \textsc{Marker} & \textsc{LRB} & \textsc{MARK0} & \textsc{LRB\&M.$^{D}$} & \textsc{LRB\&M.$^{R}$} & \textsc{Guard\&LRB} \\
\midrule
LRB predictor & 1.478 & 1.394 & 1.281 & 1.268 & 1.259 & 1.293 & \textbf{1.171} \\
\bottomrule
\end{tabular}
\end{small}
\end{center}
\label{table:real_gbm_spec_cr}
\end{table*}

\textbf{LRU-normalized Cost Ratios and Figure.} Table \ref{table:real_gbm_spec} and Figure \ref{fig:real_gbm_test} show the LCR comparison when using GBM.

\begin{table*}[h!]
\caption{Average LRU-normalized cost ratios on SPEC CPU2006 Benchmark using LRB predictor.}
\vspace*{1mm}
\begin{center}
\begin{small}
–\begin{tabular}{cccccccc}
\toprule
\textbf{Predictor} & \textsc{LRU} & \textsc{Marker} & \textsc{LRB} & \textsc{MARK0} & \textsc{LRB\&M.$^{D}$} & \textsc{LRB\&M.$^{R}$} & \textsc{Guard\&LRB} \\
\midrule
LRB predictor & 1 & 0.970 & 0.809 & 0.506 & 0.714 & 0.808 & \textbf{0.357} \\
\bottomrule
\end{tabular}
\end{small}
\end{center}
\label{table:real_gbm_spec}
\end{table*}

\begin{figure}[h!]
\centering
\realgbm
\begin{minipage}{1\textwidth}
\centering
%\vspace{-14.5cm}
\hspace{10cm}
\begin{tikzpicture}
    \matrix [draw=black, nodes={font=\small}]
    {
    \draw[marker_line] (-0.3, 0) -- (0.3, 0) {}; 
    \node at (1.2, 0) {\textsc{Marker}}; &
    \draw[fbp_line] (-0.3, 0) -- (0.3, 0)  {}; 
    \node at (0.85, 0)  {\textsc{LRB}}; &
    \draw[mark0_bar] (-0.3, 0.075) rectangle (0.3, -0.075);
    \node at (1.1, 0)  {\textsc{Mark0}}; &
    \draw[blindD_bar] (-0.3, 0.075) rectangle (0.3, -0.075);
    \node at (1.3, 0)  {\textsc{LRB\&M.$^{D}$}}; &
    \draw[blindR_bar] (-0.3, 0.075) rectangle (0.3, -0.075);
    \node at (1.38, 0) {\textsc{LRB\&M.$^{R}$}}; &
    \draw[guard0_bar] (-0.3, 0.075) rectangle (0.3, -0.075);
    \node at (1.5, 0) {\textsc{Guard\&LRB}}; \\
    };
\end{tikzpicture}
\end{minipage}
%\vspace{-1.2cm}
\caption{LRU-normalized cost ratios on SPEC CPU2006 Benchmark using LRB predictor.}
\label{fig:real_gbm_test}
\end{figure}
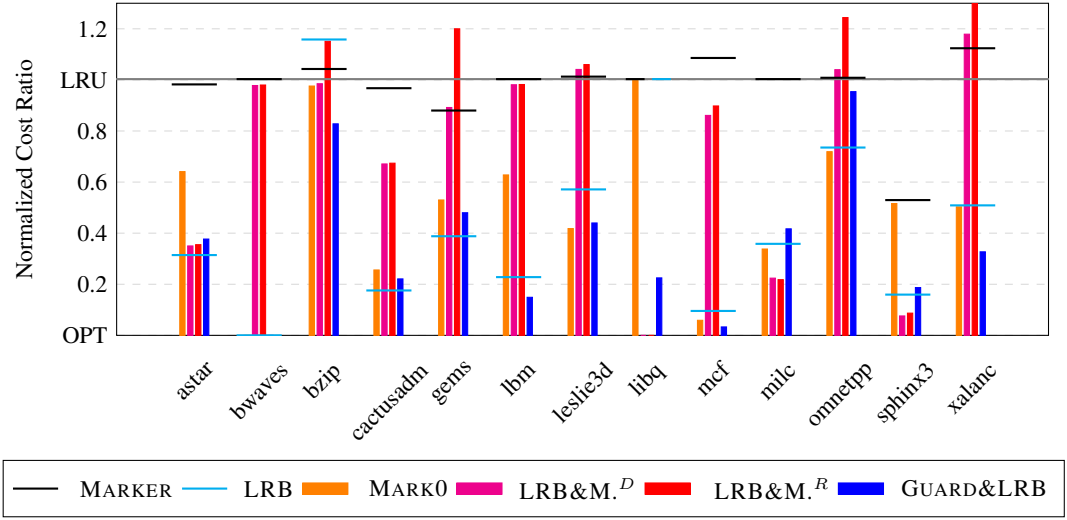

\subsection{Experiments on SPEC CPU2006 Benchmark Using Parrot}\label{s:exp_parrot}

\textbf{FitF page predictor.} We investigate Parrot, a neural network-based model proposed by \citet{liu2020imitation}, which utilizes an imitation learning approach to approximate the optimal policy. Built on LSTM and attention mechanisms, Parrot predicts the page with the highest eviction priority whenever an eviction is needed. In our experiments, Parrot is trained for 20,000 steps with a batch size of 32, without applying the Dagger algorithm \cite{ross2011reduction}.

\textbf{The Algorithm \textsc{Parrot}.} We propose \textsc{Parrot} (abbreviated as \textsc{Pa.}), which directly follows the prediction of the Parrot model and evicts the predicted FitF page. \textsc{Parrot} is also an RB-following algorithm that blindly follows predictions. It is clear that \textsc{Parrot} is 1-consistent, while \textsc{Guard\&Parrot} (abbreviated as \textsc{Guard\&Pa.}) is $(2H_{k-1} + 2)$-robust. Additionally, we implement \textsc{Parrot\&Marker$^{D}$} and \textsc{Parrot\&Marker$^{R}$}, which combine \textsc{Parrot} with \textsc{Marker} using deterministic and randomized switching, respectively, following the switching-based approach introduced by \citet{wei2020better}.

Interestingly, the best algorithms for the two metrics differ due to their different calculations. Nonetheless, \textsc{Guard\&Parrot} consistently achieves either the best or near-best performance.

\textbf{Average Cost Ratios.} Table \ref{table:real_parrot_cr} shows the average cost ratios of algorithms compared to OPT across all 13 datasets.

\begin{table*}[h!]
\caption{Average cost ratios on SPEC CPU2006 Benchmark using the FitF page predictor.}
\vspace*{1mm}
\begin{center}
\begin{small}
\begin{tabular}{ccccccccccc}
\toprule
\textbf{Predictor} & \textsc{LRU} & \textsc{Marker} & \textsc{Pa.} & \textsc{Pa.\&M.$^{D}$} & \textsc{Pa.\&M.$^{R}$} & \textsc{Guard\&Pa.} \\
\midrule
FitF page predictor & 1.478 & 1.394 & 1.363 & \textbf{1.315} & 1.348 & 1.338 \\
\bottomrule
\end{tabular}
\end{small}
\end{center}
\label{table:real_parrot_cr}
\end{table*}

\textbf{LRU-normalized Cost Ratios and Figure.} Table \ref{table:real_parrot_lru_cr} and Figure \ref{fig:real_parrot_test} show the LCR comparison.

\begin{table*}[h!]
\centering
\caption{Average LRU-normalized cost ratios on SPEC CPU2006 Benchmark using the FitF page predictor.}
\vspace*{1mm}
\begin{center}
\begin{small}
\begin{tabular}{ccccccc}
\toprule
\textbf{Predictor} & \textsc{LRU} & \textsc{Marker} & \textsc{Pa.} & \textsc{Pa.\&M.$^{D}$} & \textsc{Pa.\&M.$^{R}$} & \textsc{Guard\&Pa.} \\
\midrule
FitF page predictor & 1 & 0.970 & 1.006 & 0.851 & 0.947 & \textbf{0.816} \\
\bottomrule
\end{tabular}
\end{small}
\end{center}
\label{table:real_parrot_lru_cr}
\end{table*}

\begin{figure}[!h]
\centering
\realparrot
\begin{minipage}{1\textwidth}
\centering
%\vspace{-14.8cm}
\hspace{6cm}
\begin{tikzpicture}
    \matrix [draw=black, nodes={font=\footnotesize}]
    {
    \draw[marker_line] (-0.4, 0) -- (0.4, 0) {}; 
    \node at (1.2, 0) {\textsc{Marker}}; &

    \draw[ftp_line] (-0.4, 0) -- (0.4, 0) {}; 
    \node at (0.8, 0)  {\textsc{Pa.}}; &
    
    \draw[blindD_bar] (-0.4, 0.075) rectangle (0.4, -0.075);
    \node at (1.25, 0.05) {\textsc{Pa.\&M.$^{D}$}}; &

    \draw[blindR_bar] (-0.4, 0.075) rectangle (0.4, -0.075);
    \node at (1.25, 0.05) {\textsc{Pa.\&M.$^{R}$}}; &

    \draw[guard0_bar] (-0.4, 0.075) rectangle (0.4, -0.075);
    \node at (1.45, 0) {\textsc{Guard\&Pa.}}; \\
    };
\end{tikzpicture}
\end{minipage}
%\vspace{-1.2cm}
\caption{LRU-normalized cost ratios on SPEC CPU2006 Benchmark using the FitF page predictor.}
\label{fig:real_parrot_test}
\end{figure}
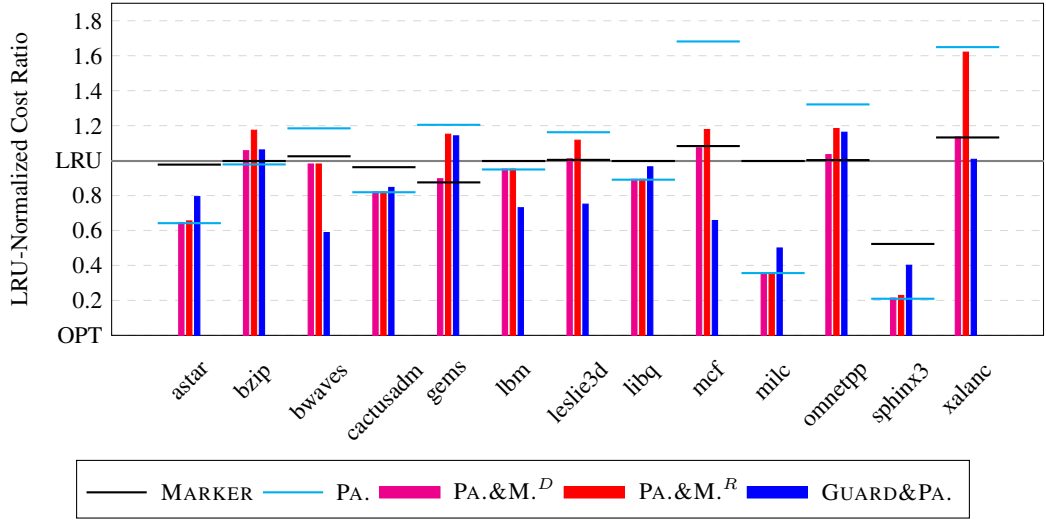

\section{The \textsc{ExGuard} Framework}\label{appendix_exguard_a}

\subsection{Motivation and Performance Overview}

\textbf{Observation.} Essentially, the bounded robustness of \textsc{Guard\&A} stems from page protection, i.e., the guard mechanism. Its asymptotic logarithmic robustness depends on random evictions when necessary. However, overly aggressive random evictions may lead to a long eviction chain initiated by an eviction based on incorrect predictions, thereby degrading smoothness. Conversely, relying solely on predictions compromises robustness and may be impractical due to frequent predictor queries.

This motivates limiting the number of random evictions (Line 13 in Algorithm \ref{algo_exguard_a_template}) on each eviction chain, as illustrated in Figure \ref{fig:exguard}. This strikes a balance between smoothness and the number of predictions used without compromising the asymptotic robustness. Table \ref{table:robustified_algorithms_ex} compares the applications of \textsc{ExGuard} with those of \textsc{Guard}.

\begin{table*}[h!]
\vspace*{-1mm}
\caption{Robustified RB-following algorithms. Here, \textsc{B.O.} stands for \textsc{BlindOracle}. \textsc{PA.} denotes \textsc{Parrot}. Cons., \#Pred. and Time. represent consistency, number of predictor calls, and time complexity, respectively.
For \textsc{ExGuard}, $d \in [1, H_k]$. For \textsc{ExGuard\&B.O.}, $\lambda = h/e^h \leq 1/e$, where $h = H_k/(2d)$ and $h\in [1/2,H_k/2]$.}
\vspace*{0mm}
\begin{center}
\begin{small}
\begin{tabular}{llllll}
\toprule
\textbf{Algorithm}  & \textbf{Cons.} & \textbf{Robustness} & \textbf{Smoothness} & \textbf{\#Pred.} & \textbf{Time.} \\
\midrule
\textsc{Guard\&B.O.} & $1$ & $2H_{k-1} + 2$ & $\mathcal{O}(\log (\eta_t/\textsc{OPT}))$ & $\mathcal{O}{(\textsc{OPT})}$ & $\mathcal{O}(n \log k)$ \\
\textsc{ExGuard\&B.O.}  & $1$ & $2H_{k-1} + \mathcal{O}(d)$ & $\mathcal{O}(\min(\log(\frac{\eta_t}{\textsc{OPT}}), \lambda \sqrt{\frac{\eta_t}{\textsc{OPT}}}))$ & $\mathcal{O}{(d\cdot\textsc{OPT})}$ & $\mathcal{O}(n\log k)$ \\
\midrule
\textsc{Guard\&LRB} & $1$ & $2H_{k-1} + 2$ & $\mathcal{O}(H_k\cdot \eta_b/\textsc{OPT})$ & $\mathcal{O}{(\textsc{OPT})}$ & $\mathcal{O}(n)$ \\
\textsc{ExGuard\&LRB}     & $1$ & $2H_{k-1} + \mathcal{O}(d)$ & $\mathcal{O}(H_k/d\cdot \eta_b/\textsc{OPT})$ & $\mathcal{O}{(d \cdot \textsc{OPT})}$ & $\mathcal{O}(n)$ \\
\midrule
\textsc{Guard\&Parrot} & $1$ & $2H_{k-1} + 2$ & $\mathcal{O}(H_k\cdot \eta_f/\textsc{OPT})$ & $\mathcal{O}{(\textsc{OPT})}$ & $\mathcal{O}(n)$ \\
\textsc{ExGuard\&Pa.}  & $1$ & $2H_{k-1} + \mathcal{O}(d)$ & $\mathcal{O}(H_k/d \cdot \eta_f/\textsc{OPT})$ & $\mathcal{O}{(d \cdot \textsc{OPT})}$ & $\mathcal{O}(n)$ \\
\bottomrule
\end{tabular}
\end{small}
\end{center}
\label{table:robustified_algorithms_ex}
\end{table*}

\begin{figure}[H]
    \centering
    \includegraphics[width=0.74\linewidth]{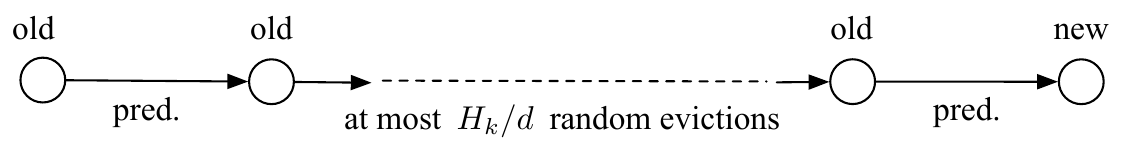}
    \caption{Random eviction budget along the eviction chain}
    \label{fig:exguard}
\end{figure}

\begin{algorithm}[!h]
\caption{\textsc{ExGuard\&A}}
\label{algo_exguard_a_template}
\begin{algorithmic}[1]
    \STATE $\mathcal{U} \leftarrow \emptyset$ (the set tracks \textit{unrequested old pages} in the cache)
    \FOR{$i = 1, ..., n$}
        \STATE Receive a page request $r_i = (t_i, p_i)$ at time $t_i$
        \IF {\textit{$p_i$ is not in the cache}}
            \IF {\textit{the cache is full}}
                \IF {$\mathcal{U}$ is empty}
                    \STATE \emph{Unguard all cached pages}
                    \STATE $\mathcal{U} \leftarrow \{ \text{all cached pages} \}$ \emph{(a new phase begins)}
                \ENDIF
                \STATE $\mathcal{E} \leftarrow$ the corresponding eviction chain of $p_i$
                \STATE $B(\mathcal{E}) \leftarrow$ the random eviction budget of $\mathcal{E}$
                \IF {\textit{$p_i$ was evicted in the current phase}}
                    \IF {\textit{$B(\mathcal{E}) \leq 0$}}
                        \STATE Call function \texttt{Evict($\mathcal{U})$}
                        %\STATE Follow A's policy to evict a page $x$ from $\mathcal{U}$ based on predictions
                        \STATE $B(\mathcal{E}) = H_k/d$
                    \ELSE
                        \STATE Evict a page $x$ from $\mathcal{U}$ uniformly at random
                        \STATE $B(\mathcal{E}) \leftarrow B(\mathcal{E}) - 1$
                    \ENDIF
                    \STATE \emph{Guard $p_i$}
                \ELSE
                    \STATE $\mathcal{S} \leftarrow \{ \text{all unguarded cached pages} \} $
                    \STATE Call function \texttt{Evict($\mathcal{S}$)}
                    %\STATE Follow A's policy to evict a page $x$ from $\mathcal{S}$ based on predictions
                    \STATE $B(\mathcal{E}) = H_k/d$
                \ENDIF
                \IF {the evicted page $x \in \mathcal{U}$}
                    \STATE $\mathcal{U} \leftarrow \mathcal{U} \backslash \{ x \}$
                \ENDIF
            \ENDIF
            \STATE Load $p_i$ into the cache
        \ENDIF
        \IF {$p_i \in \mathcal{U}$}
            \STATE $\mathcal{U} \leftarrow \mathcal{U} \backslash \{ p_i \}$
        \ENDIF
    \ENDFOR
\end{algorithmic}
\end{algorithm}

\subsection{Algorithm Descriptions}
Algorithm \ref{algo_exguard_a_template} describes the \textsc{ExGuard} framework. Algorithm \ref{algo_ga_bo}, \ref{algo_ga_lrb}, and \ref{algo_ga_parrot} can also be embedded into Algorithm \ref{algo_exguard_a_template} to generate the corresponding algorithms: \textsc{ExGuard\&BlindOracle}, \textsc{ExGuard\&LRB}, and \textsc{ExGuard\&Parrot}.

\begin{theorem} \label{theorem_appendix_exga_a_cons_rob}
    \textsc{ExGuard\&A} is $1$-consistent and $2H_{k-1} + \mathcal{O}(d)$-robust.
\end{theorem}
\begin{proof}
    We focus on the modification introduced by \textsc{ExGuard} over \textsc{Guard}.
    
    Since \textsc{ExGuard\&A} retains the guard mechanism, it remains $1$-consistency, as established by Proposition \ref{prop_no_guard} and Corollary \ref{corollary_rb_following_1_consistency}.

    From the analysis in Appendix \ref{appendix_theorem_guard_rob_tight}, the expected length of an old-page eviction chain is at most $\mathcal{O}(H_k)$. \textsc{ExGuard} introduces at most $\mathcal{O}(d)$ additional prediction-driven evictions along each chain, resulting in $(2H_{k-1} + \mathcal{O}(d))$-robustness.
\end{proof}

\subsection{Predictor Usage Proof for \textsc{ExGuard}}

\begin{theorem}
    \textsc{Guard\&A} and \textsc{ExGuard\&A} calls the predictor $\mathcal{O}(\textsc{OPT})$ and $\mathcal{O}(d \cdot \textsc{OPT})$ times, respectively, where $\textsc{A}\in\{\textsc{BlindOracle}, \textsc{LRB}, \textsc{Parrot}\}$.
\end{theorem}
\begin{proof}
    Recall that the predictor is called only when an eviction is required, consistent with common implementations in real systems. Let $\textsc{A}\in\{\textsc{BlindOracle}, \textsc{LRB}, \textsc{Parrot}\}$ in the following.

    For \textsc{Guard\&A}, a prediction-driven eviction occurs only when serving a request for a new page. Thus, the number of predictor calls in phase $q$ is $n_q$. By Lemma \ref{lemma_guard_new_page_requests} and \ref{lemma_opt_cost_bound}, we bound the total number of predictor calls as: 
    \begin{align} \label{eq_ga_a_pred_num}
        \sum\limits_{q\in\mathcal{Q}}n_q \leq \sum\limits_{q\in\mathcal{Q}}2c_q \leq 4\textsc{OPT} = \mathcal{O}(\textsc{OPT}).
    \end{align}

    \textsc{ExGuard\&A} introduces additional $o_q / (H_k/d)$ calls for the predictor. According to Theorem \ref{theorem_appendix_exga_a_cons_rob}, the expected total cost of \textsc{ExGuard\&A} satisfies $\sum\limits_{q\in\mathcal{Q}}(n_q+o_q) \leq \textsc{OPT} \cdot \mathcal{O}(H_k)$. Therefore, the total number of predictor calls used by \textsc{ExGuard\&A} is bounded by:
    \begin{align} \label{eq_exga_a_pred_num}
        \sum\limits_{q\in\mathcal{Q}}(n_q+d\cdot o_q/H_k) \leq \mathcal{O}(d\cdot \textsc{OPT}).
    \end{align}
    \eqref{eq_ga_a_pred_num} and \eqref{eq_exga_a_pred_num} complete the proof.
\end{proof}

\subsection{Smoothness Proof for \textsc{ExGuard\&BlindOracle}}

\begin{theorem} \label{theorem_appendix_exga_bo_smoothness}
    \textit{\textsc{ExGuard\&BlindOracle} is $\mathcal{O}\big(\min(\log(\eta_t/\textsc{OPT}), h/e^h \cdot \sqrt{\eta_t/\textsc{OPT}})\big)$-smooth, where $h=H_k/(2d)$.}
\end{theorem}
\begin{proof}
    For an old-page eviction chain $\mathcal{E}_q$ in phase $q$ with $x$ prediction-driven evictions, the expected length of it is $|\mathcal{E}_q| \leq x\cdot H_k/d$. When there are $z$ unrequested old pages in the cache, the expected number of subsequent random evictions satisfies $L(z) \leq H_z \leq \ln(k) + 1$. (Refer to Appendix \ref{appendix_theorem_guard_rob_tight} for $L(z)$.) Thus, if a prediction-based eviction results in $y$ subsequent random evictions, it implies at least $\mathcal{O}(e^{y})$ inversions in expectation, arising from all pairs between the prediction-driven evicted page and other unrequested old pages.

    From the above, if $\mathcal{E}_q$ contributes $I(\mathcal{E}_q)$ inversions, then $I(\mathcal{E}) \geq \mathcal{O}(x^2 \cdot e^{H_k/d})$ since there are $x$ prediction-driven evictions. Therefore, $|\mathcal{E}_q| \leq x \cdot H_k/d \leq \mathcal{O}\left(\frac{H_k}{d} \cdot \sqrt{\frac{I(\mathcal{E}_q)}{e^{H_k/d}}}\right)$.

    The total number of inversions satisfies $\text{inv}(\theta, \hat{\theta}) \geq \sum\limits_{q\in\mathcal{Q}}\sum\limits_{\mathcal{E}_q}I(\mathcal{E}_q)$. Let $N$ denote the total number of old-page eviction chains, where $\textsc{OPT} \leq N \leq 2\textsc{OPT}$, as shown in Equation \eqref{eq_appendix_opt_N_relation} of Appendix \ref{appendix_sec_smo_proof_gbo}.
    
    We then bound the total cost of \textsc{ExGuard\&BlinOracle}, denoted as $\mathbb{E}[\textsc{ExG\&B}]$, by
    \begin{align}
        \mathbb{E}[\textsc{ExG\&B}] &\leq \sum\limits_{q \in \mathcal{Q}}c_q + \sum\limits_{q\in\mathcal{Q}}\sum\limits_{\mathcal{E}_q}\mathcal{O}\big(H_k/d \cdot \sqrt{I(\mathcal{E}_q) / e^{H_k/d}}\big) \quad \text{(by Lemma J.1)} \nonumber \\
        &\leq \sum\limits_{q\in\mathcal{Q}}c_q + N \cdot \mathcal{O}\Big(H_k/(de^{H_k/(2d)}) \cdot \sqrt{\frac{\text{inv}(\theta, \hat{\theta})}{N}} \Big) \quad \text{(by Jensen's inequality)} \nonumber \\
        &= \mathcal{O}\Big(\textsc{OPT} \cdot h/e^h \cdot \sqrt{\frac{\eta_t}{\textsc{OPT}}}\Big), \quad \text{(by Lemma \ref{lemma_rohatgi_inv})}
    \end{align}
    
    where $h=H_k/(2d)$. \textsc{ExG.\&B.O.} inherits the $\mathcal{O}(\log(\eta_t/\textsc{OPT}))$ smoothness of \textsc{Guard\&B.O.}, adding at most $\mathcal{O}(d)$ evictions per old-page chain, where $1 \leq d \leq H_k$. Combining the two bounds above completes the proof.
\end{proof}

\subsection{Smoothness Proof for \textsc{ExGuard\&LRB}}

\begin{theorem}
    \textsc{ExGuard\&LRB} is $\mathcal{O}(H_k/d \cdot \eta_b/\textsc{OPT})$-smooth.
\end{theorem}
\begin{proof}
    \textsc{ExGuard\&LRB} limits the number of subsequent random evictions to $H_k/d$ after a prediction-based eviction. According to the smoothness analysis of \textsc{Guard\&LRB} (Theorem \ref{theorem_appendix_exga_lrb_smoothness}), and given a prediction error of $\eta_b$, the expected total cost of \textsc{ExGuard\&LRB} is at most $\mathcal{O}(H_k/d \cdot \eta_b)$, thereby completing the proof.
\end{proof}

\subsection{Smoothness Proof for \textsc{ExGuard\&Parrot}}

\begin{theorem}
    \textsc{ExGuard\&Parrot} is $\mathcal{O}((H_k/d \cdot \eta_f/\textsc{OPT})$-smooth.
\end{theorem}
\begin{proof}
    Similarly, \textsc{ExGuard\&Parrot} limits the number of subsequent random evictions to $H_k/d$ after a prediction-based eviction. Based on the smoothness analysis of \textsc{Guard\&Parrot} (Theorem \ref{theorem_appendix_exga_parrot_smoothness}), we can bound the expected total cost of \textsc{ExGuard\&Parrot}, denoted $\mathbb{E}[\textsc{ExG\&P}]$, as:
    \begin{align}
        \mathbb{E}[\textsc{ExG\&P}] \leq \mathcal{O}(H_k/d \cdot \eta_f),
    \end{align}
    which completes the proof.
\end{proof}

\subsection{Trade-offs Between Smoothness and Predictor Usage in ExGuard}

\begin{table*}[h!]
    \vspace{-6pt}
    \caption{\textbf{Comparison of Smoothness and Predictor Usage}.}
    \label{table_compare_algo_ex_guard}
    \begin{center}
    \begin{small}
    \begin{tabular}{llllll}
    \toprule
    % \textbf{Algorithms} & \textbf{Cons.} & \textbf{Robustness}  & \textbf{Smoothness} & \textbf{Comb.} & \textbf{Time.} \\
    \textbf{NRT} & \textbf{Smoothness} & \textbf{\#Pred.} \\
    \midrule
    \textsc{P.M.}  & $\mathcal{O}(\sqrt{\varphi_t})$ & $\mathcal{O}(H_k \cdot \textsc{OPT})$  \\
    \textsc{LM.}   & $\mathcal{O}(\log(\varphi_t))$ & $\mathcal{O}(\textsc{OPT})$ \\
    \textsc{LNonM.} & $\mathcal{O}(\log(k)/k \cdot \varphi_t)$ & \textit{unbounded} \\
    \textsc{B.O.}   & $\mathcal{O}(1/k \cdot \varphi_t)$ & \textit{unbounded} \\
    \textsc{B.O.\&LRU}  & $\mathcal{O}(1/k \cdot \varphi_t)$ & $\mathcal{O}(k \cdot \textsc{OPT})$   \\
    \textsc{B.O.\&M.$^D$}  & $\mathcal{O}(1/k \cdot \varphi_t)$ & $\mathcal{O}(H_k \cdot \textsc{OPT})$ \\
    \textsc{B.O.\&M.$^R$}  & $\mathcal{O}(1/k \cdot \varphi_t)$ & $\mathcal{O}(H_k \cdot \textsc{OPT})$  \\
    \textsc{B.O.\&EQ.$^R$}  & $\mathcal{O}(1/k \cdot \varphi_t)$ & $\mathcal{O}(H_k \cdot \textsc{OPT})$ \\
    \textsc{G.\&B.O.}   & $\mathcal{O}(\log(\varphi_t))$ & $\mathcal{O}(\textsc{OPT})$ \\
    \textcolor{blue}{\textsc{ExG.\&B.O.}}  & \textcolor{blue}{$\mathcal{O}(\min(\log(\varphi_t), \lambda\sqrt{\varphi_t}))$} & \textcolor{blue}{$\mathcal{O}(d\cdot \textsc{OPT})$} \\
    \midrule
    \textbf{Binary} & \textbf{Smoothness} & \textbf{\#Pred.} \\
    \midrule
    \textsc{M.P.}  & $\mathcal{O}(H_k \cdot \varphi_b)$ & $\mathcal{O}(H_k \cdot \textsc{OPT}) $ \\
    \textsc{Mark0}   & $\mathcal{O}(H_k \cdot \varphi_b)$ & \textit{unbounded} \\
    \textsc{G.\&LRB}     & $\mathcal{O}(H_k \cdot \varphi_b)$ & $\mathcal{O}(\textsc{OPT})$ \\
    \textcolor{blue}{\textsc{ExG.\&LRB}}  & \textcolor{blue}{$\mathcal{O}(H_k/d\cdot \varphi_b)$} & \textcolor{blue}{$\mathcal{O}(d \cdot \textsc{OPT})$} \\
    %\midrule
    %\textbf{Cache Predictions} \\
    %\textsc{F\&R}  \cite{sadek2024algorithms}     & $\mathcal{O}(n^2)$ & $1$ & $c\cdot\log k + \mathcal{O}(1), c\geq 3 $ & $\mathcal{O}(\log (1 + \eta_c/\textsc{OPT}))$ \\
    \midrule
    \textbf{Action}  & \textbf{Smoothness} & \textbf{\#Pred.} \\
    \midrule
    \textsc{F\&R} ($f(i)=2^i-1$)   & $\mathcal{O}(\log (\varphi_a))$ & $\mathcal{O}(\sqrt{k}\cdot \textsc{OPT}) $ \\
    \textsc{F\&R}     & $\mathcal{O}(f^{-1}(\varphi_a))$ & $\mathcal{O}(f(\log k) \cdot \textsc{OPT})$  \\
    \midrule
    \textbf{FitF}  & \textbf{Smoothness} & \textbf{\#Pred.} \\
    \midrule
    \textsc{F\&R (FitF)}  & $\mathcal{O}(\log(k)/b \cdot \varphi_f)$ & $\mathcal{O}(b \cdot \textsc{OPT})$ \\
    \textsc{G.\&Pa.}   & $\mathcal{O}(H_k \cdot \varphi_t)$ & $\mathcal{O}(\textsc{OPT})$ \\
    \textcolor{blue}{\textsc{ExG.\&Pa.}}  & \textcolor{blue}{$\mathcal{O}(H_k/d \cdot \varphi_f)$} & \textcolor{blue}{$\mathcal{O}(d \cdot \textsc{OPT})$} \\
    \bottomrule
    \end{tabular}
    \end{small}
    \end{center}
    \vspace{-2pt}
\end{table*}

\textbf{Notations in Table \ref{table_compare_algo_ex_guard}.} Abbreviations include \textsc{BlindOracle} (B.O.), \textsc{PredictiveMarker} (P.M.), \textsc{LNonMarker} (\textsc{LNonM.}), \textsc{LMarker} (\textsc{LM.}), \textsc{BlindOracle\&Marker} (\textsc{B.O.\&M.}), \textsc{Equitable} (\textsc{EQ.}), \textsc{Mark\&Predict} (\textsc{M.\&P.}), \textsc{Parrot} (\textsc{Pa.}), \textsc{Guard} (\textsc{G.}), and \textsc{ExGuard} (\textsc{ExG.}).

\#Pred. represents the upper bound of the number of predictor calls. \textsc{F\&R}'s smoothness depends on $f(i)$, which limits predictor calls. Our experiments focus on the \textsc{F\&R} variant with optimal smoothness ($f(i)=2^i-1$). For clarity, we denote $\eta_t/\textsc{OPT}$ as $\varphi_t$, $\eta_b/\textsc{OPT}$ as $\varphi_b$, $\eta_c/\textsc{OPT}$ as $\varphi_a$, and $\eta_f/\textsc{OPT}$ as $\varphi_f$, where $\eta_t, \eta_b, \eta_c, \eta_f$ represent different types of prediction errors.

For \textsc{ExGuard}, parameter $d \in [1, H_k]$. The smoothness parameter $\lambda$ of \textsc{ExG.\&B.O.} satisfies $\lambda = h / e^{h} \le 1/e$ (see Theorem~\ref{theorem_appendix_exga_bo_smoothness}), where $h = H_k / (2d)$. For \textsc{F\&R (FitF)}, parameter $b \in \{1, ..., \log k\}$.

%The parameter $\lambda$ in \textsc{ExG.\&B.O.}'s smoothness can be refined to $\lambda = h/e^h \leq 1/e$ (see \hyperlink{mylabel}{Theorem S.1}), where $h = H_k / (2d)$. 

\textbf{Analysis of Predictor Usage.} \textsc{B.O.} and \textsc{Mark0.} suffer from unbounded robustness and invoke the predictor on each cache miss, resulting in unbounded predictor usage. The predictor usage of \textsc{F\&R} and \textsc{F\&R (FitF)} has been analyzed in the corresponding paper.

\textsc{P.M.} invokes the predictor $\mathcal{O}(H_k \cdot \textsc{OPT})$-times, as it evicts pages based on predictions whenever the corresponding eviction chain has length at most $H_k$. In contrast, \textsc{LM.} relies on predictions only once at the beginning of each eviction chain, resulting in $\mathcal{O}(\textsc{OPT})$ predictor queries. The predictor usage of \textsc{LNonM.} is unbounded, as it invokes the predictor once per eviction chain, and the number of eviction chains depends on the total prediction error.

\textsc{M.\&P.} invokes the predictor on each cache miss, so its predictor usage bound is the product of $\mathcal{O}(\textsc{OPT})$ and its robustness. Similarly, the predictor usage of \textsc{B.O.\&A} is bounded by the product of $\mathcal{O}(\textsc{OPT})$ competitive ratio of the classical algorithm \textsc{A}, where $\textsc{A} \in \{\textsc{LRU}, \textsc{Marker}, \textsc{EQ.}\}$.

\textbf{Favorable Tradeoffs Achieved by \textsc{ExGuard}.} For algorithms relying on NRT predictions, \textsc{ExG.\&B.O.} achieves better tradeoffs than \textsc{P.M.} and \textsc{LM.} as their tradeoffs are theoretically directly comparable. To compare with other algorithms such as \textsc{B.O.\&LRU}, we conduct experiments, and the results indicate that \textsc{ExG.\&B.O.} achieves the best empirical tradeoff.

Moreover, both \textsc{ExG.\&B.O.} and \textsc{G.\&B.O.} exhibit good smoothness, even with $\mathcal{O}(\textsc{OPT})$ predictor usage. This enhances practicality, especially when predictor calls are costly.

Among algorithms that rely on binary predictions or the FitF predictor, both \textsc{ExG.\&LRB} and \textsc{ExG.\&Pa.} achieve state-of-the-art tradeoffs.

\subsection{Experimental Results for \textsc{ExGuard}}

We empirically compare \textsc{ExGuard\&B.O.} with other algorithms, with $d$ set to $H_k$.

\begin{figure*}[hbt!]\centering
\begin{minipage}[b]{0.47\textwidth}
\centering
% \oraclelogreuselegend{brightkite}{1.291}{1.315}{1.35}
\oraclelogreuselegendexguard{brightkite}{1.291}{1.315}{1.5}{\mydirexguard}
\vspace*{-6mm}
\caption{Performance with synthetic predictions of NRT on BrightKite.}
\label{fig:bk_oracle_logdis_ex_guard}
\end{minipage}
%\hfill
\hspace{3.5mm}
\begin{minipage}[b]{0.47\textwidth}
\centering
\oraclelogreuselegendpredcalls{brightkite}{1.291}{1.315}{1.5}{\mypredcallsdir}
\vspace*{-6mm}
\caption{Number of predictor calls with synthetic predictions of NRT on BrightKite.}
\label{fig:bk_oracle_logdis_ex_guard_pred_calls}
\end{minipage}
% \caption{Different Synthetic predictors on BrightKite dataset.}
% \label{fig:bk_oracle}
\end{figure*}

\begin{table*}[h!]
\vspace*{-2mm}
\caption{Average cost ratios ($\overline{\text{CR}}$), average LRU-normalized cost ratios ($\overline{\text{LCR}}$), and average number of predictor calls ($\overline{\text{\#Pred.}}$) on SPEC CPU2006 Benchmark using PLECO predictor.}
\vspace*{-0mm}
\begin{center}
\begin{small}
\begin{tabular}{ccccccccccc}
\toprule
\textbf{Metrics} & \textsc{LRU} & \textsc{B.O.} & \textsc{P.M.} & \textsc{LM.} & \textsc{LNonM.} & \textsc{B.O.\&M.$^{D}$} & \textsc{G.\&B.O.} & \textsc{ExG.\&B.O.} \\
\midrule
\textbf{$\overline{\text{CR}}$} & 1.478 & 1.404 & 1.335 & 1.335 & 1.346 & 1.294 & 1.226 & \textbf{1.225} \\
\textbf{$\overline{\text{LCR}}$} & 1 & 1.049 & 0.946 & 0.942 & 0.951 & 0.766 & 0.627 & \textbf{0.623} \\
\textbf{$\overline{\text{\#Pred.}}$} & 0 & 79488 & 75870 & 72682 & 72767 & 79488 & \textbf{62099} & 62553 \\
\bottomrule
\end{tabular}
\end{small}
\end{center}
\vspace*{-0mm}
\label{table:real_pleco_popu_spec_cr_ex_guard}
\end{table*}

\textbf{Improved Performance.} Figure \ref{fig:bk_oracle_logdis_ex_guard}, \ref{fig:bk_oracle_logdis_ex_guard_pred_calls} and Table \ref{table:real_pleco_popu_spec_cr_ex_guard} show that \textsc{ExGuard\&B.O.} outperforms \textsc{Guard\&B.O.} through increased predictor usage, empirically validating its improved smoothness. It is noteworthy that \textsc{G.\&B.O.} and \textsc{ExG.\&B.O.} depend less on the predictor than other algorithms utilizing next request time (NRT) predictions. 

In the above, \textsc{B.O.\&M.$^{R}$} is excluded due to underperforming relative to \textsc{B.O.\&M.$^{D}$}. \textsc{F\&R} is excluded due to its reliance on action prediction. Although action predictions can be generated using (NRT), doing so requires NRT predictions for all requested pages. In contrast, other algorithms only predict the NRT of cached pages at eviction (predictions can be reused if not requested). Thus, their predictor call counts are not comparable.

\section{Other Caching Models} \label{appendix_new_caching_model}
Beyond the classical caching model, where item sizes and cache-miss costs are uniform, researchers have also investigated more general variants such as weighted caching, in which different pages incur different loading costs, both without machine learning~\citep{bansal2012primal} and with learning-augmented approaches~\citep{jiang2022online, bansal2022learning}. However, research on learning-augmented algorithms for general caching settings, such as those with variable-sized items, remains limited and deserves further study.

In addition to accounting for cache-miss costs and item sizes, the conventional caching model inherently assumes that every requested item is stored upon access. This constraint is prevalent in classical systems. (1) In operating systems, virtual memory management depends on the guarantee that once a page is requested, it must be loaded into physical memory for execution to continue; otherwise, data errors, crashes, or inconsistent behavior may occur~\cite{anderson_osppv3}. (2) In databases, this constraint ensures data consistency and transaction integrity in relational systems~\cite{gray1992transaction}. If a page is not retained in memory after being accessed until the transaction completes, it may result in partial updates or lost modifications, thereby violating the transaction’s atomicity.

In contrast, the constraint can be relaxed in several modern settings. (1) In web browsers, caching is more flexible, as algorithms such as TinyLFU~\cite{einziger2017tinylfu} explicitly allow the system to skip caching certain requested items. (2) In content delivery networks, caching decisions are governed by parameters like TTL, content popularity, and regional demand~\cite{fan2021pa}, meaning that not all requested content is cached immediately at the edge. In summary, whether this constraint can be relaxed depends on scenario-specific requirements. Future research could explore learning-augmented algorithms that operate without this constraint and investigate how cache admission choices impact robustness.

\section{Experimental Setup} \label{appendix_experimental_requirements}
Most of our experiments rely solely on a standard computer equipped with a CPU and RAM. Evaluating algorithms that use a neural network-based predictor, such as \textsc{Parrot}, requires a GPU.

\section{Broader Impacts} \label{appendix_broader_impacts}
This paper presents the design, analysis, and evaluation of learning-augmented algorithms for the caching (paging) problem. Our work has a potential positive societal impact by improving the efficiency and performance guarantees of caching systems.

\end{document}